\newcommand{\R}{\mathbb{R}}
\newcommand{\F}{\mathbb{F}}
\newcommand{\Sh}{\text{Sh}}
\newcommand{\eps}{\epsilon}
\newcommand{\free}{\ast}
\newcommand\scalemath[2]{\scalebox{#1}{\mbox{\ensuremath{\displaystyle #2}}}}
\newcommand{\dnoise}{d_{\mathrm{noise}}}
\newcommand{\ring}{\mathcal{R}}
\newcommand{\ignore}[1]{}
\newtheorem{theorem}{Theorem}
\newtheorem{lemma}[theorem]{Lemma}
\newtheorem{corollary}[theorem]{Corollary}
\DeclareMathOperator\coker{coker}
\title{Computing the interleaving distance is NP-hard}
\author{H{\aa}vard B. Bjerkevik, Magnus B. Botnan, Michael Kerber}
\date{\today}
\begin{document}
\maketitle

\begin{abstract}
We show that computing the interleaving distance between two multi-graded
persistence modules is NP-hard. More precisely, we show
that deciding whether two modules are $1$-interleaved is NP-complete,
already for bigraded, interval decomposable modules.
Our proof is based on previous work showing 
that a constrained matrix invertibility problem can be reduced to the
interleaving distance computation of a special type of persistence modules.
We show that this matrix invertibility problem is NP-complete.
We also give a slight improvement of the above reduction, showing that
also the approximation of the interleaving distance is NP-hard
for any approximation factor smaller than $3$. Additionally, we obtain corresponding hardness results for the case that the modules are indecomposable, 
and in the setting of one-sided stability. Furthermore, we show that checking for injections (resp. surjections) between persistence modules is NP-hard. In conjunction with earlier results from computational algebra this gives a complete characterization of the computational complexity of one-sided stability. Lastly, we show that it is in general NP-hard to approximate distances induced by noise systems within a factor of 2. 
\end{abstract}

\section{Introduction}
\paragraph{Motivation and problem statement.}
A \emph{persistence module} $M$ over $\R^d$ is a collection of vector spaces
$\{M_p\}_{p\in\R^d}$ and linear maps $M_{p\to q}\colon M_p\to M_q$
whenever $p\leq q$, with the property that
$M_{p\to p}$ is the identity map 
and the linear maps are composable in the obvious
way. For $d=1$, we will talk about \emph{single-parameter persistence},
and for $d\geq 2$, we will use the term
\emph{multi-parameter persistence}.

Persistence, particularly in its single-parameter version, 
has recently gained a lot of attention
in applied fields, because one of its instantiations is
\emph{persistent homology}, which studies 
the evolution of homology groups when varying a real scale parameter.
The observation that topological features
in real data sets carry important information
to analyze and reason about the contained data
has given rise to the term \emph{topological data analysis (TDA)}
for this research field, with various connections
to application areas, e.g.~\cite{rhbk-stable,cgos-persistence,bmmps-persistent,pewvkjw-topology,rybakken2017decoding}.

A recurring task in TDA is the comparison of two persistence modules.
The natural notion in terms of algebra is by interleavings
of two persistence modules: given two persistence modules 
$M$ and $N$ as above and some $\eps>0$, 
an \emph{$\eps$-interleaving} is the assignment of
maps $\phi_p\colon M_p\to N_{p+\eps}$ and $\psi_p\colon N_p\to M_{p+\eps}$
which commute with each other and the internal maps
of $M$ and $N$. The \emph{interleaving} distance
is then just the infimum over all $\eps$ for which an interleaving exists.

A desirable property for any distance on persistence modules is 
\emph{stability},
meaning informally that a small change in the input data set should only
lead to a small distortion of the distance. 
At the same time, we aim for a \emph{sensitive} measure, meaning
that the distance between modules should be generally as large as possible
without violating stability. As an extreme example, the distance measure
that assigns $0$ to all pairs of modules is maximally stable,
but also maximally insensitive. Lesnick~\cite{lesnick-theory} 
proved that among all
stable distances for single- or multi-parameter persistence,
the interleaving distance is the most sensitive one over prime fields.
This makes the interleaving distance an interesting measure to be 
used in applications, and raises the question of how costly
it is to compute the distance~\cite[Sec.~1.3 and 7]{lesnick-theory}.
Of course, for the sake of computation,
a suitable finiteness condition must be 
imposed on the modules to ensure that they can be represented in finite
form; we postpone the discussion to Section~\ref{sec:background},
and simply call such modules \emph{of finite type}.

The complexity of computing the interleaving distance is well understood
for the single parameter case. The \emph{isometry theorem}~\cite{chazal2016structure, lesnick-theory} states the equivalence of the interleaving distance and the \emph{bottleneck
distance}, which is defined in terms of the \emph{persistence diagrams}
of the persistence modules
and can be reduced to the computation of a min cost bottleneck matching
in a complete bipartite graph~\cite{eh-computational}. 
That matching, in turn,
can be computed in $O(n^{1.5}\log n)$ time, and efficient
implementations have been developed recently~\cite{kmn-geometry}.

The described strategy, however, fails in the multi-parameter case,
simply because the two distances do not match for more than one parameter:
even if the multi-parameter persistence module admits a decomposition
into \emph{intervals} (which are ``nice'' indecomposable elements,
see Section~\ref{sec:background}), it has been proved that the
interleaving distance and the multi-parameter extension of the bottleneck
distance are arbitrarily far from each other~\cite[Example 9.1]{bl-algebraic}. Another example where the interleaving and bottleneck distances differ is given in~\cite[Example 4.2]{bjerkevik-stability}; moreover, in this example the pair of persistence modules has the property that potential interleavings can be written on a particular matrix form, later formalized by the introduction of \textit{CI problems} in~\cite{bb-computational}. A consequence is that the strategy of computing interleaving distance by computing the bottleneck distance fails also in this special case.

\paragraph{Our contributions.}
We show that, for $d=2$, the computation of the interleaving distance
of two persistence modules of finite type is NP-hard,
even if the modules are assumed to be decomposable into intervals.
In~\cite{bb-computational},
it is proved that the problem is CI-hard, where CI is a combinatorial
problem related to the invertibility of a matrix with a prescribed
set of zero elements. This is done by associating a pair of modules to each CI problem such that the modules are $1$-interleaved if and only if the CI problem has a solution. We ``finish'' this proof by showing that CI is
NP-complete, hence proving the main result. The hardness result on CI
is independent of all topological concepts required for the rest of the paper
and potentially of independent interest in other algorithmic areas.

Moreover, we slightly improve the reduction from~\cite{bb-computational}
that asserts the CI-hardness of the interleaving distance, showing
that also obtaining a $(3-\eps)$-approximation of the interleaving distance
is NP-hard to obtain for every $\eps>0$.
This result follows from the fact that our improved construction 
takes an instance of a CI problem and returns
a pair of persistence modules which are $1$-interleaved if the instance
has a solution and are $3$-interleaved if no solution exists.
We mention that for \emph{rectangle decomposable} modules in $d=2$,
a subclass of interval decomposable modules,
it is known that the bottleneck distance $3$-approximates the interleaving
distance~\cite[Theorem 3.2]{bjerkevik-stability}, and can be computed in polynomial time.
While this result does not directly extend to all interval decomposable
modules, it gives reason to hope that a $3$-approximation of
the interleaving distance exists for a larger class of modules.

We also extend our hardness result to related problems:
we show that it is NP-complete to compute the interleaving distance
of two \emph{indecomposable} persistence modules (for $d=2$).
We obtain this result by ``stitching'' together the interval decomposables
from our main result into two indecomposable modules without affecting
their interleaving distance. We remark that the restriction of
computing the interleaving distance of indecomposable \textit{interval} modules
has recently been shown to be in P~\cite{dx-computing}.

Bauer and Lesnick~\cite{bauer2014induced} showed that the existence of an interleaving pair, for modules indexed over $\R$, is equivalent to the existence of a single morphism with kernel and cokernel of a corresponding ``size''. While the equivalence does not hold in general, the two concepts are still closely related for $d>1$. Using this, we obtain as a corollary to the aforementioned results that it is in general NP-complete to decide if there exists a morphism whose kernel and cokernel have size bounded by a given parameter. We also show that it is NP-complete to decide if there exists a surjection (dually, an injection) from one persistence module to another. Together with the result of~\cite{brooksbank2008testing}, this gives a complete characterization of the computational complexity of ``one-sided stability''. Furthermore, we remark that this gives an alternative proof of the fact that checking for injections (resp. surjections) between modules over a finite-dimensional algebra (over a finite field) is NP-hard. This was first shown in~\cite[Theorem 1.2]{ivanyos2010deterministic} (for arbitrary fields). The paper concludes with a result showing that it is in general NP-hard to approximate distances induced by noise systems (as introduced by Scolamiero et al. \cite {scolamiero2017multidimensional}) within a factor of 2. 

\paragraph{Outline.}
We begin with the hardness proof for CI in \cref{sec:CI}.
In \cref{sec:background}, we discuss the 
representation-theoretic concepts needed in the paper.
In \cref{sec:reduction}, we describe our improved
reduction scheme from interleaving distance to CI.
In \cref{sec:indecomposable_hardness}, we prove the hardness
for indecomposable modules.
In \cref{sec:one_sided}, we prove our hardness result for one-sided
stability. A result closely related to one-sided stability can be found in \ref{sec:noise} where we discuss a particular distance induced by a noise system. We conclude in \cref{sec:conclusion}.

\section{The CI problem}
\label{sec:CI}
Throughout the paper, we set $\F$ to be any finite field
with a constant number of elements.
We write $\F^{n\times n}$ for the set of $n\times n$-matrices
over $\F$, and $P_{ij}\in\F$ for the entry of $P$ at the position at row
$i$ and column $j$. 
We write $I_n$ for the $n\times n$-unit matrix.
The \emph{constrained invertibility problem}
asks for a solution of the equation $AB=I_n$, when certain entries
of $A$ and of $B$ are constrained to be zero. 
Formally, using the notation $[n]:=\{1,\ldots,n\}$, we define the language
\begin{align*}
CI := &\{(n,P, Q) \mid P\subseteq [n]\times[n] \wedge Q \subseteq [n]\times [n] \wedge
\exists A,B\in \F^{n\times n}:\\ &\left(\forall (i,j)\in P: A_{i,j}=0\wedge\forall (i,j)\in Q: B_{i,j}=0\wedge AB=I_n\right)\}.
\end{align*}

We can write CI instances in a more visual form, for instance writing
\[\left(\begin{array}{ccc} \free &\free& \free\\\free& 0 & \free \\ \free&\free& 0\end{array}\right)\left(\begin{array}{ccc}\free&\free&\free\\\free& \free & 0 \\ \free&0& \free\end{array}\right) =I_3\]
instead of 
$(3,\{(2,2),(3,3)\},\{(2,3),(3,2)\})$. Indeed, the CI problem asks whether in the above matrices, we can fill the $\free$-entries
with field elements to satisfy the equation.
In the above example, this is indeed possible, for instance by choosing
\[A=\left(\begin{array}{ccc} 1&1&1\\1& \textbf{0} & 1 \\ 1&1& \textbf{0}\end{array}\right)\quad B=\left(\begin{array}{ccc} -1&1&1\\1&-1& \textbf{0} \\ 1& \textbf{0}&-1\end{array}\right).\]
We sometimes also call $A$ and $B$ a \emph{satisfying assignment}.
In contrast, the instance
\[\left(\begin{array}{ccc} 0 &\free& 0\\\free& \free & \free \\ \free&\free& \free\end{array}\right)\left(\begin{array}{ccc}\free&\free&\free\\0& \free & \free \\ \free&\free& \free\end{array}\right) =I_3
\]
has no solution, because the $(1,1)$ entry of the product on the left
is always $0$, no matter what values are chosen.
Note that the existence of a solution also depends on the characteristic of
the base field. For an example, see Chapter 4, page 13 in~\cite{bb-computational}.

The CI problem is of interest to us, because we will see 
in Section~\ref{sec:reduction} that CI \emph{reduces} to
the problem of computing the interleaving distance, that is,
a polynomial time algorithm for computing the interleaving
distance will allow us to decide whether a triple $(n,P,Q)$ is in $CI$,
also in polynomial time. Although the definition of CI is rather elementary
and appears to be useful in different contexts, we are not aware of any
previous work studying this problem (apart from~\cite{bb-computational}).

It is clear that $CI$ is in NP because a valid choice of the matrices
$A$ and $B$ can be checked in polynomial time. We want to show that
$CI$ is NP-hard as well. It will be convenient to do so in two steps.
First, we define a slightly more general problem,
called \emph{generalized constrained invertibility (GCI)}, and show that
GCI reduces to CI. Then, we proceed by showing that 3SAT reduces
to GCI, proving the NP-hardness of CI.

\paragraph{Generalized constrained invertibility.}
We generalize from the above problem in two ways: first, instead
of square matrices, 
we allow that $A\in\F^{n\times m}$ and $B\in\F^{m\times n}$
(where $m$ is an additional input).
Second, instead of forcing $AB=I_n$, we only require
that $AB$ coincides with $I_n$ in a fixed subset of entries over $[n]\times [n]$.
Formally, we define
\begin{align*}
GCI:= &\{(n,m,P,Q,R)\mid P\subseteq [n]\times[m]\wedge Q\subseteq [m]\times [n]\wedge R\subseteq [n]\times[n]\wedge\exists A\in \F^{n\times m},B\in\F^{m\times n}:\\
&\left(\forall (i,j)\in P: A_{i,j}=0\wedge\forall (i,j)\in Q: B_{i,j}=0\wedge \forall (i,j)\in R:(AB)_{i,j}=(I_n)_{i,j}\right)\}.
\end{align*}

Again, we use the following notation
\[
\left(\begin{array}{ccc}\free&\free&\free\\ 0 & 0 & 0\end{array}\right)
\left(\begin{array}{cc}\free&0\\0& \free \\ 0&0 \end{array}\right) 
=
\left(\begin{array}{cc} 1 & 0 \\ \free & \free\end{array}\right)
\]
for the GCI instance $(2,3,\{(2,1),(2,2),(2,3)\},\{(1,2),(2,1),(3,1),(3,2)\},\{(1,1),(1,2)\})$. This instance is indeed in GCI, as for instance,
\[
\left(\begin{array}{ccc}1 &0 & 0\\\textbf{0} & \textbf{0} & \textbf{0}\end{array}\right)
\left(\begin{array}{cc} 1 &\textbf{0}\\\textbf{0}& 1 \\ \textbf{0}&\textbf{0} \end{array}\right) 
=
\left(\begin{array}{cc} \textbf{1} & \textbf{0} \\ 0 & 0\end{array}\right).
\]

GCI is indeed generalizing CI, 
as we can encode any CI instance by setting $m=n$
and $R=[n]\times[n]$. Hence, CI trivially reduces to GCI. 
We show, however, that also the converse is true, meaning that the
problems are computationally equivalent.
We will need the following lemma which follows from linear algebra:

\begin{lemma}
\label{lem:padding_lemma}
Let $M\in\F^{n\times m}$, $N\in\F^{m\times n}$ with $m>n$
such that $MN=I_n$. Then there exist
matrices $M'\in\F^{(m-n)\times m}$, $N'\in\F^{m\times (m-n)}$
such that 
\[\left[\begin{array}{c} M\\M'\end{array} \right]
\left[\begin{array}{cc} N &N'\end{array} \right] = I_m.\]
\end{lemma}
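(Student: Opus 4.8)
The plan is to exhibit $M'$ and $N'$ explicitly using the row and column spaces of $M$ and $N$. The key observation is that $MN = I_n$ already forces $M$ to have full row rank $n$ and $N$ to have full column rank $n$, so we only need to "complete" each to a full-rank square matrix in a way that is compatible with the identity block structure. The cleanest route is to work with the map $NM \colon \F^m \to \F^m$: since $MN = I_n$, the matrix $P := NM$ satisfies $P^2 = N(MN)M = NM = P$, so $P$ is an idempotent (a projection) of rank $n$. Hence $\F^m = \operatorname{im}(P) \oplus \ker(P)$, and $\ker(P)$ has dimension $m-n$.

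First I would pick a matrix $N' \in \F^{m \times (m-n)}$ whose columns form a basis of $\ker(P) = \ker(NM)$; then define $M' \in \F^{(m-n)\times m}$ to be any left inverse of $N'$ on that subspace — concretely, since the columns of $N$ together with the columns of $N'$ span $\F^m$ (as $\operatorname{im}(P) \subseteq \operatorname{im}(N)$ and we are adjoining a complement), one can choose $M'$ so that $M' N = 0$ and $M' N' = I_{m-n}$. This is exactly the statement that the $m \times m$ matrix $\bigl[\begin{smallmatrix} N & N' \end{smallmatrix}\bigr]$ is invertible and $\bigl[\begin{smallmatrix} M \\ M' \end{smallmatrix}\bigr]$ is its inverse: the product is
\[
\left[\begin{array}{c} M\\ M'\end{array}\right]
\left[\begin{array}{cc} N & N'\end{array}\right]
=
\left[\begin{array}{cc} MN & MN' \\ M'N & M'N'\end{array}\right],
\]
whose diagonal blocks are $I_n$ and $I_{m-n}$ by construction, so it remains only to check that the off-diagonal blocks $MN'$ and $M'N$ vanish. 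The block $M'N = 0$ holds by the choice of $M'$; for $MN' = 0$, note that each column $v$ of $N'$ lies in $\ker(NM)$, and I would need to push this through to $Mv = 0$ — this is where a little care is required.

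**The main obstacle** is precisely showing $MN' = 0$, i.e., that $\ker(NM) \subseteq \ker(M)$ — which is not automatic for arbitrary $M, N$. The fix is to choose $N'$ and $M'$ more carefully rather than taking an arbitrary basis of $\ker(NM)$. The robust approach: extend the linearly independent columns of $N$ (which span the $n$-dimensional image of the idempotent $P = NM$, and are independent since $MN = I_n$ forces $N$ to have rank $n$) by columns $N'$ so that $\bigl[\begin{smallmatrix} N & N'\end{smallmatrix}\bigr]$ is invertible; let $\bigl[\begin{smallmatrix} C \\ D \end{smallmatrix}\bigr]$ be its inverse, where $C \in \F^{n \times m}$, $D \in \F^{(m-n)\times m}$, so $CN = I_n$, $CN' = 0$, $DN = 0$, $DN' = I_{m-n}$. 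Now I would set $M' := D$; it remains to replace $M$ by $C$ — but we want to keep the original $M$. The resolution is that $MN = I_n = CN$, and since $\bigl[\begin{smallmatrix} N & N'\end{smallmatrix}\bigr]$ is invertible, $M$ and $C$ are determined by their action on the columns of $N$ and $N'$; so instead I would start from $M$, choose $N'$ to be any basis of a complement of $\operatorname{row}(\text{something})$... The genuinely clean statement is: since $M$ has rank $n$, choose $M' \in \F^{(m-n)\times m}$ so that $\bigl[\begin{smallmatrix} M \\ M'\end{smallmatrix}\bigr]$ is invertible; let $\bigl[\begin{smallmatrix} N & N'\end{smallmatrix}\bigr]$ be its inverse. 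Then automatically $MN = I_n$ and $M'N' = I_{m-n}$ and the off-diagonal blocks vanish — but we must verify this inverse's first $n$ columns really equal the given $N$. They do: $\bigl[\begin{smallmatrix} M \\ M'\end{smallmatrix}\bigr] N = \bigl[\begin{smallmatrix} MN \\ M'N\end{smallmatrix}\bigr] = \bigl[\begin{smallmatrix} I_n \\ M'N\end{smallmatrix}\bigr]$, which need not equal $\bigl[\begin{smallmatrix} I_n \\ 0\end{smallmatrix}\bigr]$ unless $M'N = 0$.

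**So the correct order of operations** is: (1) observe $\operatorname{rank}(M) = n$ from $MN = I_n$; (2) the rows of $M$ span an $n$-dimensional subspace $W \subseteq \F^m$, and the columns of $N$ span the complementary-in-a-suitable-sense subspace — more precisely $\F^m = \operatorname{im}(N) \oplus \ker(M)$, since $MN = I_n$ implies $\operatorname{im}(N) \cap \ker(M) = 0$ and a dimension count gives the direct sum; (3) choose the columns of $N' \in \F^{m \times (m-n)}$ to be a basis of $\ker(M)$, which immediately gives $MN' = 0$; (4) on the subspace $\operatorname{im}(N) \oplus \ker(M) = \F^m$, define a linear map $\F^m \to \F^{m-n}$ that kills $\operatorname{im}(N)$ and restricts to the inverse of $N' \colon \F^{m-n} \xrightarrow{\sim} \ker(M)$ on the second summand; call its matrix $M'$, so $M'N = 0$ and $M'N' = I_{m-n}$; (5) assemble the block product above and read off $I_m$. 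I expect step (2)'s direct-sum decomposition and step (4)'s definition of $M'$ to be the only points needing more than a sentence, and everything else is a direct block-matrix computation.
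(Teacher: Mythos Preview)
Your final five-step plan is correct: the decomposition $\F^m = \operatorname{im}(N) \oplus \ker(M)$ follows exactly as you say (trivial intersection from $MN=I_n$, then dimension count), and once $N'$ is chosen as a basis of $\ker(M)$ and $M'$ as the projection onto $\ker(M)$ along $\operatorname{im}(N)$ followed by $(N')^{-1}$, all four blocks of the product fall out immediately.

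The paper takes a different, more computational route. It starts on the \emph{row} side: extend the rows of $M$ arbitrarily to an invertible matrix $\left[\begin{smallmatrix}M\\M''\end{smallmatrix}\right]$, then \emph{correct} via $M' := M'' - M''NM$, which forces $M'N = 0$ by a one-line calculation while preserving invertibility (a block row operation). With $M'N=0$ and $MN=I_n$ in hand, the given $N$ is automatically the first $n$ columns of the inverse of $\left[\begin{smallmatrix}M\\M'\end{smallmatrix}\right]$, and $N'$ is simply defined as the remaining columns. Your approach is dual --- you start on the column side with $N'$ in $\ker(M)$ and build $M'$ from the splitting --- and is arguably more conceptual, making the role of the idempotent $NM$ (which you noticed early on) transparent. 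The paper's version has the minor advantage of giving an explicit formula for $M'$ without invoking a direct-sum complement, but both arguments are equally short once the right object is identified.
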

\begin{proof}
Pick $M''\in\F^{(m-n)\times m}$ so that $\left[\begin{array}{c} M\\M''\end{array} \right]$ has full rank. This is possible, as the row vectors of $M$ are linearly independent, so we can pick the rows in $M''$ iteratively such that they are linearly independent of each other and those in $M$. Let $M' = M''-M''NM$, which gives \[M'N = (M''-M''NM)N = M''N-M''NI_n=0_{(m-n)\times n}.\] Since
\[\left[\begin{array}{c} M\\M''\end{array} \right] = \left[\begin{array}{cc} I_n&0_{n\times (m-n)}\\M''N&I_{m-n}\end{array} \right]
\left[\begin{array}{c} M\\M'\end{array} \right],\] $\left[\begin{array}{c} M\\M'\end{array} \right]$ also has full rank, which means that it has an inverse. Let $N'$ be the last $m-n$ columns of this inverse matrix. We get
\[\left[\begin{array}{c} M\\M'\end{array} \right]
\left[\begin{array}{cc} N &N'\end{array} \right] = \left[\begin{array}{cc} I_n &0_{n\times (m-n)}\\0_{(m-n)\times n}&I_{m-n} \end{array} \right] = I_m.\]
\end{proof}

\begin{lemma}
\label{lem:cgi_to_ci}
GCI is polynomial-time reducible to CI.
\end{lemma}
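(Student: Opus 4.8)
The plan is to reduce a GCI instance $(n,m,P,Q,R)$ to a CI instance of size $N:=n+m$ by appending to $A$ a block of $n$ fresh columns and to $B$ a block of $n$ fresh rows, chosen so as to simultaneously handle the two ways in which GCI is more general than CI: the matrices may be rectangular, and only the entries of the product indexed by $R$ are prescribed.

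The key algebraic observation I would isolate first is a reformulation of the $R$-constraint. For $A\in\F^{n\times m}$ and $B\in\F^{m\times n}$, one has $(AB)_{i,j}=(I_n)_{i,j}$ for all $(i,j)\in R$ if and only if there exist a diagonal matrix $D\in\F^{n\times n}$ and a matrix $E\in\F^{n\times n}$ with $E_{i,j}=0$ for all $(i,j)\in R$ such that $AB+ED=I_n$. For the ``only if'' direction take $D:=I_n$ and $E:=I_n-AB$; for the ``if'' direction, restricting the identity $AB+ED=I_n$ to an entry $(i,j)\in R$ and using that $D$ is diagonal gives $(AB)_{i,j}+E_{i,j}D_{j,j}=(I_n)_{i,j}$, so $E_{i,j}=0$ forces the claim. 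Since $AB+ED=\begin{pmatrix}A & E\end{pmatrix}\begin{pmatrix}B\\ D\end{pmatrix}$, this converts the partial constraint into a genuine (but rectangular) matrix identity.

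The rectangularity is then removed by \cref{lem:padding_lemma}. I would define the CI instance on $\F^{N\times N}$ with $N=n+m$ as follows: in the first factor, the top-left $n\times m$ block is $A$ with its zero-pattern $P$, the top-right $n\times n$ block is a matrix $E$ whose $(i,j)$-entry is constrained to $0$ exactly for $(i,j)\in R$, and the bottom $m$ rows are unconstrained; in the second factor, the top-left $m\times n$ block is $B$ with its zero-pattern $Q$, the bottom-left $n\times n$ block is a matrix $D$ all of whose off-diagonal entries are constrained to $0$, and the right $m$ columns are unconstrained. This instance is clearly computable in polynomial time. Assuming without loss of generality that $m\geq 1$ (the case $m=0$ can be decided directly and mapped to a fixed CI instance), we have $N>n$, so $\begin{pmatrix}A & E\end{pmatrix}\in\F^{n\times N}$ and $\begin{pmatrix}B\\ D\end{pmatrix}\in\F^{N\times n}$ meet the hypotheses of \cref{lem:padding_lemma} as soon as their product is $I_n$.

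For the equivalence: given a GCI solution $A,B$, choose $D,E$ as in the reformulation so that $\begin{pmatrix}A & E\end{pmatrix}\begin{pmatrix}B\\ D\end{pmatrix}=I_n$, and apply \cref{lem:padding_lemma} to fill the unconstrained blocks so that the full $N\times N$ product equals $I_N$; every zero-constraint is respected, so the CI instance is satisfiable. Conversely, from a CI solution, the top-left $n\times n$ block of the product reads $AB+ED=I_n$ with $A,B$ obeying $P,Q$, with $E$ zero on $R$, and with $D$ diagonal, so the reformulation yields a GCI solution. I expect the one genuinely non-routine point to be the design of the auxiliary blocks: forcing $D$ to be diagonal is precisely what makes $(ED)_{i,j}=E_{i,j}D_{j,j}$, so that the zero-pattern of $E$ on $R$ passes straight through to $ED$ on $R$; once that structure is in place the non-squareness is absorbed by the padding lemma and the remainder is bookkeeping about where the constraints sit.
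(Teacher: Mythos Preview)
Your proof is correct and follows essentially the same approach as the paper: append an $n\times n$ diagonal block and an $n\times n$ block with zeros on $R$ so that the partial constraint becomes a full $n\times(n+m)$-by-$(n+m)\times n$ identity, then square the system with the padding lemma. The only cosmetic difference is that the paper places the diagonal block $I_n^{\free}$ to the right of $A$ and the $R$-constrained block $\bar{C}$ below $B$, whereas you swap these roles (your $E$ goes next to $A$, your diagonal $D$ goes below $B$); both orderings yield $(ED)_{i,j}=0$ on $R$ for the same reason, so the arguments are interchangeable.
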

\begin{proof}
Fix a GCI instance $(n,m,P,Q,R)$. We have to define a polynomial time algorithm
to compute a CI instance $(n',P',Q')$ such that
\[(n,m,P,Q,R)\in GCI \Leftrightarrow (n',P',Q')\in CI.\]

Write the GCI instance as $AB=C$, where $A$ and $B$
are matrices with $0$ and $\free$ entries (of dimensions $n\times m$
and $m\times n$, respectively), and $C$ is an $n\times n$-matrix
with $1$ or $\free$ entries on the diagonal, and $0$ or $\free$ entries
away from the diagonal (as in the example above).

Define the matrix $I_n^\free$ as the matrix with $0$ away from the diagonal
and $\free$ on the diagonal. Moreover, let $\bar{C}$ denote the matrix
$C$ with all $1$-entries replaced by $0$-entries. Now, consider the
GCI instance
\begin{eqnarray}
\left[\begin{array}{cc} A& I_n^\free\end{array}\right] \left[ \begin{array}{c} B\\\bar{C}\end{array}\right]=I_n
\label{eqn:gci_instance}
\end{eqnarray}
which can be formally written as $(n,n+m,P',Q',[n]\times [n])$
for some choices $P'\supseteq P$, $Q'\supseteq Q$.

We claim that the original instance is in GCI if and only if the extended instance is in GCI. First, assume that $AB=C$ has a solution (that is, an assignment
of field elements to $\free$ entries that satisfies the equation).
Then, we pick all diagonal entries in $I_n^\free$ as $1$, so that the matrix
becomes $I_n$. Also, we pick $\bar{C}$ to be $I_n-AB$; this is 
indeed possible, as an entry in $\bar{C}$ is fixed only if the corresponding positions of 
$I_n$ and $AB$ coincide. With these choices, we have that
\[\left[\begin{array}{cc} A& I_n^\free\end{array}\right]\left[ \begin{array}{c} B\\\bar{C}\end{array}\right] = AB+I_n(I_n-AB)=I_n,\]
as required.

Conversely, if there is a solution for the extended instance, write
$X$ for the assignment of $I_n^\free$ and $Y$ for the assignment of $\bar{C}$.
Then $AB+XY=I_n$. Now fix any index $(i,j)\in R$ and consider the equation
in that entry. By construction $Y_{i,j}=0$, and multiplication by 
the diagonal matrix $X$ does not change this property.
It follows that $(AB)_{(i,j)}=(I_n)_{i,j}$, which means that $AB=C$
has a solution. Hence, the two instances are indeed equivalent.

To finish the proof, we observe that (\ref{eqn:gci_instance})
is in GCI if and only if
\begin{eqnarray}
\left[\begin{array}{cc} A& I_n^\free\\ \free_{m\times m} & \free_{m\times n}\end{array}\right] \left[ \begin{array}{cc} B & \free_{m\times m}\\\bar{C}& \free_{n\times m}\end{array}\right]=I_{n+m}
\label{eqn:gci_padded}
\end{eqnarray}
is in GCI, where $\free_{a\times b}$ is simply the $a\times b$ matrix
only containing $\free$ entries. 
Formally written, this instance corresponds to
$(n+m,n+m,P',Q',[n+m]\times [n+m])$.
To see the equivalence, if (\ref{eqn:gci_instance}) is in GCI,
Lemma~\ref{lem:padding_lemma} asserts that there are indeed choices
for the $\free$-matrices to solve (\ref{eqn:gci_padded}) as well.
In the opposite direction, a satisfying assignment of the involved matrices
in (\ref{eqn:gci_padded})
also yields a valid solution for (\ref{eqn:gci_instance}) when restricted to the upper $n$ rows and left $n$ columns, respectively.

Combining everything, we see that $(n,m,P,Q,R)$ is in GCI
if and only if $(n+m,n+m,P',Q',[n+m]\times [n+m])$ is in GCI.
The latter, however, is equivalent to the CI instance $(n+m,P',Q')$.
The conversion can clearly be performed in polynomial time, and the statement
follows.
\end{proof}

\paragraph{Hardness of GCI.}
We describe now how an algorithm for how deciding GCI can be used to
decide satisfiability of 3SAT formulas. Let $\phi$ be a 3CNF formula
with $n$ variables and $m$ clauses. We construct a GCI instance that
is satisfiable if and only if $\phi$ is satisfiable.

In what follows, we will often label some $\free$ entries in matrices
with variables when we want to talk about the possible assignments
of the corresponding entries.

The first step is to build a ``gadget'' that allows us to encode
the truth value of a variable in the matrix. Consider the instance

\[
\left(
\begin{array}{ccc}
\free&0&\free\\
0&\free&\free\\
\end{array}
\right)
\left(
\begin{array}{cc}
x&0\\
0&y\\
\free&\free\\
\end{array}
\right)
=
I_2.
\] 
In any solution to this equation, not both $x$ and $y$ can be zero
because otherwise, the right matrix would have rank at most $1$. Furthermore,
when extending the instance by one row/column
\[
\left(
\begin{array}{ccc}
a&b&0\\\hdashline
\free&0&\free\\
0&\free&\free\\
\end{array}
\right)
\left(
\begin{array}{c:cc}
0&x&0\\
0&0&y\\
0&\free&\free\\
\end{array}
\right)
=
\left(
\begin{array}{ccc}
\free & 0 & 0\\
0 & 1 & 0\\
0 & 0 & 1
\end{array}
\right),
\] 
we see that both $ax=0$ and $by=0$ must hold, which is then only possible
if at least one entry $a$ or $b$ is equal to $0$. In fact, there is a solution
with $a\neq 0$, and a solution with $b\neq 0$, for instance

\[
\left(
\begin{array}{ccc}
1&0&\textbf{0}\\
0&\textbf{0}&1\\
\textbf{0}&1&0
\end{array}
\right)
\left(
\begin{array}{ccc}
\textbf{0}&0&\textbf{0}\\
\textbf{0}&\textbf{0}&1\\
\textbf{0}&1&0\\
\end{array}
\right)
=
\left(
\begin{array}{ccc}
0 & \textbf{0} & \textbf{0}\\
\textbf{0} & \textbf{1} & \textbf{0}\\
\textbf{0} & \textbf{0} & \textbf{1}
\end{array}
\right),
\] 

\[
\left(
\begin{array}{ccc}
0&1&\textbf{0}\\
1&\textbf{0}&0\\
\textbf{0}&0&1
\end{array}
\right)
\left(
\begin{array}{ccc}
\textbf{0}&1&\textbf{0}\\
\textbf{0}&\textbf{0}&0\\
\textbf{0}&0&1\\
\end{array}
\right)
=
\left(
\begin{array}{ccc}
0 & \textbf{0} & \textbf{0}\\
\textbf{0} & \textbf{1} & \textbf{0}\\
\textbf{0} & \textbf{0} & \textbf{1}
\end{array}
\right).
\] 

The intuition is that for a variable $x_i$ appearing in $\phi$, 
we interpret $x_i$ to be true if $a\neq 0$, and to be false if $b\neq 0$.
We build such a gadget for each variable. 
A crucial observation is that we can do so with all variable entries
placed in the same row. This works essentially by concatenating the
variable gadgets, in a block-like fashion. We show the construction
for three variables as an example.

\[
\left(
\begin{array}{cccccccccc}
a_0&b_0&0&a_1&b_1&0&a_2&b_2&0&\free\\
\free&0&\free&0&0&0&0&0&0&0\\
0&\free&\free&0&0&0&0&0&0&0\\
0&0&0&\free&0&\free&0&0&0&0\\
0&0&0&0&\free&\free&0&0&0&0\\
0&0&0&0&0&0&\free&0&\free&0\\
0&0&0&0&0&0&0&\free&\free&0\\
\end{array}
\right)
\left(
\begin{array}{ccccccc}
0&\free&0&0&0&0&0\\
0&0&\free&0&0&0&0\\
0&\free&\free&0&0&0&0\\
0&0&0&\free&0&0&0\\
0&0&0&0&\free&0&0\\
0&0&0&\free&\free&0&0\\
0&0&0&0&0&\free&0\\
0&0&0&0&0&0&\free\\
0&0&0&0&0&\free&\free\\
\free&0&0&0&0&0&0\\
\end{array}
\right)
=
I_7
\]
where we introduced an additional column at the end of the left matrix
and an additional row at the end of the second matrix.
Firstly, this allows us to satisfy the entire $I_7$
on the right hand side; moreover, it will be useful when extending
the construction to clauses.
It is straightforward to generalize this construction to an arbitrary
number of variables. We arrive at the following intermediate result.

\begin{lemma}
\label{lem:variable_lemma}
For any $n\geq 1$, there exists a GCI instance $A'B'=I_{2n+1}$
with $A'$ having $3n+1$ columns, such that in each solution for the 
problem, $A'_{1,3n+1}$ is not zero, and for each $k=0,\ldots,n-1$, the entries
$A'_{1,3k+1}$ and $A'_{1,3k+2}$ are not both non-zero. Moreover, for any choice
of $v_1,\ldots,v_n\in\{1,2\}$, there exists a solution of the instance
in which $A'_{1,3k+v_i}\neq 0$ for all $k=0,\ldots,n-1$.
\end{lemma}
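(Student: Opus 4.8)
The plan is to turn the three-variable picture above into a general recipe and then verify the two claimed properties by inspecting individual entries of the product $A'B'=I_{2n+1}$. For general $n$ I would take $A'$ of size $(2n+1)\times(3n+1)$ and $B'$ of size $(3n+1)\times(2n+1)$ built from blocks as follows. Row $1$ of $A'$ carries, for each $k=0,\dots,n-1$, the variable entries $a_k:=A'_{1,3k+1}$ and $b_k:=A'_{1,3k+2}$ together with a $0$ in column $3k+3$, and a $\free$ in the last column $3n+1$. For each $k$, rows $2k+2$ and $2k+3$ of $A'$ hold the gadget block $\bigl(\begin{smallmatrix}\free&0&\free\\ 0&\free&\free\end{smallmatrix}\bigr)$ in columns $3k+1,3k+2,3k+3$ and are $0$ in all other columns; the last column of $A'$ has $0$ in all rows below the first. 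Dually, column $1$ of $B'$ is $0$ except for a $\free$ in row $3n+1$; for each $k$, columns $2k+2$ and $2k+3$ of $B'$ hold the gadget block $\bigl(\begin{smallmatrix}x_k&0\\ 0&y_k\\ \free&\free\end{smallmatrix}\bigr)$ in rows $3k+1,3k+2,3k+3$ and are $0$ in all other rows; the last row of $B'$ has $0$ in all columns past the first. The zero patterns of $A'$ and $B'$ define $P'$ and $Q'$, and we take $R=[2n+1]\times[2n+1]$. The point of this layout is that the variable entries $a_k,b_k$ of all gadgets share the single row $1$ of $A'$, while otherwise the gadgets are supported on disjoint blocks.

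For the ``soundness'' direction I would read off four families of entries of $A'B'=I_{2n+1}$. The $(1,1)$ entry: the only index $\ell$ with $A'_{1,\ell}$ and $B'_{\ell,1}$ both possibly nonzero is $\ell=3n+1$, so $A'_{1,3n+1}\cdot B'_{3n+1,1}=1$ and hence $A'_{1,3n+1}\ne 0$, which is the first assertion. Fixing $k$, the $(1,2k+2)$ entry reduces to $a_k x_k$ (the $0$ in column $3k+3$ of row $1$ of $A'$ kills the only other contribution), so $a_k x_k=0$; similarly the $(1,2k+3)$ entry gives $b_k y_k=0$. Finally, columns $2k+2$ and $2k+3$ of $B'$ are supported only on rows $3k+1,3k+2,3k+3$, where they equal $\bigl(\begin{smallmatrix}x_k&0\\ 0&y_k\\ \free&\free\end{smallmatrix}\bigr)$; if $x_k=y_k=0$ this block has rank at most $1$, so these two columns of $B'$ are linearly dependent, hence so are their images under left multiplication by $A'$, i.e.\ columns $2k+2,2k+3$ of $A'B'$ --- contradicting that $A'B'=I$ makes them the independent vectors $e_{2k+2},e_{2k+3}$. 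So $x_k$ and $y_k$ are not both zero, and together with $a_k x_k=b_k y_k=0$ this forces that $a_k=A'_{1,3k+1}$ and $b_k=A'_{1,3k+2}$ are not both nonzero.

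For the ``completeness'' direction I would assemble an explicit satisfying assignment from the two $3\times3$ gadget solutions displayed above, which realize $a\ne 0$ and $b\ne 0$ respectively and use only the field elements $0$ and $1$ (so the construction is characteristic-independent). For each $k$ I place the solution matching the prescribed value of $v$ into the $k$-th gadget block; this makes the $2\times 2$ diagonal block of $A'B'$ on rows/columns $2k+2,2k+3$ equal $I_2$ and the remaining within-block product entries $0$, while entries of $A'B'$ linking two different gadgets vanish automatically by disjointness of supports. It remains to fix row $1$ and column $1$: set $A'_{1,3n+1}=B'_{3n+1,1}=1$, so $(A'B')_{1,1}=1$; the entries $(A'B')_{i,1}$ for $i\ge 2$ vanish because column $1$ of $B'$ is supported on row $3n+1$ where $A'_{i,3n+1}=0$; and the entries $(A'B')_{1,j}$ for $j\ge 2$ vanish because $B'_{3n+1,j}=0$ and the gadget contributions $a_k x_k,b_k y_k$ are $0$ by the chosen assignment. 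This yields $A'B'=I_{2n+1}$ with $A'_{1,3k+v_k}\ne 0$ for every $k$.

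The only real work is bookkeeping: pinning down the index ranges defining $P'$ and $Q'$ and making the block decoupling precise enough that both the rank argument and the explicit gluing are rigorous rather than read off the example. No new idea beyond the rank observation --- already the heart of the single-gadget discussion --- is needed.
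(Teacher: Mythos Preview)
Your proposal is correct and follows exactly the approach the paper intends: the paper presents the $n=3$ example, declares the generalization ``straightforward'', and states the lemma without further argument, so your write-up simply fills in the bookkeeping the paper omits. The soundness argument via the $(1,1)$, $(1,2k+2)$, $(1,2k+3)$ entries together with the rank observation, and the completeness argument by block-gluing the two displayed $3\times 3$ gadget solutions, are precisely what the paper's discussion before the lemma is pointing at.
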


Next, we extend the instance from Lemma~\ref{lem:variable_lemma}
with respect to the clauses. 
We refer to the clauses as $c_1,\ldots,c_m$.
For each clause we append one further row to $A'$, 
each of them identical of the form 
\[\left(\begin{array}{ccccc} 0 & \ldots & 0 & \free\end{array}\right)^T.\]
We also append one column to $B'$ for each clause, each of length $3n+1$.
For each clause, the entry at row $3n+1$ is set to $\free$.
If a clause contains a literal of the form $x_i$ (in positive form),
we set the entry at row $3i+1$ to $\free$. If it contains a literal $\neg x_i$,
we set the entry at row $3i+2$ to $\free$. In this way, at most $4$ entries
in the column are fixed to $\free$, and we fix all other entries to be $0$.
Continuing the above example, 
for the clause $x_0\vee\neg x_1\vee x_2$, we obtain
a column of the form
\[
\left(
\begin{array}{cccccccccc}
\free&0&0&0&\free&0&\free&0&0&\free\\
\end{array}
\right)
\]
Let $A$ and $B$ denote the matrices extended from $A'$ and $B'$ with the
above procedure. We next define $C$ as a square matrix of dimension $2n+1+m$
as follows: The upper left $(2n+1)\times (2n+1)$ submatrix is set to
$I_{2n+1}$. The rest of the first row is set to $0$, and the rest of
the diagonal is set to $1$. All other entries are set to $\free$. 
This concludes the description of a GCI instance $AB=C$
out of a 3CNF formula $\phi$.
We exemplify the construction for the formula  
$(x_0 \vee x_1 \vee \neg x_2) \wedge (\neg x_0 \vee x_1 \vee x_2)$,
where the lines mark the boundary of $A'$ and $B'$, respectively.
\[
\scalemath{0.9}{
\left(
\begin{array}{cccccccccc}
a_0&b_0&0&a_1&b_1&0&a_2&b_2&0&\free\\
\free&0&\free&0&0&0&0&0&0&0\\
0&\free&\free&0&0&0&0&0&0&0\\
0&0&0&\free&0&\free&0&0&0&0\\
0&0&0&0&\free&\free&0&0&0&0\\
0&0&0&0&0&0&\free&0&\free&0\\
0&0&0&0&0&0&0&\free&\free&0\\
\hline
0&0&0&0&0&0&0&0&0&\free\\
0&0&0&0&0&0&0&0&0&\free\\
\end{array}
\right)
\left(
\begin{array}{ccccccc|cc}
0&\free&0&0&0&0&0&\free&0\\
0&0&\free&0&0&0&0&0&\free\\
0&\free&\free&0&0&0&0&0&0\\
0&0&0&\free&0&0&0&\free&\free\\
0&0&0&0&\free&0&0&0&0\\
0&0&0&\free&\free&0&0&0&0\\
0&0&0&0&0&\free&0&0&\free\\
0&0&0&0&0&0&\free&\free&0\\
0&0&0&0&0&\free&\free&0&0\\
\free&0&0&0&0&0&0&\free&\free\\
\end{array}
\right)
=
\left(
\begin{array}{ccccccccc}
1&0&0&0&0&0&0&0&0\\
0&1&0&0&0&0&0&\free&\free\\
0&0&1&0&0&0&0&\free&\free\\
0&0&0&1&0&0&0&\free&\free\\
0&0&0&0&1&0&0&\free&\free\\
0&0&0&0&0&1&0&\free&\free\\
0&0&0&0&0&0&1&\free&\free\\
\free&\free&\free&\free&\free&\free&\free&1&\free\\
\free&\free&\free&\free&\free&\free&\free&\free&1\\
\end{array}
\right)
}
\]

\begin{lemma}
\label{lem:clause_lemma}
$AB=C$ admits a solution if and only if $\phi$ is satisfiable.
\end{lemma}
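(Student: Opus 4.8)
The plan is to read off both implications from a block analysis of the equation $AB=C$. Write $A$ as $A'$ (its first $2n+1$ rows, of width $3n+1$) with the $m\times(3n+1)$ block $R_A$ of appended clause rows stacked below it, and write $B$ as $B'$ (its first $2n+1$ columns, of height $3n+1$) with the $(3n+1)\times m$ block $C_B$ of appended clause columns to the right. Multiplying out, the top-left $(2n+1)\times(2n+1)$ block of $AB$ is $A'B'$, and since that submatrix of $C$ is $I_{2n+1}$, any solution of $AB=C$ restricts to a solution of the GCI instance of \cref{lem:variable_lemma}. The blocks $R_A B'$ and the off-diagonal part of $R_A C_B$ lie where $C$ has only $\free$ entries, so they are unconstrained; hence the genuinely new requirements are exactly: (i) $(A'C_B)_{1,\ell}=0$ for each clause $c_\ell$ (the remaining rows of this block are free in $C$), and (ii) $(R_A C_B)_{\ell,\ell}=1$ for each $\ell$.

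These two requirements are the ``clause gadget''. Row $2n+1+\ell$ of $A$ has a single (free) entry $r_\ell$, in column $3n+1$, and column $2n+1+\ell$ of $B$ has its row-$(3n+1)$ entry free; call it $t^0_\ell$. So (ii) reads $r_\ell t^0_\ell=1$, forcing $t^0_\ell\neq 0$. That column of $B$ is otherwise nonzero only in the rows $p(L)$ attached to the literals $L$ of $c_\ell$, where $p(x_i)=3i+1$ and $p(\neg x_i)=3i+2$; and the row-$1$, first-$(3n+1)$-columns part of $A$ is exactly $A'_{1,\bullet}$. Thus (i) becomes
\[
A'_{1,3n+1}\,t^0_\ell \;+\; \sum_{L\in c_\ell} A'_{1,p(L)}\,B_{p(L),\,2n+1+\ell} \;=\; 0 .
\]
Since \cref{lem:variable_lemma} guarantees $A'_{1,3n+1}\neq 0$ in every solution and we just saw $t^0_\ell\neq 0$, the sum above must be nonzero, so \emph{some} literal $L\in c_\ell$ satisfies $A'_{1,p(L)}\neq 0$.

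For ``$\phi$ satisfiable $\Rightarrow$ $AB=C$ has a solution'', take a satisfying assignment, set $v_{k+1}=1$ or $2$ according as $x_k$ is true or false, and use the last part of \cref{lem:variable_lemma} to obtain a solution of $A'B'=I_{2n+1}$ with $A'_{1,3k+v_{k+1}}\neq 0$ for all $k$. For each clause pick a literal $L_\ell$ it satisfies; then $A'_{1,p(L_\ell)}\neq 0$. Complete the assignment by setting $r_\ell=t^0_\ell=1$, all other literal entries of column $2n+1+\ell$ of $B$ to $0$, and $B_{p(L_\ell),\,2n+1+\ell}=-A'_{1,3n+1}/A'_{1,p(L_\ell)}$; then (i) and (ii) hold, and every other entry of $C$ is either already matched by $A'B'=I_{2n+1}$ or free, so $AB=C$. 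Conversely, given a solution of $AB=C$, restrict as above to a solution of the \cref{lem:variable_lemma} instance and declare $x_k$ true iff $A'_{1,3k+1}\neq 0$ (well-defined since $A'_{1,3k+1}$ and $A'_{1,3k+2}$ are never both nonzero, so $A'_{1,3k+2}\neq 0$ forces $x_k$ false). By the clause-gadget computation each $c_\ell$ has a literal $L$ with $A'_{1,p(L)}\neq 0$: if $L=x_i$ then $x_i$ is true, and if $L=\neg x_i$ then $A'_{1,3i+1}=0$ so $x_i$ is false. Either way $c_\ell$ is satisfied, hence $\phi$ is.

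I expect the only delicate points to be the bookkeeping of which entries of $C$ are $\free$ (no constraint) versus $0/1$ (constraint), and making the extracted truth assignment total: the case $A'_{1,3k+1}=A'_{1,3k+2}=0$ requires a fixed convention (we chose ``false''), which is harmless because the clause argument only ever uses the one-sided implication $A'_{1,p(L)}\neq 0\Rightarrow L$ is true. Once these are pinned down, the remainder is routine block-matrix multiplication.
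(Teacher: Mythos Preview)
Your proof is correct and follows essentially the same approach as the paper's: both directions hinge on the block analysis of $AB$, the observation that $A'_{1,3n+1}\neq 0$ and the bottom entry of each clause column is nonzero (from the diagonal constraint), and the resulting clause equation forcing some $A'_{1,p(L)}\neq 0$. Your bookkeeping via the block decomposition $A=\begin{pmatrix}A'\\R_A\end{pmatrix}$, $B=\begin{pmatrix}B'&C_B\end{pmatrix}$ is a bit more explicit than the paper's, and you make slightly different (but equally valid) numerical choices in the forward direction ($r_\ell=t^0_\ell=1$ versus the paper's $\gamma$ and $1/\gamma$), but the logic is identical.
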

\begin{proof}
``$\Rightarrow$'': Let us assume that $AB=C$ has a solution,
which also implies a solution $A'B'=I_{2n+1}$
being a subproblem encoded in the instance.
Fixing a solution, we assign an assignment of the variables of $\phi$
as follows: If the entry $A_{1,3i+1}$ is non-zero, we set 
$x_i$ to true. If the entry $A_{1,3i+2}$ is non-zero,
we set $x_i$ to false. If neither is non-zero, we set $x_i$ to false as well
(the choice is irrelevant). Note that by Lemma~\ref{lem:variable_lemma},
not both $A_{1,3i+1}$ and $A_{1,3i+2}$ can be non-zero, so the assignment is
well-defined.

First of all, let $\gamma$ be the rightmost entry of the
first row of $A$. Because the $(1,1)$-entry of $C$ is set to $1$,
it follows that $\gamma\delta=1$, where $\delta$ is the lowest entry of
the first column of $B$. Hence, in the assumed solution, $\gamma\neq 0$.

Now fix a clause $c$ in $\phi$ and let $v$ denote the column of $B$
assigned to this clause, with column index $i$.
Recall that $v$ consists of (up to) three $\free$ entries chosen
according to the literals of $c$, and a $\free$ entry at the lowest position. Let $\lambda$ denote the value of that lowest entry in the assumed solution
of $AB=C$. We see that $\lambda\neq 0$, with a similar argument 
as for $\gamma$ above, using the $(i,i)$-entry of $C$.

Now, the $(1,i)$ entry of $C$ is set to $0$ by construction which yields
a constraint of the form
\[\mu_1 v_1 + \mu_2 v_2 + \mu_3 v_3 + \underbrace{\gamma\lambda}_{\neq 0} = 0\]
where $v_1$, $v_2$, $v_3$ are entries of $v$ at the $\free$ positions,
and $\mu_1$, $\mu_2$, $\mu_3$ 
the corresponding entries of the first row of $A$.
We observe that at least one term $\mu_jv_j$ must be non-zero,
hence both entries are non-zero.

This implies that the chosen assignment satisfies the clause: 
if $v_j$ is at index $3k+1$ for some $k$, the clause contains the literal
$x_k$ by construction and since $\mu_j\neq 0$, our assignment sets $x_k$
to true. The same argument applies to $v_j$ of the form $3k+2$.
It follows that the assignment satisfies all clauses and hence,
$\phi$ is satisfiable.

``$\Leftarrow$'': We pick a satisfying assignment for $\phi$
and fill the first row of $A$ as follows: if $x_i$ is true, 
we set $(A_{1,3i+1},A_{1,3i+2})$ to $(1,0)$ if $x_i$ is false,
we set it it $(0,1)$. By Lemma~\ref{lem:variable_lemma},
there exists a solution for $A'B'=I_{2n+1}$ with this initial
values and we choose such a solution, filling the upper 
$(2n+1)$ rows of $A$ and the left $(2n+1)$ columns of $B$.
Note that similar as above,
the value $\gamma$ at $A_{1,3n+1}$ must be non-zero in such a solution.
In the remaining $m$ rows of $A$, by construction, we only need to 
pick the rightmost entry, and we set it to $\gamma$ in each of these rows.
That determines all entries of $A$.

To complete $B'$ to $B$, we need to fix values in the 
columns of $B$ associated to clauses. In each such column, we pick 
the lowest entry to be $\frac{1}{\gamma}$, satisfying the constraints
of $C$ along the diagonal. Fixing a column $i$ of $B$, the $(1,i)$-constraint
of $C$ reads as
\[\mu_1 v_1 + \mu_2 v_2 + \mu_3 v_3 + \underbrace{\gamma\frac{1}{\gamma}}_{=1} = 0,\]
where $v_1,v_2,v_3$ are the remaining non-zero entries in $i$-th column.
Because we encoded a satisfying assignment of $\phi$ in the first row
of $A$, at least one $\mu_j$ entry is $1$.
We set the corresponding entry $v_j$ to $-1$, and the remaining $v_k$'s to $0$.
In this way, all constraints are satisfied,
and the GCI instance has a solution.
\end{proof}

Clearly, the GCI instance of the preceding proof can be computed
from $\phi$ in polynomial time.
It follows:

\begin{theorem}
\label{thm:ci_np_complete}
CI is NP-complete.
\end{theorem}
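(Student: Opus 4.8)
The plan is to assemble the theorem from the pieces already established. Membership in NP was already observed and is immediate: given a candidate pair $(A,B)\in\F^{n\times n}\times\F^{n\times n}$, one checks in polynomial time that the entries indexed by $P$ (resp. $Q$) vanish and that $AB=I_n$, so a satisfying assignment is a polynomial-size certificate verifiable in polynomial time.

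For NP-hardness, I would reduce from 3SAT, which is NP-complete by the Cook--Levin theorem, via a two-step chain. Given a 3CNF formula $\phi$ with $n$ variables and $m$ clauses, first run the construction described above: take the variable instance of Lemma~\ref{lem:variable_lemma} (an instance $A'B'=I_{2n+1}$ with $A'$ having $3n+1$ columns), append one row to $A'$ and one column to $B'$ per clause with the $\free$-pattern dictated by the literals of that clause, and form the target matrix $C$ of dimension $2n+1+m$ as specified. This yields a GCI instance $AB=C$ whose size is polynomial (indeed linear) in $n+m$ and which can plainly be written down in polynomial time. By Lemma~\ref{lem:clause_lemma}, $AB=C$ has a solution if and only if $\phi$ is satisfiable, so this is a polynomial-time many-one reduction from 3SAT to GCI.

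Next I would compose this with the reduction of Lemma~\ref{lem:cgi_to_ci}, which converts a GCI instance $(n,m,P,Q,R)$ into a CI instance $(n',P',Q')$ (concretely with $n'=n+m$, using the $I_n^\free$/$\bar C$ encoding together with the padding supplied by Lemma~\ref{lem:padding_lemma}) in polynomial time and while preserving membership. Since the composition of two polynomial-time many-one reductions is again such a reduction, we obtain $\text{3SAT}\le_p\text{GCI}\le_p\text{CI}$, hence CI is NP-hard. Combined with NP-membership, this shows CI is NP-complete.

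I do not expect a genuine obstacle at this stage: all the substantive content — correctness of the variable and clause gadgets, and of the GCI-to-CI encoding — is carried by Lemmas~\ref{lem:variable_lemma}, \ref{lem:clause_lemma}, and \ref{lem:cgi_to_ci}. The only points needing care here are verifying that the formula-to-GCI construction is computable in polynomial time (so that one can cite its polynomial size rather than just its existence) and that composing it with the GCI-to-CI reduction still runs in polynomial time; both are routine.
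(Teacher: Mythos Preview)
Your proposal is correct and follows exactly the paper's approach: invoke NP-membership (already observed), use Lemma~\ref{lem:clause_lemma} for the polynomial-time reduction $\text{3SAT}\le_p\text{GCI}$, then Lemma~\ref{lem:cgi_to_ci} for $\text{GCI}\le_p\text{CI}$, and compose. The paper's own proof is a two-sentence version of what you wrote.
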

\begin{proof}
Lemma~\ref{lem:clause_lemma} shows the reduction of 3SAT to GCI,
proving that GCI is NP-complete.
As shown in Lemma~\ref{lem:cgi_to_ci}, GCI reduces to CI,
proving the claim.
\end{proof}

\section{Modules and Interleavings}
\label{sec:background}

In what follows, all vector spaces are understood
to be $\F$-vector spaces for the fixed base field $\F$.
Also, for points $p=(p_x,p_y),q=(q_x,q_y)$ in $\R^2$, 
we write $p\leq q$
if $p_x\leq q_x$ and $p_y\leq q_y$.

\paragraph{Persistence modules.}
A \emph{(two-parameter) persistence module} $M$
is a collection of $\F$-vector spaces $V_{p}$,
indexed over $p\in\R^2$ together with linear maps
$M_{p\to q}$ whenever $p\leq q$.
These maps must have the property that
$M_{p\to p}$ is the identity map on $M_p$
and $M_{q\to r}\circ M_{p\to q} = M_{p\to r}$
for $p\leq q\leq r$. Much more succinctly, a
persistence module is a functor from the poset category
$\R^2$ to the category of vector spaces. A \emph{morphism} between $M$ and $N$ is a collection of linear maps $\{f_p\colon M_p \to N_p\}$ such that $N_{p\to q} \circ f_p = f_q \circ M_{p\to q}$. We say that $f$ is an \emph{isomorphism} if $f_p$ is an isomorphism for all $p$, and denote this by $M\cong N$. If we view persistence modules as functors, a morphism is simply a natural transformation between the functors.

The simplest example is the \emph{$0$-module}
where $M_p$ is the trivial vector space for all $p\in\R^2$.
For a more interesting example, define
an \emph{interval} in the poset $(\R^2,\leq)$ to be a non-empty subset
$S\subset\R^2$ such that whenever $a,c\in S$ and $a\leq b\leq c$, then $b\in S$,
and moreover, if $a,c\in S$, there exists a sequence of elements $a=b_1,\ldots,b_\ell=c$
of elements in $S$ such that $b_i\leq b_{i+1}$ or $b_{i+1}\leq b_i$.
We associate an \emph{interval module} $I^S$ to $S$ as follows: for $p\in S$, 
we set $I^S_p:=\F$, and $I^S_p:=0$ otherwise. As map $I^S_{p\to q}$ with $p\leq q$, 
we attach the identity map if $p,q\in S$, and the $0$-map otherwise.

For $a\in\R^2$, let $\langle a\rangle:=\{x\in\R^2\mid a\leq x\}$ 
be the infinite
rectangle with $a$ as lower-left corner.
Given $k$ elements $a_1,\ldots,a_k\in\R^2$, the set
\[S:=\bigcup_{i=1,\ldots,k} \langle a_i\rangle\]
is called the \emph{staircase} with elements $a_1,\ldots,a_k$.
We call $k$ the size of the staircase.
See Figure~\ref{fig:staircase_illu} for an illustration.
It is easy
to verify that $S$ is an interval for $k\geq 1$. Clearly, if $a_i\leq a_j$,
we can remove $a_j$ without changing the staircase, so we assume that 
the elements forming the staircase are pairwise incomparable.
The \emph{staircase module} is the interval module associated to the staircase.

\begin{figure}
\centering
\includegraphics[width=5cm]{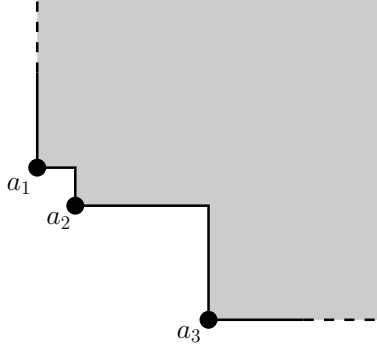}
\caption{A staircase of size $3$ (shaded area).}
\label{fig:staircase_illu}
\end{figure}

Given two persistence modules $M$ and $N$, the \emph{direct sum} $M\oplus N$
is the persistence module where $(M\oplus N)_p:=M_p\oplus N_p$,
and the linear maps are defined componentwise in the obvious way.
We call a persistence module $M$ \emph{indecomposable},
if in any decomposition $M=M_1\oplus M_2$, $M_1$ or $M_2$ is the $0$-module.
For example, it is not difficult to see that interval modules are indecomposable.
We call $M$ \emph{interval decomposable} if $M$ admits a decomposition
$M\cong M_1\oplus\ldots\oplus M_\ell$ into (finitely many) interval modules. The decomposition of any persistence module into interval modules is unique up to rearrangement and isomorphism of the summands; see \cite[Section 2.1]{bl-algebraic} and the references therein. This implies that there is a well-defined multiset of intervals $B(M)$ given by the decomposition of $M$ into interval modules. The multiset $B(M)$ is called the \emph{barcode} of $M$.
Not every module is interval decomposable; we remark that already rather
simple geometric constructions can give rise to complicated indecomposable elements \cite{be-realizations}.

\paragraph{Interleavings.}

Let $\eps \in \R$. For a persistence module $M$, the \emph{$\eps$-shift of $M$} is the module $M^\eps$ defined by $M_p^\eps = M_{p+\eps}$ (where $p+\eps=(p_x+\eps,p_y+\eps)$) and $M^\eps_{p \to q} = M_{p+\eps\to q+\eps}$. Note that $(M^\eps)^\delta = M^{\eps+\delta}$. As an example, staircase modules are closed under shift: the $\eps$-shift of the staircase module associated to $\bigcup \langle a_i\rangle$ is the staircase module associated to $\bigcup \langle a_i-\eps\rangle$. We can also define shift on morphisms: for $f\colon M \to N$, $f^\eps\colon M^\eps \to N^\eps$ is given by $f_p^\eps = f_{p+\eps}$. For $\eps\geq0$, there is an obvious morphism $\Sh_M(\eps)\colon  M \to M^\eps$ given by the internal morphisms of $M$, that is, we have $\Sh_M(\eps)_p = M_{p \to p+\eps}$. In practice we will often suppress notation and simply write $M \to M^\eps$ for this morphism.

With this in mind, we define an \emph{$\eps$-interleaving} between $M$ and $N$ for $\eps\geq 0$ as a pair $(f,g)$ of morphisms $f\colon  M \to N^\eps$ and $g\colon N \to M^\eps$ such that $g^\eps \circ f = \Sh_M(2\eps)$ and $f^\eps \circ g = \Sh_N(2\eps)$. Concretely, an $\eps$-interleaving between two persistence modules
$M$ and $N$ is a collection of maps
\begin{align}
f_p\colon M_p\to N_{p+\eps}\\
g_p\colon N_p\to M_{p+\eps}
\end{align}
such that all diagrams that can be composed out of
the maps $f_\ast$, $g_\ast$, and the linear maps of $M$ and $N$ commute. Note that a $0$-interleaving simply means that the persistence modules are \emph{isomorphic}. 
Also, an $\eps$-interleaving induces a $\delta$-interleaving for $\eps<\delta$ directly
by a suitable composition with the linear maps of the modules.

We say that two modules are \emph{$\eps$-interleaved} if there exists an $\eps$-interleaving
between them.
We define the \emph{interleaving distance} of two modules $M$ and $N$ as
\[d_I(M,N):=\inf\{\eps\geq 0\mid\text{$M$ and $N$ are $\eps$-interleaved}\}.\]
Note that $d_I$ defines an extended pseudometric on the space of persistence modules. The distance between two modules might be infinite, and there are non-isomorphic modules with distance $0$.
The triangle inequality follows from the simple observation that an $\eps$-interleaving
between $M_1$ and $M_2$ and a $\delta$-interleaving between $M_2$ and $M_3$ can be composed
to an $(\eps+\delta)$-interleaving between $M_1$ and $M_3$.

\paragraph{Representation of persistence modules.}
For studying the computational complexity of the interleaving distance,
we need to specify a finite representation of persistence modules
that allows us to pass such modules as an input to an algorithm.

A \emph{graded matrix representation} of a module $M$ 
is a $3$-tuple $(G,R,A)$,
where $G=\{g_1,\ldots,g_n\}$ is a list of $n$ points in $\R^2$,
$R=\{r_1,\ldots,r_m\}$ is a list of $m$ points in $\R^2$,
(with repetitions allowed),
and $A$ is an $(m\times n)$-matrix over the base field $\F$.
Equivalently, we can simply think of a matrix $A$ where each row and
column is annotated with a grading in $\R^2$.

The algebraic explanation for this representation is as follows:
it is known that a persistence module $M$ over $\R^2$ can be equivalently
described as a graded $\ring$-module over a suitably chosen ring $\ring$.
Assuming that $M$ is finitely presented, we can consider the free
resolution of $M$
\[\ring^m\stackrel{\partial^T}{\to}\ring^n\to M\to 0.\]
A graded matrix representation is simply a way to encode the map $\partial$
in this resolution. 

Let us describe for concreteness 
how a representation $(G,R,A)$ gives rise to a persistence module.
First, let $\F_1,\ldots,\F_n$ be copies of $\F$, and let
$e_i$ be the $1$-element of $\F_i$. For $p\in\R^2$, we define
$\mathrm{Gen}_p$ as the direct sum of all $\F_i$ such that
$g_i\leq p$. Moreover, every row of $A$ gives rise to a linear combination
of the entries $e_1,\ldots,e_n$. Let $c_i$ denote the linear combination
in row $i$. We define $\mathrm{Rel}_p$ to be the span
of all linear combinations $c_i$ for which $r_i\leq p$. Then, we set
\[M_p:=\frac{\mathrm{Gen}_p}{\mathrm{Rel}_p}\]
which is a $\F$-vector space. For $p\leq q$, 
writing $[x]_p$ for an element of $M_p$ with $x\in\mathrm{Gen}_p$, we define
\[M_{p\to q}([x]_p):=[x]_q.\]
It is easy to check that that $[x]_q$ is well-defined
(since $\mathrm{Gen}_p\subseteq\mathrm{Gen}_q$)
and independent of the chosen representative in $\mathrm{Gen}_p$
(since $\mathrm{Rel}_p\subseteq\mathrm{Rel}_q$).
Moreover, it is straightforward to verify that these maps satisfy
the properties of a persistence module.

\smallskip

In short, every persistence module that can be expressed by finitely
many generators and relations can be brought into graded matrix
representation. For instance, a staircase module for $a_1,\ldots,a_n$
of size $n$ where the $a_i$ are ordered by increasing first coordinate can be represented by a matrix with $n$ columns
graded by $a_1,\ldots,a_n$, and $n-1$ rows, where every row
corresponds to a pair $(i,i+1)$ with $1\leq i\leq n-1$.
In this row, we encode the relation $e_i=e_{i+1}$
and grade it by $p_{ij}$, which is the (unique) minimal element $q$
in $\R^2$ such that $a_i\leq q$ and $a_{i+1}\leq q$.
Hence, the graded matrix representation of a staircase of size $n$
has a size that is polynomial in $n$.

We also remark that a graded matrix representation is equivalent
to \emph{free implicit representations} \cite[Sec 5.1]{lw-interactive} 
for  the special case of $m_0=0$.

\section{Hardness of interleaving distance}
\label{sec:reduction}

We consider the following computational problems:

\begin{quote}
\textsc{$1$-Interleaving}: Given two persistence modules $M$, $N$ in 
graded matrix representation, decide whether they are $1$-interleaved.
\end{quote}

\begin{quote}
\textsc{$c$-Approx-Interleaving-Distance}: 
Given two persistence modules $M$, $N$ in 
graded matrix representation, return a real number $r$ such that
\[d_I(M,N)\leq r \leq c\cdot d_I(M,N)\]
\end{quote}

Obviously, the problem of computing $d_I(M,N)$ exactly is 
equivalent to the above definition with $c=1$.

The main result of this section is the following theorem:

\begin{theorem}
\label{1-3}
Given a CI-instance $(n,P,Q)$, we can compute in polynomial time in $n$
a pair of persistence modules $(M,N)$ in graded matrix representation
such that
\[
d_I(M,N)=
\begin{cases}
1 & \text{ if $(n,P,Q)\in CI$}\\
3 & \text{ if $(n,P,Q)\notin CI$}
\end{cases}.
\]
Moreover, both $M$ and $N$ are direct sums of staircase modules
and hence interval decomposable.
\end{theorem}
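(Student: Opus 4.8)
The plan is to adapt the construction of~\cite{bb-computational} which already shows that a CI instance yields a pair $(M,N)$ that are $1$-interleaved iff the instance is solvable; the only new content is to enforce that when the instance is \emph{not} solvable the interleaving distance jumps all the way up to $3$, rather than merely exceeding $1$. First I would recall the construction: starting from $(n,P,Q)$, one builds $M$ and $N$ as direct sums of staircase modules, where the staircases are placed on a coarse integer-like grid so that an $\eps$-interleaving with $\eps<1$ is forced to be a $0$-interleaving (an isomorphism), and an $\eps$-interleaving with $1\le\eps<3$ is forced to ``look like'' a $1$-interleaving. The staircases are designed so that, for each summand $I^{S_i}$ of $M$ and $I^{T_j}$ of $N$, the space of degree-$1$ morphisms $I^{S_i}\to (I^{T_j})^1$ is either $0$ or $\F$, and the pattern of which Hom-spaces vanish is exactly dictated by the forbidden positions $P$ and $Q$. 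Thus a $1$-interleaving of $M$ and $N$ unpacks into a pair of matrices $(A,B)$ over $\F$ — the matrices of $f$ and $g$ in the chosen bases of summands — with the prescribed zero pattern, and the interleaving identities $g^1\circ f=\Sh_M(2)$, $f^1\circ g=\Sh_N(2)$ translate (because the relevant shift-by-$2$ maps on staircases are identities between the matching summands) into $AB=I_n$ and $BA=I_n$. Hence $d_I(M,N)=1$ exactly when $(n,P,Q)\in CI$, and otherwise $d_I(M,N)>1$.

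The key new step is the upper bound $d_I(M,N)\le 3$ \emph{unconditionally}. Here I would exhibit an explicit $3$-interleaving between any pair of staircase modules that arise in the construction: if $S$ and $T$ are staircases lying in the same coarse grid cell, then $S^{-3}$ and $T^{-3}$ each cover enough of the plane that the internal maps $M\to M^3$, $N\to N^3$ factor through $N^3$ and $M^3$ respectively in a canonical way. More concretely, I would choose the grid spacing so that for every summand $I^S$ of $M$ there is a summand $I^T$ of $N$ with $S\subseteq T^{-3}$ and $T\subseteq S^{-3}$ (and vice versa), which gives nonzero morphisms $I^S\to (I^T)^3$ and $I^T\to (I^S)^3$ whose relevant composites are the shift-by-$6$ maps — automatically, since on staircases a nonzero degree-$3$ morphism composed with a nonzero degree-$3$ morphism must be the internal map whenever the internal map is itself nonzero. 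Assembling these summand-wise morphisms into block-diagonal $f,g$ (using the identity matrix as the ``coefficient'' matrix, which is legal because at scale $3$ \emph{all} Hom-spaces are nonzero — the zero pattern from $P,Q$ only obstructs at scale $1$) yields a genuine $3$-interleaving. Combined with the lower bound, this forces $d_I(M,N)=3$ when the instance is unsolvable. I would also need to check that no $\eps$-interleaving with $1<\eps<3$ exists when the instance is unsolvable; this follows from the grid design, which makes the relevant Hom and composition structure at scales in $[1,3)$ identical to that at scale $1$, so a $\eps$-interleaving there would still produce a CI solution.

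The remaining routine steps are: (i) verifying that each $M,N$ is of finite type with a graded matrix representation of size polynomial in $n$ — this follows from the fact, recalled in Section~\ref{sec:background}, that a staircase of size $k$ has a representation of size polynomial in $k$, and we use $O(n)$ staircases each of size $O(n)$; (ii) checking that the construction (grid coordinates, staircase corner points) is computable in polynomial time in $n$; and (iii) confirming the morphism computations — which Hom-spaces between shifted staircase modules are zero versus $\F$, and that composition of the canonical generators behaves multiplicatively — which is a direct combinatorial check on containment of staircase regions. The main obstacle is item in the previous paragraph: designing the staircase geometry precisely enough that (a) the zero pattern at scale $1$ is exactly $P,Q$, (b) the scale-$1$ interleaving equations reduce cleanly to $AB=I_n$ with no spurious extra constraints, and (c) the scale-$3$ interleaving exists for \emph{every} instance while scales in $(1,3)$ give nothing new. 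This is exactly the delicate part inherited and sharpened from~\cite{bb-computational}, and I expect the bulk of the proof to consist of specifying the corner coordinates and then reading off the three properties.
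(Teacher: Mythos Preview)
Your plan is correct and follows essentially the same approach as the paper: build $M=\bigoplus_i I^{S_i}$ and $N=\bigoplus_j I^{T_j}$ from staircases so that morphisms are matrices, the zero pattern at scales $\eps\in[1,3)$ is exactly $(P,Q)$, interleaving equations become $AB=I_n$, and at scale $3$ the identity matrix works unconditionally. The paper packages the geometric requirement cleanly as ``directed shift distance $d_s(S_i,T_j)\in\{1,3\}$ according to whether $(i,j)\in P$'' (and symmetrically for $Q$), which is exactly your conditions (a)--(c); one minor correction is that the construction uses $n$ staircases per module each of size $O(n^2)$, not $O(n)$, since one needs roughly $|P|+|Q|\le 2n^2$ independently shiftable corners.
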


We will postpone the proof of Theorem~\ref{1-3} to the end of the section
and first discuss its consequences.

\begin{theorem}
\label{thm:1_interleaving_np_complete}
\textsc{$1$-Interleaving} is NP-complete.
\end{theorem}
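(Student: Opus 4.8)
The plan is to show \textsc{$1$-Interleaving} is NP-complete by combining the two results already established in the excerpt. Membership in NP is the easier direction: given a candidate $1$-interleaving, which by the structure theory of persistence modules over $\R^2$ can be represented by finitely many matrices (the components $f_p, g_p$ at the finitely many grades appearing in the presentations of $M$ and $N$, together with the induced maps on the relevant relation spaces), one can verify in polynomial time that all the required commutativity squares hold. So the main content is NP-hardness.

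For NP-hardness, I would argue as follows. By \cref{thm:ci_np_complete}, CI is NP-hard. By \cref{1-3}, given a CI instance $(n,P,Q)$ we can construct in polynomial time a pair of persistence modules $(M,N)$ in graded matrix representation with $d_I(M,N)=1$ if $(n,P,Q)\in CI$ and $d_I(M,N)=3$ otherwise. In particular $M$ and $N$ are $1$-interleaved if and only if $(n,P,Q)\in CI$: if $d_I(M,N)=1$ then since the infimum is attained (the staircase construction yields a genuine $1$-interleaving, not merely a limiting one — this should be checked, or alternatively one notes that any $\eps<3$ forces $\eps=1$ via the dichotomy, hence a $1$-interleaving exists by the same compactness argument used in \cref{1-3}), and conversely a $1$-interleaving forces $d_I(M,N)\leq 1<3$, so by the dichotomy $d_I(M,N)=1$ and thus $(n,P,Q)\in CI$. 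This gives a polynomial-time many-one reduction from CI to \textsc{$1$-Interleaving}, so \textsc{$1$-Interleaving} is NP-hard; together with membership in NP, it is NP-complete.

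The step I expect to require the most care is the claim that $d_I(M,N)=1$ actually implies the existence of a genuine $1$-interleaving (i.e.\ that the infimum defining $d_I$ is realized in the "yes" case), since a priori $d_I$ is only an infimum. However, this is not really a new obstacle: it is exactly what \cref{1-3} must already deliver in its proof, because the construction in that theorem presumably exhibits an explicit $1$-interleaving from a CI solution. So in practice I would simply invoke \cref{1-3} and phrase the reduction in terms of "$M,N$ are $1$-interleaved iff $(n,P,Q)\in CI$", which the theorem yields directly: the "if" direction is the explicit interleaving built from a CI solution, and the "only if" direction follows because a $1$-interleaving would give $d_I(M,N)\leq 1$, incompatible with $d_I(M,N)=3$. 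Everything else — the polynomial time bound on the construction, and the NP-membership check — is routine given the earlier material.

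\begin{proof}
\textsc{$1$-Interleaving} is in NP: by the structure theory for persistence modules (and the finiteness built into graded matrix representations), a $1$-interleaving between $M$ and $N$ is witnessed by finitely many matrices over $\F$ of size polynomial in the input, and the defining commutativity conditions can be checked in polynomial time.

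For NP-hardness, we reduce from CI, which is NP-hard by \cref{thm:ci_np_complete}. Given a CI instance $(n,P,Q)$, apply \cref{1-3} to obtain, in polynomial time in $n$, a pair of persistence modules $(M,N)$ in graded matrix representation with
\[
d_I(M,N)=\begin{cases} 1 & \text{if }(n,P,Q)\in CI,\\ 3 & \text{if }(n,P,Q)\notin CI.\end{cases}
\]
If $(n,P,Q)\in CI$, the construction of \cref{1-3} yields an actual $1$-interleaving between $M$ and $N$, so $M$ and $N$ are $1$-interleaved. Conversely, if $M$ and $N$ are $1$-interleaved, then $d_I(M,N)\leq 1<3$, so by the dichotomy $d_I(M,N)=1$ and hence $(n,P,Q)\in CI$. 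Thus $(n,P,Q)\mapsto(M,N)$ is a polynomial-time many-one reduction from CI to \textsc{$1$-Interleaving}, proving NP-hardness. Combined with membership in NP, \textsc{$1$-Interleaving} is NP-complete.
\end{proof}
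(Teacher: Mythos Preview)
Your proposal is correct and follows essentially the same approach as the paper: NP-membership via a polynomial-size certificate (the interleaving maps at the finitely many relevant grades) that can be verified in polynomial time, and NP-hardness via the reduction from CI supplied by \cref{1-3} together with \cref{thm:ci_np_complete}. Your extra care about whether the infimum $d_I(M,N)=1$ is actually attained is well-placed and correctly resolved by noting that the construction in \cref{1-3} explicitly builds a $1$-interleaving from a CI solution.
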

\begin{proof}

We first argue that \textsc{$1$-Interleaving} is in NP.
First, note that to specify a $1$-interleaving, it suffices to specify
the maps at the points in $S$, where $S$ is a finite set whose size is polynomial in the size
of the graded matrix representation. More precisely, $S$ contains
the critical grades of the two modules
(that is, the grades specified by $G$ and $R$), as well
as the least common successors of such elements. That ensures that
every vector space (in both modules) can be isomorphically pulled back
to one of the elements of $S$, and the interleaving map can be defined using this
pull-back. It is enough to consider the points in $S$ to check if this set of pointwise maps is a valid morphism.

We can furthermore argue that verifying that a pair of such maps
yields a $1$-interleaving can be checked in a polynomial number of
steps. Again, this involves mostly the maps specified above,
as well as the corresponding maps shifted by $(1,1)$, in order to check
the compatibility of the two interleaving maps.
We omit further details of this step.

Finally, \textsc{$1$-Interleaving} is NP-hard: 
Assuming a polynomial time algorithm $A$ to decide the problem,
we can design a polynomial time algorithm for CI just by transforming $(n,P,Q)$
into a pair of modules $(M,N)$ using the algorithm from Theorem~\ref{1-3}.
If $A$ applied on $(M,N)$ returns true, we return that $(n,P,Q)$ is in CI.
Otherwise, we return that $(n,P,Q)$ is not in CI. 
Correctness follows from Theorem~\ref{1-3}, and the algorithm runs in
polynomial time, establishing a polynomial time reduction.
By Theorem~\ref{thm:ci_np_complete}, CI is NP-hard, hence,
so is \textsc{$1$-Interleaving}.
\end{proof}

\begin{theorem}
\label{thm:c_approx_interleaving_is_hard}
\textsc{$c$-Approx-Interleaving-Distance} is NP-hard for every $c<3$
(i.e., a polynomial time algorithm for the problem implies P=NP).
\end{theorem}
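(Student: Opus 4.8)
The plan is to reduce \textsc{$1$-Interleaving} to \textsc{$c$-Approx-Interleaving-Distance} for any fixed $c<3$, using Theorem~\ref{1-3} as the bridge. Suppose, for contradiction, that some $c<3$ admits a polynomial-time algorithm $A$ for \textsc{$c$-Approx-Interleaving-Distance}. Given a CI-instance $(n,P,Q)$, first apply the polynomial-time construction of Theorem~\ref{1-3} to produce the pair of persistence modules $(M,N)$ in graded matrix representation, whose size is polynomial in $n$. Then run $A$ on $(M,N)$ to obtain a value $r$ with $d_I(M,N)\le r\le c\cdot d_I(M,N)$.

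The key observation is that the two possible values of $d_I(M,N)$ guaranteed by Theorem~\ref{1-3}, namely $1$ and $3$, are separated by a multiplicative gap of exactly $3$, so any approximation factor strictly below $3$ suffices to distinguish them. Concretely: if $(n,P,Q)\in CI$, then $d_I(M,N)=1$, so $r\le c\cdot 1=c<3$. If $(n,P,Q)\notin CI$, then $d_I(M,N)=3$, so $r\ge d_I(M,N)=3$. Hence the decision rule ``output \texttt{true} iff $r<3$'' correctly decides membership in CI: it returns \texttt{true} exactly when $(n,P,Q)\in CI$. (One could equally well use the threshold $\tfrac{c+3}{2}$ or any value in the half-open interval $[c,3)$; the point is merely that $r$ lands below $3$ in the yes-case and at or above $3$ in the no-case.)

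Putting the pieces together: the construction from Theorem~\ref{1-3} runs in polynomial time, the call to $A$ runs in polynomial time by assumption, and the final comparison is trivial, so the whole procedure is a polynomial-time algorithm deciding CI. By Theorem~\ref{thm:ci_np_complete}, CI is NP-hard, so this would imply P${}={}$NP. Therefore no such algorithm $A$ exists unless P${}={}$NP, which is precisely the assertion that \textsc{$c$-Approx-Interleaving-Distance} is NP-hard for every $c<3$.

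There is essentially no hard step here — the entire content is packaged inside Theorem~\ref{1-3}, whose proof (postponed in the paper) is where the real work lies, namely constructing the module pair with the sharp $1$-versus-$3$ dichotomy. The only point requiring a moment's care is the direction of the approximation inequality: since $r$ is a $c$-approximation from above ($d_I\le r\le c\cdot d_I$), in the no-instance case we use the lower bound $r\ge d_I(M,N)=3$ rather than the upper bound, and in the yes-instance case we use the upper bound $r\le c\cdot d_I(M,N)=c<3$; conflating these would break the argument. I would also note in passing that the same reduction shows the problem remains hard even when $M$ and $N$ are promised to be interval decomposable, since the modules produced by Theorem~\ref{1-3} are direct sums of staircase modules.
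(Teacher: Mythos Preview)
Your proof is correct and essentially identical to the paper's: both assume a polynomial-time $c$-approximation algorithm, feed it the module pair from Theorem~\ref{1-3}, and use the threshold $3$ to decide CI, invoking Theorem~\ref{thm:ci_np_complete} for the contradiction. One small wording slip: your opening sentence says you will reduce \textsc{$1$-Interleaving} to \textsc{$c$-Approx-Interleaving-Distance}, but what you actually (and correctly) do is reduce CI directly; the body of the argument is right, so just fix that first line.
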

\begin{proof}
Fixing $c<3$, 
assuming a polynomial time algorithm $A$ for
\textsc{$c$-Approx-Interleaving-Distance}
yields a polynomial time algorithm for CI:
Given the input $(n,P,Q)$, we transform it into $(M,N)$
with Theorem~\ref{1-3}. Then, we apply $A$ on $(M,N)$. If the result is
less than $3$, we return that $(n,P,Q)$ is in CI.
Otherwise, we return that $(n,P,Q)$ is not in CI.
Correctness follows from Theorem~\ref{1-3}, noting that if $(n,P,Q)$ is
in CI, algorithm $A$ must return a number in the interval $[1,c]$
and $c<3$ is assumed. If $(n,P,Q)$ is not in CI, it returns a number $\geq 3$.
Also, the algorithm runs in polynomial time in $n$. Therefore, 
the existence of $A$ yields a polynomial time algorithm for CI,
implying P=NP with Theorem~\ref{thm:ci_np_complete}.
\end{proof}

Since the modules in Theorem~\ref{1-3} are direct sums of staircases,
both Theorem~\ref{thm:1_interleaving_np_complete} 
and Theorem~\ref{thm:c_approx_interleaving_is_hard} hold already for
the restricted case that the modules are interval decomposable.

\paragraph{Interleavings of staircases.}
The persistence modules constructed for the proof of Theorem~\ref{1-3}
will be direct sums of staircases. Before defining them, we 
establish some properties of the interleaving map between staircases 
and their direct sums which reveal the connection to the CI problems.

Recall from \cref{sec:background} that a morphism $M \to N$ can be described more concretely as a collection of maps $M_p\to N_p$ that are compatible with the linear maps in $M$ and $N$, that $M^\eps$ is defined by $M_p^\eps = M_{p+\eps}$, and that an $\eps$-interleaving is a pair of morphisms $\phi\colon  M \to N^\eps$, $\psi\colon  N \to M^\eps$ satisfying certain conditions. For staircase modules, the set of morphisms is quite limited.

For $M$ and $N$ staircase modules and $\lambda\in\F$, 
we denote by $1\mapsto\lambda$ the collection of linear maps $\phi_p$
such that $\phi_p(1)=\lambda$ for all $p$ such that $M_p=\F$.

\begin{lemma}
\label{lem:1_to_lambda}
Let $M$ and $N$ be staircase modules. Every morphism from $M$ to $N$ is of the form 
$1\mapsto\lambda$ for some $\lambda\in\F$.
\end{lemma}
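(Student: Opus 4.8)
The statement says that every morphism between two staircase modules $M$ and $N$ has the form $1 \mapsto \lambda$, i.e. it acts as multiplication by a fixed scalar $\lambda \in \F$ on every vector space that is nonzero. Since a staircase module is an interval module $I^S$ where $S$ is a staircase, each vector space $M_p$ is either $0$ or $\F$, and the internal maps $M_{p\to q}$ are either $0$ or the identity (identity precisely when both $p, q \in S$). So I want to show: the scalars $\lambda_p$ defined by $f_p(1) = \lambda_p$ (for $p$ with $M_p = \F$) are all equal.

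Let me think about the approach. First, recall that a staircase $S = \bigcup_{i=1}^k \langle a_i \rangle$ is connected and in fact has a natural "filtered" structure — any two points of $S$ are joined inside $S$ by a zigzag. But here I can do better: $S$ is actually upward-closed (if $p \in S$ and $p \le q$ then $q \in S$, since $S$ is a union of the up-sets $\langle a_i\rangle$). So for any $p, q \in S$, the least upper bound $r = p \vee q$ (coordinatewise max) lies in $S$, and $p \le r$, $q \le r$. Now apply naturality of the morphism $f \colon M \to N$ to the comparisons $p \le r$ and $q \le r$. For $p \le r$: since $p, r \in S$, the map $M_{p\to r}$ is the identity on $\F$; and on the $N$ side, $N_{p\to r}$ is the identity if $p \in S$ — wait, $N$ has its own staircase $S'$, and $p$ may or may not be in $S'$. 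I should instead phrase naturality as: $f_r \circ M_{p\to r} = N_{p\to r} \circ f_p$. If $N_p = 0$ then $f_p = 0$; if $N_p = \F$ but $p \notin S'$... no, $N_p = \F$ iff $p \in S'$, and then $N_{p\to r}$ is the identity (as $r \ge p$ so $r \in S'$ too). So: for $p, r \in S$ with $p \le r$, if $p \in S'$ then $\lambda_r = \lambda_p$, and if $p \notin S'$ then $f_p = 0$ forces $\lambda_r = 0$ via $f_r \cdot 1 = f_r(M_{p\to r} 1) = N_{p\to r}(f_p(1)) = 0$. Either way there's a consistent scalar, but I need to be careful about points where $N$ vanishes.

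The cleaner route: restrict attention to the intervals. Actually the simplest argument is: the morphism $f\colon I^{S} \to I^{S'}$ between interval modules; I claim it is determined by a single scalar. Consider $T = S \cap S'$. If $T = \emptyset$ then for every $p$, either $M_p = 0$ or $N_p = 0$, so $f_p = 0$ for all $p$ and $f = (1 \mapsto 0)$. If $T \ne \emptyset$, I claim $T$ is connected (as a subposet, in the zigzag sense): since both $S$ and $S'$ are upward closed, $T$ is upward closed, hence for $p, q \in T$ we have $p \vee q \in T$, so $T$ is directed and in particular zigzag-connected. Now on $T$, both $M$ and $N$ are $\F$ with identity maps. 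Naturality applied to $p \le p\vee q \ge q$ (all in $T$) gives $\lambda_p = \lambda_{p\vee q} = \lambda_q$. So all $\lambda_p$ for $p \in T$ equal a common value $\lambda$. Finally I must handle $p \in S \setminus S'$: there $N_p = 0$ so $f_p = 0$, which is consistent with "$\phi_p(1) = \lambda$ for all $p$ with $M_p = \F$" only if the convention is that this condition is vacuous when $N_p = 0$ — indeed the notation $1 \mapsto \lambda$ just names a morphism, and on points where $N$ is zero the map is forced to be zero regardless. So $f = (1 \mapsto \lambda)$.

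**Main obstacle.** There's no deep obstacle; the proof is elementary. The one point requiring care is the connectivity/directedness argument: making precise that a staircase (being a union of up-sets $\langle a_i\rangle$) is upward closed, hence the intersection of two staircases is upward closed and therefore directed, so that any two points can be connected by going up to their join. I expect the paper's proof to phrase this via the zigzag-connectedness built into the definition of "interval," but using upward-closedness (join) is the most transparent version. A secondary subtlety is bookkeeping about which of the two modules vanishes at a given point; the argument above dispatches this by splitting on $p \in S'$ or not, and noting $f_p = 0$ whenever $N_p = 0$.
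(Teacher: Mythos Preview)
Your proof is correct and takes essentially the same approach as the paper: both use that a staircase is upward closed, so any two points $p,q$ in the support of $M$ have a common upper bound $r = p\vee q$ still in the support, and naturality along $p \le r \ge q$ forces the scalars to agree. The paper's version is simply a terser rendition (two short paragraphs, without your explicit case split on $T = S \cap S'$, and handling the possibility $N_p = 0$ implicitly under the remark that the internal maps of $M$ and $N$ are injective).
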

\begin{proof}
Assume first that $p\leq q$ and $M_p=\F$. Write $\lambda:=\phi_p(1)$. 
Then, also $M_q=\F$, and $\phi_q(1)=\lambda$ as well, since
the linear maps from $p$ to $q$ for $M$ and $N$ are injective maps.

For incomparable $p$ and $q\in\R^2$, we consider the least common
successor $r$ of $p$ and $q$. Using the above property twice, we
see at once that $\phi_p(1)=\phi_r(1)=\phi_q(1)$.
\end{proof}

We examine next which values of $\lambda$
are possible for a concrete pair of staircases.
For a staircase $S$, let $S^\eps$ denote the staircase where each point
is shifted by $(\eps,\eps)$. This way, if $M$ is the module associated to $S$, $M^\eps$ is the module associated to $S^\eps$. As we noted before, the shift of a staircase module is also a staircase module.
Define the \emph{directed shift distance} from the staircase $S$ 
to the staircase $T$ as
\[d_s(S,T):=\min\{\eps\geq 0 \mid S\subseteq T^\eps\}.\]
One can show that the set on the right-hand side has a minimum value by using the fact that a staircase is generated by a finite set of elements, so $d_s$ is in fact well defined. Clearly, $d_s(S,T)\neq d_s(T,S)$ in general. The following simple observation
is crucial for our arguments. Let $M$, $N$ denote the staircase modules
induced by $S$ and $T$.

\begin{lemma}
\label{lem:transformation_lemma}
If $\eps<d_s(S,T)$, the only morphism from $M$ to $N^\eps$
is $1\mapsto 0$. If $\eps\geq d_s(S,T)$, every choice of $\lambda\in \F$
yields a morphism $1\mapsto\lambda$ from $M$ to $N^\eps$.
\end{lemma}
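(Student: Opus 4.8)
The plan is to prove Lemma~\ref{lem:transformation_lemma} directly from Lemma~\ref{lem:1_to_lambda} together with the definition of the directed shift distance $d_s(S,T)$. By Lemma~\ref{lem:1_to_lambda}, any morphism $\phi\colon M\to N^\eps$ is of the form $1\mapsto\lambda$, so the only thing to decide is which values of $\lambda$ actually yield a valid morphism. Recall that $N^\eps$ is the staircase module associated to the shifted staircase $T^\eps$, so $N^\eps_p=\F$ exactly when $p\in T^\eps$, and all internal maps of $N^\eps$ between nonzero spaces are identities. The collection $\{\phi_p=(1\mapsto\lambda)\}$ is a morphism precisely when, for every $p\leq q$ with $M_p=\F$, the square
\[
\begin{CD}
M_p @>{M_{p\to q}}>> M_q\\
@V{\phi_p}VV @VV{\phi_q}V\\
N^\eps_p @>{N^\eps_{p\to q}}>> N^\eps_q
\end{CD}
\]
commutes. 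The only way this can fail is when $M_p=M_q=\F$ (so $M_{p\to q}=\mathrm{id}$) but $N^\eps_p=0$ while $N^\eps_q=\F$: then going down-then-right sends $1$ to $0$, whereas going right-then-down sends $1$ to $\lambda$, forcing $\lambda=0$. In every other case the square commutes for any $\lambda$.

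First I would handle the case $\eps\geq d_s(S,T)$. By definition of $d_s$, we have $S\subseteq T^\eps$, hence $M_p=\F$ implies $p\in S\subseteq T^\eps$, i.e.\ $N^\eps_p=\F$. So the obstruction above never arises: whenever $M_p=\F$ we also have $N^\eps_p=\F$, the square always commutes, and therefore $1\mapsto\lambda$ is a morphism for every $\lambda\in\F$. This gives the second assertion. For the case $\eps<d_s(S,T)$, the definition of $d_s$ (and the fact, noted before the lemma, that the minimizing set has a genuine minimum) tells us that $S\not\subseteq T^\eps$, so there is a point $a\in S$ with $a\notin T^\eps$. Since $S$ is a staircase, $a$ can be taken to be one of its generating lower-left corners, and then every point $q\geq a$ with $q$ ``far enough'' lies in $T^\eps$ — more carefully, because $T^\eps$ is itself a staircase (in particular upward-closed), for $q$ sufficiently large we have $q\in T^\eps$; pick such a $q\geq a$. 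Then $M_a=M_q=\F$, $M_{a\to q}=\mathrm{id}$, but $N^\eps_a=0$ and $N^\eps_q=\F$, so commutativity of the square forces $\lambda=\phi_a(1)=0$. Combined with Lemma~\ref{lem:1_to_lambda}, this shows the only morphism is $1\mapsto 0$, giving the first assertion.

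The main obstacle is the small geometric point in the $\eps<d_s(S,T)$ case: from $a\in S\setminus T^\eps$ I need to produce a genuinely larger point $q$ that does lie in $T^\eps$, so that the non-commuting square actually exists. This is where I use that $T$ (hence $T^\eps$) is a nonempty staircase and therefore upward-closed and ``eventually everything'': fixing any generator $t$ of $T$, the point $q:=\max(a, t+\eps)$ (coordinatewise max) satisfies $q\geq a$ and $q\geq t+\eps$, so $q\in T^\eps$, while $q\geq a\in S$ gives $q\in S$ and hence $M_q=\F$. One should also double-check the degenerate possibility that no such $a$ exists, i.e.\ that $S\subseteq T^\eps$ would contradict $\eps<d_s(S,T)$ — but this is immediate from the definition of $d_s$ as the minimum of $\{\eps\geq 0\mid S\subseteq T^\eps\}$. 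Everything else is a direct unwinding of definitions, so the lemma follows.
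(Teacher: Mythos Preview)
Your proof is correct and follows essentially the same approach as the paper. One simplification worth noting: in the case $\eps<d_s(S,T)$, you do not need to produce the auxiliary point $q$ or the non-commuting square at all---once you have $a\in S\setminus T^\eps$, the map $\phi_a\colon M_a=\F\to N^\eps_a=0$ is forced to send $1$ to $0$, and Lemma~\ref{lem:1_to_lambda} then gives $\lambda=0$ immediately; this is exactly how the paper argues.
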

\begin{proof}
In the first case, by construction, there exists
some $p$ such that $M_p=\F$, but $N_{p+\eps}=0$. 
Hence, $0$ is the only choice for $\lambda$.

In the second case, $M_p=\F$ implies $N_{p+\eps}=\F$
as well. It is easy to check that any choice of $\lambda$
yields a compatible collection of maps,
hence a morphism.
\end{proof}

In particular, there are morphisms $M \to N$ given by arbitrary elements of $\F$ if and only if $S\subseteq T$. As a consequence, we can characterize morphisms of direct sums of staircase modules.

\begin{lemma}
\label{lem:matrix_rep_lemma}
Let $M=\oplus_{i=1}^n M_i$ and $N=\oplus_{j=1}^n N_j$ be direct sums of staircase modules. Then a collection of maps $\phi_p\colon M_p \to N_p$ is a morphism if and only if the restriction to $M_i$ and $N_j$ is a morphism
for any $i,j\in\{1,\ldots,n\}$. 
Therefore, a morphism $\phi$
is determined by an $(n\times n)$-matrix with entries in $\F$.
\end{lemma}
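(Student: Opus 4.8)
The plan is to reduce the statement to the fact that a finite direct sum of persistence modules is a \emph{biproduct}, so that both the structure maps and any candidate morphism decompose into blocks that can be treated separately. First I would fix $p\le q$ and record that the internal maps of $M=\bigoplus_{i} M_i$ and $N=\bigoplus_{j} N_j$ are block-diagonal: $M_{p\to q}=\bigoplus_i (M_i)_{p\to q}$ and $N_{p\to q}=\bigoplus_j (N_j)_{p\to q}$. Writing a pointwise collection $\phi_p\colon M_p\to N_p$ as a matrix of linear maps $(\phi_{ij})_p\colon (M_i)_p\to (N_j)_p$, the naturality equation $N_{p\to q}\circ\phi_p=\phi_q\circ M_{p\to q}$ is then equivalent, block by block, to $(N_j)_{p\to q}\circ(\phi_{ij})_p=(\phi_{ij})_q\circ(M_i)_{p\to q}$ for all $i,j$; that is, $\phi$ is a morphism if and only if every block $\phi_{ij}$ is. Since $\phi_{ij}=\pi_j\circ\phi\circ\iota_i$ for the canonical inclusions $\iota_i\colon M_i\hookrightarrow M$ and projections $\pi_j\colon N\twoheadrightarrow N_j$, this block $\phi_{ij}$ is exactly the restriction referred to in the statement, which settles the first claim. (Alternatively, one may invoke that the category of persistence modules is abelian and $\Hom(\bigoplus_i M_i,\bigoplus_j N_j)\cong\prod_{i,j}\Hom(M_i,N_j)$.)

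For the second claim I would feed each block into Lemma~\ref{lem:1_to_lambda}: every morphism $\phi_{ij}\colon M_i\to N_j$ has the form $1\mapsto\lambda_{ij}$ for some $\lambda_{ij}\in\F$. Because a staircase module is nonzero — indeed $(M_i)_p=\F$ for all $p$ above every generator of the staircase — the scalar $\lambda_{ij}=(\phi_{ij})_p(1)$ is uniquely determined by $\phi_{ij}$, so $\phi\mapsto(\lambda_{ij})_{i,j}$ is a well-defined injection $\Hom(M,N)\to\F^{n\times n}$: from the matrix one recovers every block $\phi_{ij}$ and hence, via the decomposition of the first paragraph, $\phi$ itself. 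This is the asserted $(n\times n)$-matrix description.

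I do not expect a genuine obstacle; the lemma is essentially bookkeeping resting on Lemma~\ref{lem:1_to_lambda}, which supplies the only nontrivial input (that each block is a single scalar). The one place to be careful is the equivalence in the first paragraph: one must check that passing between $\phi_p$ and its block matrix is compatible with the naturality squares, which uses nothing more than the block-diagonal form of the structure maps of a direct sum together with how composition of block matrices of linear maps works. It may also be worth remarking — although the lemma does not assert it — that the injection $\Hom(M,N)\to\F^{n\times n}$ need not be surjective, since Lemma~\ref{lem:transformation_lemma} forces $\lambda_{ij}=0$ whenever the staircase of $M_i$ is not contained in that of $N_j$.
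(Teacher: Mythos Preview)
Your proposal is correct and follows essentially the same route as the paper: both arguments fix $p\le q$, use the block-diagonal form of the internal maps of a direct sum, and observe that the naturality square for $\phi$ commutes if and only if each restricted square for $\phi_i^j$ does. You are slightly more explicit than the paper in deriving the $(n\times n)$-matrix description from Lemma~\ref{lem:1_to_lambda}, and your closing remark about non-surjectivity (via Lemma~\ref{lem:transformation_lemma}) is a helpful addition.
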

\begin{proof}
Let $p\leq q$, and consider the following diagram:
\[
\begin{tikzcd}
M_p\ar[d, "\phi_p"]\ar[rr,"M_{p\to q}"] & &  M_q\ar[d, "\phi_q"] \\
N_{p}\ar[rr,swap, "N_{p \to q}"] & & N_{q}
\end{tikzcd}
\]
We have $M_p=\oplus_{i=1}^n (M_i)_p$ and $N_{q}=\oplus_{j=1}^n (N_j)_{q}$. Thus the diagram above commutes if and only if for all $i$ and $j$, the restrictions of the two compositions to $(M_i)_p$ and $(N_j)_{q}$ are the same, since a linear transformation is determined by what happens on basis elements. This is again equivalent to the following diagram commuting for all $i$ and $j$, where $(\phi_i^j)_p$ is the restriction of $\phi_p$ to $M_i$ and $N_j$:
\[
\begin{tikzcd}
(M_i)_p\ar[d, "(\phi_i^j)_p"]\ar[rr,"(M_i)_{p\to q}"] & &  (M_i)_q\ar[d, "(\phi_i^j)_q"] \\
(N_j)_{p}\ar[rr,swap, "(N_j)_{p \to q}"] & & (N_j)_q
\end{tikzcd}
\]
But the collection of $\phi_p$ forms a morphism if and only if the first diagram commutes for all $p\leq q$, and the restriction of $\phi_p$ to $M_i$ and $N_j$ forms a morphism if and only if the second diagram commutes for all $p\leq q$. Thus we have proved the desired equivalence.
\end{proof}
Observe that the matrix described in \cref{lem:transformation_lemma} is simply $\phi_p\colon \oplus_{i=1}^n (M_i)_p \to \oplus_{j=1}^n (N_j)_{p}$ written as a matrix in the natural way for any $p$ contained in the support of $M_i$ for all $i$. 
\begin{lemma}
\label{lem:int_matrix}
Let $M$, $N$ be direct sums of staircase modules as above and
$\phi\colon M \to N^\eps$
and $\psi\colon N \to M^\eps$ be morphisms.
Then $\phi$ and $\psi$ form an $\eps$-interleaving
if and only if their associated $(n\times n)$-matrices
are inverse to each other.
\end{lemma}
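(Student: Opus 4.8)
The plan is to unwind the definition of an $\eps$-interleaving in terms of the matrix representation supplied by \cref{lem:matrix_rep_lemma}. Recall that $(\phi,\psi)$ is an $\eps$-interleaving exactly when $\psi^\eps\circ\phi=\Sh_M(2\eps)$ and $\phi^\eps\circ\psi=\Sh_N(2\eps)$. Since the modules here are direct sums of staircases, I would first observe that every morphism in sight is determined by an $(n\times n)$-matrix over $\F$, and that under this correspondence composition of morphisms corresponds to matrix multiplication and shifting a morphism (passing from $\phi$ to $\phi^\eps$) does not change the underlying matrix. The latter point is essentially \cref{lem:1_to_lambda} together with the remark after \cref{lem:matrix_rep_lemma}: the scalar $\lambda$ attached to a morphism between staircases is constant across all points of the common support, so reading off $\phi$ at a point $p$ or reading off $\phi^\eps$ at the point $p-\eps$ gives the same matrix.

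First I would set up the bookkeeping carefully: fix a point $p$ lying in the support of every summand $M_i$ and every summand $N_j$ after all the relevant shifts (such a point exists because there are finitely many staircases and we can go far enough into the positive quadrant), and evaluate all morphisms at $p$. Write $\Phi$ for the matrix of $\phi$ and $\Psi$ for the matrix of $\psi$. Then the matrix of $\psi^\eps\circ\phi\colon M\to M^{2\eps}$, evaluated at $p$, is $\Psi\Phi$, because $\phi_p\colon M_p\to N^\eps_p$ has matrix $\Phi$, $(\psi^\eps)_p=\psi_{p+\eps}\colon N^\eps_p\to M^{2\eps}_p$ has matrix $\Psi$ (the shift does not alter it), and the composite is just the product. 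On the other hand, the matrix of $\Sh_M(2\eps)\colon M\to M^{2\eps}$ at $p$ is $I_n$: it is the internal map of $M$ from $p$ to $p+2\eps$, which on each summand $M_i$ is the identity $\F\to\F$ (since $p$ and $p+2\eps$ both lie in the support of $M_i$), and it sends the $i$-th summand to the $i$-th summand, hence is the identity matrix in the chosen bases. Therefore $\psi^\eps\circ\phi=\Sh_M(2\eps)$ is equivalent to $\Psi\Phi=I_n$. Symmetrically, evaluating at a point in the support of all the $N_j$ and all the shifted $M_i$, the second interleaving equation $\phi^\eps\circ\psi=\Sh_N(2\eps)$ is equivalent to $\Phi\Psi=I_n$. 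Combining the two, $(\phi,\psi)$ is an $\eps$-interleaving if and only if $\Psi\Phi=I_n$ and $\Phi\Psi=I_n$, i.e.\ $\Phi$ and $\Psi$ are inverse matrices.

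I would then note the routine point that over a field, for square matrices $\Psi\Phi=I_n$ already forces $\Phi\Psi=I_n$, so one of the two equalities is redundant; but stating both makes the correspondence with the two interleaving equations transparent, so I would keep both in the write-up. The main (and only real) obstacle is the claim that shifting a morphism leaves its matrix unchanged and that $\Sh_M(2\eps)$ has matrix $I_n$ at a suitably deep point; both reduce to the fact that the scalar labelling a staircase-to-staircase morphism is globally constant on the common support (\cref{lem:1_to_lambda}), so once a point deep enough in the positive quadrant is fixed, every map in the relevant commuting square is just ``multiply by a scalar,'' and the functoriality of the matrix assignment from \cref{lem:matrix_rep_lemma} does the rest. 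Everything else is bookkeeping about which summand maps to which.
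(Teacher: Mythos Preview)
Your proposal is correct and follows essentially the same route as the paper: evaluate all morphisms at a single point lying in the support of every staircase summand, observe that compositions correspond to matrix products and that $\Sh_M(2\eps)$ has matrix $I_n$ there, and conclude that the two interleaving equations become $\Psi\Phi=I_n$ and $\Phi\Psi=I_n$. The paper's proof is the same argument in two sentences; your additional remarks about why shifting a morphism leaves its matrix unchanged, and the redundancy of one of the two matrix equations for square matrices, are correct elaborations but not required.
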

\begin{proof}
The composition $\psi^\eps\circ\phi\colon  M \to M^{2\eps}$ is represented by the matrix
$BA$, as one can see by restricting to a single point contained in all relevant staircases as in the observation above. The morphism $\Sh_M(2\eps)\colon M \to M^{2\eps}$ is represented by the identity matrix. By definition, $(\phi,\psi)$ is an interleaving if and only if these are equal and the corresponding statement holds for $\phi^\eps\circ\psi$, so the statement follows.
\end{proof}

As a consequence, we obtain the following intermediate result

\begin{theorem}
\label{1,3-simplified}
Let $(n,P,Q)$ be a $CI$-instance
and let
$S_1,\ldots,S_n$, $T_1,\ldots,T_n$ be staircases such that
\[
d_S(S_i,T_j)=\begin{cases}
3 & \text{if }(i,j)\in P\\
1 & \text{if }(i,j)\notin P
\end{cases}
\quad\quad
d_S(T_j,S_i)=\begin{cases}
3 & \text{if }(j,i)\in Q\\
1 & \text{if }(j,i)\notin Q
\end{cases}
\]

Write $M_i$, $N_j$ for the modules associated to $S_i$, $T_j$, respectively,
and $M:=\oplus M_i$ and $N:=\oplus N_j$. Then

\[
d_I(M,N)=\begin{cases}
1 & \text{if }(n,P,Q)\in CI\\
3 & \text{if }(n,P,Q)\notin CI
\end{cases}
\]
\end{theorem}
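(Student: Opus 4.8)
The plan is to deduce Theorem~\ref{1,3-simplified} from the structural results just established, treating the two cases of membership in $CI$ separately but using essentially the same dictionary. The key dictionary is the following: by Lemma~\ref{lem:1_to_lambda} and Lemma~\ref{lem:matrix_rep_lemma}, a morphism $M \to N^\eps$ is recorded by an $(n\times n)$-matrix $A$ with entries in $\F$, where $A_{j,i}$ is the scalar $\lambda$ of the component morphism $M_i \to N_j^\eps$; and by Lemma~\ref{lem:transformation_lemma}, given the shift distances hypothesized in the statement, for $\eps=1$ the entry $A_{j,i}$ is \emph{forced to be $0$} precisely when $d_S(S_i,T_j)=3$, i.e.\ when $(i,j)\in P$, and is otherwise free. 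Symmetrically, a morphism $N \to M^1$ is an $(n\times n)$-matrix $B$ with $B_{i,j}=0$ forced exactly when $(j,i)\in Q$. And by Lemma~\ref{lem:int_matrix}, the pair $(\phi,\psi)$ is a $1$-interleaving iff $BA=I_n$ (equivalently $AB=I_n$). So a $1$-interleaving of $M$ and $N$ is exactly a solution of the $CI$ instance $(n,P,Q)$ — up to the transpose-and-swap bookkeeping of $P$ versus $Q$, which is harmless since $CI$ is visibly symmetric under $(n,P,Q)\mapsto(n,Q^T,P^T)$ and under transposing $A,B$.

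First I would handle the case $(n,P,Q)\in CI$. Since the interleaving distance only decreases under enlarging $\eps$ and $d_I$ is always $\geq 0$, it suffices to exhibit a $1$-interleaving and to rule out a $0$-interleaving. For existence: take a satisfying assignment $(A,B)$ of the $CI$ instance, and build $\phi\colon M\to N^1$ from $A$ and $\psi\colon N\to M^1$ from $B$ via the dictionary above; these are genuine morphisms because every nonzero entry sits in a free (non-forced) position, so Lemma~\ref{lem:transformation_lemma} applies, and $BA=I_n$ gives a $1$-interleaving by Lemma~\ref{lem:int_matrix}. Hence $d_I(M,N)\leq 1$. For the lower bound $d_I(M,N)>0$: a $0$-interleaving would mean $M\cong N$, but the barcode is an isomorphism invariant and $\{S_1,\dots,S_n\}$ need not equal $\{T_1,\dots,T_n\}$ — in fact the hypotheses force $d_S(S_i,T_i)\in\{1,3\}$, so in particular $S_i\neq T_i$ is possible; more robustly, one simply observes that for $(n,P,Q)\in CI$ to even be meaningful the construction of the $S_i,T_j$ in the eventual proof of Theorem~\ref{1-3} will make them non-isomorphic, and one can alternatively argue directly that no $\eps<1$ interleaving exists by the same forcing argument. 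Concretely: if $\eps<1$ then for any $i$ either $\eps<d_S(S_i,T_j)$ for all $j$ or we still only have access to the entries with $d_S\le\eps<1$, which (since all shift distances are $1$ or $3$) means $d_S(S_i,T_j)\le\eps$ never holds, so $A\equiv 0$, so $BA=0\neq I_n$; hence no $\eps$-interleaving with $\eps<1$ exists and $d_I(M,N)=1$.

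Next, the case $(n,P,Q)\notin CI$. Here I claim $d_I(M,N)=3$. Upper bound: for $\eps=3$, by Lemma~\ref{lem:transformation_lemma} \emph{every} entry of both matrices $A$ and $B$ is free (all shift distances are $\leq 3$), so we may take $A=B=I_n$, obtaining a $3$-interleaving; thus $d_I(M,N)\leq 3$. Lower bound: suppose toward a contradiction that $M$ and $N$ are $\eps$-interleaved for some $\eps<3$. Then there is a $\lceil\eps\rceil$-ish obstruction — more carefully, any $\eps$-interleaving with $\eps<3$ can be post-composed with shift maps to yield an $\eps'$-interleaving for any $\eps\leq\eps'<3$, so without loss of generality we may take $\eps$ arbitrarily close to, but below, $3$; but the forcing in Lemma~\ref{lem:transformation_lemma} only depends on whether $\eps<d_S$ or $\eps\ge d_S$, and for $\eps<3$ we have $\eps<d_S(S_i,T_j)$ exactly when $d_S(S_i,T_j)=3$, i.e.\ when $(i,j)\in P$ (similarly for $Q$). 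Hence such an interleaving gives matrices $A,B$ with exactly the zero-pattern of the $CI$ instance and $BA=I_n$ by Lemma~\ref{lem:int_matrix} — that is, a solution to $(n,P,Q)$, contradicting $(n,P,Q)\notin CI$. Therefore no $\eps$-interleaving with $\eps<3$ exists, and combined with the upper bound $d_I(M,N)=3$.

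The main obstacle, and the point requiring care rather than cleverness, is the bookkeeping in the correspondence between interleaving-matrices and $CI$-matrices: one must be consistent about which index of the matrix runs over which family ($M_i$ versus $N_j$), about the fact that the morphism from $M$ to $N^\eps$ is recorded by a matrix acting in the ``opposite'' direction to what the index set $P\subseteq[n]\times[n]$ might naively suggest, and about the transpose that appears when passing between the conditions ``$\phi^\eps\circ\psi=\Sh$'' and ``$\psi^\eps\circ\phi=\Sh$'' versus ``$AB=I$'' and ``$BA=I$''. Since for square matrices $AB=I_n\iff BA=I_n$, and since $CI$ membership is insensitive to simultaneously transposing both matrices and swapping the roles of $P$ and $Q$, these issues all wash out — but they are exactly what must be written out explicitly and checked against the normalization chosen in Lemma~\ref{lem:int_matrix} and the hypothesized shift-distance tables $d_S(S_i,T_j)$ and $d_S(T_j,S_i)$. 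A secondary minor point is to justify that no $\eps$ strictly between the ``thresholds'' $1$ and $3$ can sneak in: this follows because the shift distances take only the two values $1$ and $3$, so the forcing pattern of Lemma~\ref{lem:transformation_lemma} is locally constant in $\eps$ on $[1,3)$, hence an $\eps$-interleaving for $1\le\eps<3$ exists iff a $1$-interleaving does, reducing the middle range to the two boundary cases already analyzed.
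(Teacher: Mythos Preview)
Your proof is correct and follows essentially the same approach as the paper: translate morphisms to matrices via Lemmas~\ref{lem:1_to_lambda}--\ref{lem:int_matrix}, use Lemma~\ref{lem:transformation_lemma} to identify the forced zero entries with the constraint sets $P$ and $Q$, and then argue the two cases separately (exhibiting $A=B=I_n$ for the $3$-interleaving upper bound, and deriving a CI solution from any putative $\eps$-interleaving with $\eps<3$). Your treatment of the lower bound $d_I(M,N)\geq 1$ in the first case---observing that for $\eps<1$ every entry is forced to zero since all shift distances are $\geq 1$---matches the paper's one-line remark exactly; the extended discussion of indexing conventions and the $(P,Q)\mapsto(Q^T,P^T)$ symmetry is correct but unnecessary, since the paper simply fixes the convention $A_{ij}\leftrightarrow(M_i\to N_j)$ once and the CI constraints line up directly.
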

\begin{proof}
Assume first that $(n,P,Q)\in CI$. Let $A$, $B$ be a solution.
We show that $A$ and $B$ define morphisms from $M$ to $N^1$
and from $N$ to $M^1$. 
We restrict to the map from $M$ to $N^1$, as the other case is symmetric.
By Lemma~\ref{lem:matrix_rep_lemma},
it suffices to show that the map from $M_i$ to $N_j^1$
is a morphism. This map is represented by the entry $A_{ij}$.
If $(i,j)\in P$, $A_{ij}=0$ by assumption, and the $0$-map is
always a morphism. If $(i,j)\notin P$,
$d_S(S_i,T_j)=1$ by construction. Hence, by Lemma~\ref{lem:transformation_lemma}
any field element yields a morphism. This shows that $A$ and $B$ define a pair of valid morphisms, and by Lemma~\ref{lem:int_matrix} this pair is an $1$-interleaving, as $AB=I_n$. Also with Lemma~\ref{lem:transformation_lemma},
it can easily be proved that the only morphism $M \to N^\eps$ with $\eps<1$ is the $0$-map. Hence, $d_I(M,N)=1$ in this case.

Now assume that $(n,P,Q)\notin CI$. It is clear that $M$ and $N$ as constructed
are $3$-interleaved: the matrix $I_n$ yields a valid morphism
from $M$ to $N^3$ and from $N$ to $M^3$ with Lemma~\ref{lem:transformation_lemma}.
Assume for a contradiction that there exists an $\eps$-interleaving
between $M$ and $N$ represented by matrices $A$, $B$,
with $\eps<3$. For $(i,j)\in P$, since $d_s(M_i,N_j)=3>\eps$,
Lemma~\ref{lem:transformation_lemma} implies that the entry $A_{i,j}$
must be equal to $0$. Likewise, $B_{j,i}=0$ whenever $(j,i)\in Q$.
By Lemma~\ref{lem:int_matrix}, $AB=I_n$, and it follows that $A$ and $B$ constitute a solution
to the CI-instance $(n,P,Q)$, a contradiction.
\end{proof}

\paragraph{Construction of the staircases.}
To prove Theorem~\ref{1-3}, it suffices
to construct staircases $S_1,\ldots,S_n$,
$T_1,\ldots,T_n$ with the properties
from Theorem~\ref{1,3-simplified},
in polynomial time.

To describe our construction, we consider to two ``base staircases'' which we depict
in Figure~\ref{fig:base_staircase}. 
In what follows, a shift of a point $a$ by $(1,1)$ means replacing $a$ with the point $a-(1,1)$.
The base staircase $S$ is formed
by the points $(-t,t)$ for $t=-4n^2, -4n^2+2, \ldots,4n^2$, but with the right side 
(i.e., the points with negative $t$) shifted by $(1,1)$. Likewise, the base staircase $T$ consists of the same points, but 
with the left side shifted by $(1,1)$. We observe immediately that the staircase distance of the two base
staircases is equal to $1$ in either direction.
We call the points defining the staircases \emph{corners} from now on. 
\begin{figure}\centering

\begin{tikzpicture}[scale=0.5]

\draw[help lines, color=gray!35] (-5,-5) grid (5,5);




\draw[blue,thick] (-8,12) -- (-8,8) -- (-6,8);
\draw[blue, dashed] (-6,8) -- (-6,6) -- (-4,6);
\draw[blue,thick]  (-4,6) -- (-4,4) -- (-2,4) -- (-2,2) -- (0,2) -- (0,0) -- (1,0) -- (1,-1) --  (1,-3) -- (3,-3) -- (3,-5);
\draw[blue, dashed] (3,-5) -- (5,-5) -- (5,-7);
\draw[blue,thick]  (5,-7)  -- (7,-7) --(7,-9) -- (12,-9);


\draw[red,thick] (-9,12) -- (-9,7) -- (-7,7);
\draw[red, dashed] (-7,7) -- (-7,5) -- (-5,5);
\draw[red,thick]  (-5,5) -- (-5,3) -- (-3,3) -- (-3,1) -- (-1,1) -- (0,1) -- (0,0) -- (2,0) --  (2,-2)-- (4,-2) -- (4,-4);
\draw[red, dashed] (4,-4) -- (6,-4) -- (6,-6);
\draw[red,thick]  (6,-6) -- (8,-6) --(8,-8) -- (12,-8);

\node[scale=0.6,left] at (0.1,-0.3) (o) {$(0,0)$};

\node[scale=0.8,right] at (-8,8.4) (S3) {$(-4n^2,4n^2)$};
\node[scale=0.8,left] at (-9,7) (S3) {$(-4n^2-1,4n^2-1)$};
\node[scale=0.8,left] at (7,-9) (S3) {$(4n^2-1,-4n^2-1)$};
\node[scale=0.8,right] at (8,-7.6) (S3) {$(4n^2,-4n^2)$};

\node[scale=0.9, right] at (-8,12) (S) {$S$};
\node[scale=0.9, left] at (-9,12) (S) {$T$};

\end{tikzpicture}
\caption{The base staircases $S$ and $T$.}
\label{fig:base_staircase}

\end{figure}
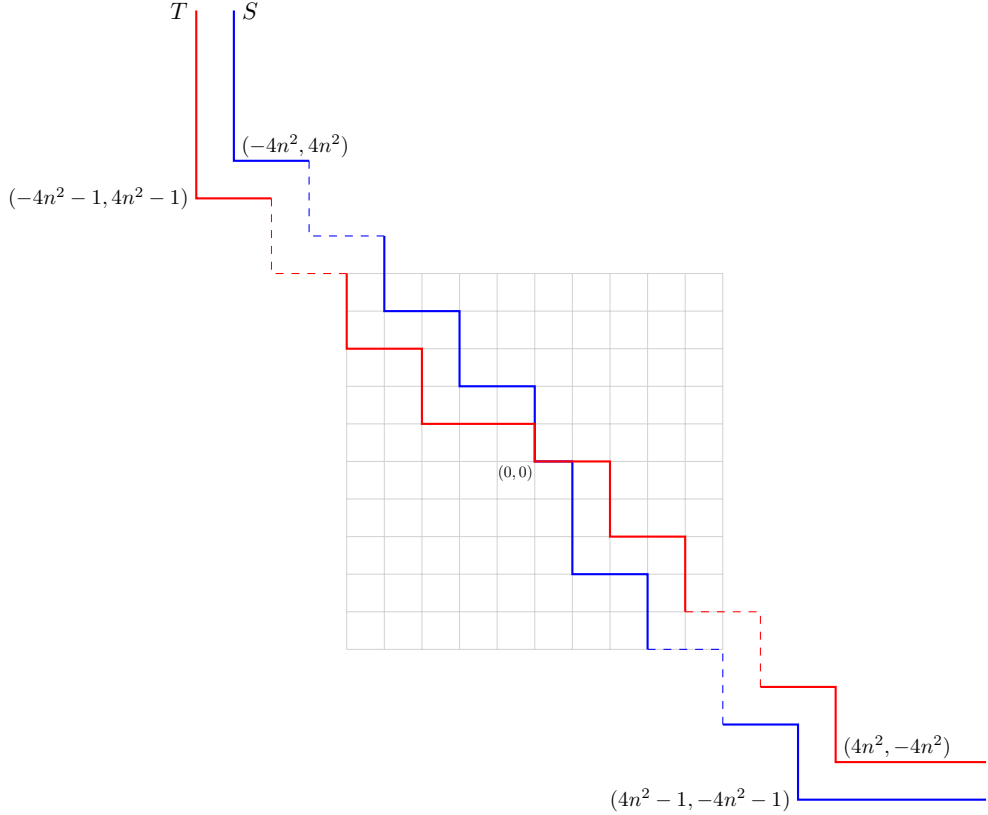

Now we associate to every entry in $P$ a corner in the left side
of $S$ (that is, some $(-t,t)$ with $t>0$). 
We also associate with the entry a corner in $T$, namely the 
shifted point $(-t-1,t-1)$. We do this in a way that between
two associated corners of $S$, there is at least one
corner of the staircase that is not associated. Note that this is
always possible because $|P|\leq n^2$ and we have $2n^2$ corners
on the left side.
We associate corners to entries of $Q$ in the symmetric way, using the right side
of the base staircases. 

We construct the staircases $S_i$ and $T_j$ out of the base staircases $S$
and $T$, only shifting associated corners by $(2,2)$ or $(-2,-2)$ according to
$P$ and $Q$. Specifically, for the staircase $S_i$, 
we start with $S$ and for any entry $(i,j)$ in $P$,
we shift the associated corner of $S$ by $(2,2)$.
For every entry $(j,i)$ in $Q$, we shift the associated corner by $(-2,-2)$.
The resulting (partially) shifted version of $S$ defines $S_i$.

$T_j$ is defined symmetrically: for every $(i,j)\in P$, we shift the associated corner
by $(-2,-2)$. For every $(j,i)\in Q$, we shift the associated corner by $(2,2)$.

We next analyze the staircase distance of $S_i$ and $T_j$. We observe that,
because there is an unassociated corner in-between any two associated corners,
the $\pm (2,2)$ shifts of distinct corners do not interfere with each other.
Hence, it suffices to consider the distance of one associated corner of $S_i$
to $T_j$. Fix the corner $c_S$ of $S$
associated to some entry $(k,\ell)\in P$.
Let $c_T$ denote the associated corner of $T$, that is, $c_T=c_S-(1,1)$.
See Figure~\ref{fig:staircase_shifted} (left) for an illustration.
If $k\neq i$ and $\ell\neq j$, neither $c_S$ nor $c_T$ gets shifted, and since $c_T\leq c_S$,
the shift required from $c_T$ to reach $c_S$ is $0$.
If $k=i$ and $\ell\neq j$, then $c_S$ gets shifted by $(2,2)$, and the required shift is $1$
(see second picture of Figure~\ref{fig:staircase_shifted}).
If $k\neq i$ and $\ell= j$, $c_T$ gets shifted by $(-2,-2)$, the required shift is also $1$
(see $3$rd picture of Figure~\ref{fig:staircase_shifted}).
If $k=i$ and $\ell=j$, both $c_S$ and $c_T$ get shifted, and the distance of the shifted $c_T$
to reach $c_S$ increases to $3$
(see $4$th picture of Figure~\ref{fig:staircase_shifted}).
This argument implies that the (directed) staircase distance
from $S_i$ to $T_j$ is $3$ if $(i,j)\in P$, and $1$ otherwise. A completely symmetric
argument works for $d_S(T_j,S_i)$, inspecting the corners associated to $Q$.

\begin{figure}
\centering
\includegraphics[width=15cm]{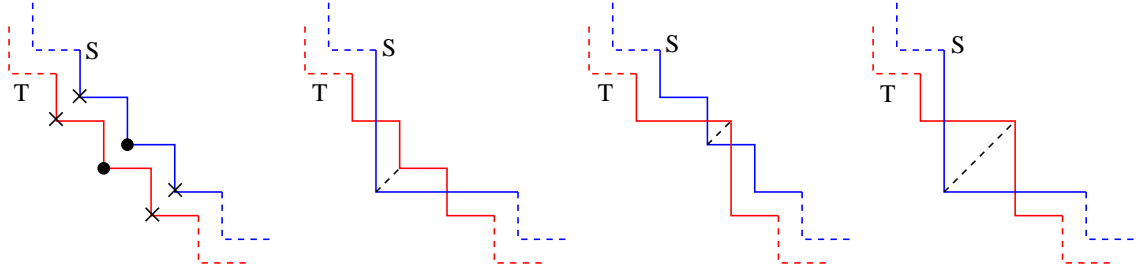}
\caption{Left: The associated corners $c_S$ (on staircase $S$) and $c_T$ (on staircase $T$)
are marked by black circles. The two neighboring corners on both staircases (marked with x)
are not associated and hence not shifted in the construction. 
Second and third picture: the cases $(i,\ell)$ with $\ell\neq j$ and $(k,j)$ with $k\neq i$.
In both cases, the directed staircase distance is $1$, as illustrated
by the dashed line.
Right: The case $(i,j)$. In that case, a shift of $3$ is necessary to move the corner of $T$ to $S$. }
\label{fig:staircase_shifted}
\end{figure}

Finally, it is clear that the size and construction time of each $S_i$ and each $T_j$ is polynomial in $n$.
As remarked at the end of Section~\ref{sec:background}, the staircase module
can be brought in graded matrix representation in polynomial time in $n$,
and the same holds for the direct sum of these modules. 
This finishes the proof of Theorem~\ref{1-3}.

\medskip

With the construction of $M$ and $N$ from Theorem~\ref{1-3} fresh in mind, we can explain the obstacles to obtaining a constant bigger than $3$. Exchanging $3$ with another constant in Theorem~\ref{1,3-simplified} is not a problem; the proof would be exactly the same. The trouble is to construct $S_i$ and $T_j$ satisfying the conditions in Theorem~\ref{1,3-simplified} if $3$ is replaced by some $\eps>3$. In that case, one would have to force $d_S(S_i,T_j)\geq \eps$ for $(i,j)\in P$ and $d_S(T_j,S_i)\geq \eps$ for $(j,i)\in Q$, while still keeping $d_S(S_i,T_j)\leq 1$ for $(i,j)\notin P$ and $d_S(T_j,S_i)\leq 1$ for $(j,i)\notin Q$. 
As we have shown, letting $d_S(S_i,T_j) = 3$ when $(i,j) \in P$ can be done. However, even if $(i,j) \in P$, there might be $i',j'$ such that $(i,j'),(i',j)\notin P$ and $(j',i')\notin Q$, implying
\begin{align*}
d_S(S_i,T_{j'})&\leq 1,\\
d_S(T_{j'},S_{i'})&\leq 1,\\
d_S(S_{i'},T_j)&\leq 1.
\end{align*}
which gives $d_S(S_i,T_j)\leq 3 < \eps$ by the triangle inequality. This proves that one cannot simply increase the constant in Theorem~\ref{1,3-simplified}, change the construction of $S_i$ and $T_j$, and get a better result. That is not to say that using CI problems to improve Theorem~\ref{thm:c_approx_interleaving_is_hard} is necessarily hopeless, but it would not come as a surprise if a radically new approach is needed, if the theorem can be improved at all.

This is related to questions of stability, more precisely of whether $d_B(B(M),B(N)) \leq 3d_I(M,N)$ is true for staircase decomposable modules, where $d_B$ is the bottleneck distance. We have associated pairs of modules to CI problems in a way such that interleavings correspond to solutions of the CI problems. Matchings between the barcodes of the modules (which is what gives rise to the bottleneck distance) correspond to solutions to the CI problems of a particular simple form, namely with a single non-zero entry in each column and row of each matrix. Claiming that $d_B(B(M),B(N)) \leq 3d_I(M,N)$ is then related to claiming that if a CI problem has a solution, then a ``weakening'' of the CI problem has a solution of this simple form. We will not go into details about this, other than to say that there are questions that can be formulated purely in terms of CI problems whose answers could have very interesting consequences for the study of interleavings, also beyond the work done in this paper.



\section{Indecomposable modules}
\label{sec:indecomposable_hardness}

Fix a CI-problem $(n,P,Q)$ as in the previous section and let $M$ and $N$ be the associated persistence modules. We shall now construct two indecomposable persistence modules $\widehat{M}$ and $\widehat{N}$ such that $\widehat{M}$ and $\widehat{N}$ are $\epsilon$-interleaved if and only if $M$ and $N$ are $\epsilon$-interleaved. In what follows we construct $\widehat{M}$; the construction of $\widehat{N}$ is completely analogous.

Recall that a staircase module can be described by a set of generators, or corners. Let $u=(x,x)$ be a point larger than all the corners defining the staircases making up $M$ and $N$. Observe the following: $\dim M_u = n$,  $M_{u\to p}$ is the identity morphism for any $p\geq u$.

Let $s_i = x+7+i/(n+1)$,\footnote{$7$ can be replaced with any $\delta>6$.} for $0 \leq i \leq n+1$. Define $\widehat{M}$ at $p\in \R^2$ as follows
\[\widehat{M}_p = \begin{cases} 
0 & \text{ if } p\geq (s_i, s_{n+1-i}) \text{ for some } 0\leq i\leq n+1,\\
\F & \text { if } p \in [s_i, s_{i+1})\times [s_{n-i}, s_{n-i+1}) \text{ for some } 0\leq i\leq n,\\
M_p & \text{ otherwise. }
\end{cases}
\]
Trivially, $\widehat{M}_{p\to q}$ is the 0 morphism if $\widehat{M}_p =0$ or $\widehat{M}_q = 0$. For $p\leq q$ such that $M_p = \widehat{M}_p$ and $M_q = \widehat{M}_q$, let $\widehat{M}_{p\to q} = M_{p\to q}$, and for $p,q\in [s_i, s_{i+1})\times [s_{n-i}, s_{n-i+1})$ let $\widehat{M}_{p\to q} = 1_\F$. It remains to consider the case that $M_p = \widehat{M}_p$ and $q\in [s_i, s_{i+1})\times [s_{n-i}, s_{n-i+1})$ for some $i$. Observe that all the internal morphisms are fully specified once we define $\widehat{M}_{u\to q}$. Indeed, if $p\geq u$, then $\widehat{M}_{u\to p}$ is the identity, which forces $\widehat{M}_{p\to q} = \widehat{M}_{u\to q}$. For any other $p$ we can always choose an $r\geq p$ such that $r\geq u$ and $M_r = \widehat{M}_r$. The morphism $\widehat{M}_{p\to q}$ is then given by $\widehat{M}_{p\to q} = \widehat{M}_{r\to q} \circ \widehat{M}_{p\to r}  = \widehat{M}_{u\to q} \circ \widehat{M}_{p\to r}$. We conclude by specifying the following morphism

\[\widehat{M}_{u\to q} = 
\begin{cases}
\pi_i \text{ (projection onto coordinate $i$)}&  \text{ if } 1\leq i\leq n\\
\pi_1 + \pi_2 + \ldots + \pi_n& \text{ if } i=0.
\end{cases}
\]
Observe that we have a morphism $\pi^M\colon M\to \widehat{M}$ given by
\[
\pi^M_p = 
\begin{cases}
{\rm id} & \text {if } M_p = \widehat{M}_p\\
\widehat{M}_{u\to p} & \text{otherwise}
\end{cases}
\]
\begin{lemma}
The persistence module $\widehat{M}$ is indecomposable. 
\end{lemma}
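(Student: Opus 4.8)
The plan is to show that any decomposition $\widehat{M} \cong \widehat{M}_1 \oplus \widehat{M}_2$ forces one summand to be the $0$-module, by exploiting the ``diagonal band'' of one-dimensional spaces that $\widehat{M}$ has far out along the diagonal. First I would recall that an endomorphism of $\widehat{M}$ corresponds to a choice of idempotent, so it suffices to show that $\mathrm{End}(\widehat{M})$ has no idempotents other than $0$ and $\mathrm{id}$; equivalently, since $\widehat{M}$ is of finite type, that $\mathrm{End}(\widehat{M})$ is a local ring. I would analyze an arbitrary endomorphism $f\colon \widehat{M}\to \widehat{M}$ by looking at its component $f_u\colon \widehat{M}_u \to \widehat{M}_u = \F^n$, which is an $n\times n$ matrix, together with the scalars $f$ induces on each of the one-dimensional spaces in the band, i.e.\ on the pieces where $\widehat{M}_p=\F$ for $p\in[s_i,s_{i+1})\times[s_{n-i},s_{n-i+1})$.

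The key step is the chain of constraints coming from naturality against the internal maps $\widehat{M}_{u\to q}$ that were pinned down in the construction. For $q$ in the $i$-th band piece with $1\le i\le n$, the map $\widehat{M}_{u\to q}=\pi_i$ is the $i$-th coordinate projection, and commutativity $\widehat{M}_{u\to q}\circ f_u = f_q\circ \widehat{M}_{u\to q}$ reads $\pi_i \circ f_u = \mu_i \,\pi_i$, where $\mu_i\in\F$ is the scalar $f$ acts by on that band piece. This says the $i$-th row of the matrix of $f_u$ is $\mu_i$ times the $i$-th standard basis row, i.e.\ $f_u$ is diagonal with entries $\mu_1,\dots,\mu_n$. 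Then the $i=0$ band piece, where $\widehat{M}_{u\to q}=\pi_1+\cdots+\pi_n$, gives $(\pi_1+\cdots+\pi_n)\circ f_u = \mu_0(\pi_1+\cdots+\pi_n)$, which combined with $f_u=\mathrm{diag}(\mu_1,\dots,\mu_n)$ forces $\mu_1=\mu_2=\cdots=\mu_n=\mu_0=:\mu$. Hence $f_u = \mu I_n$. Next I would propagate $\mu$ back down to all of $\widehat{M}$ on the region where $\widehat{M}_p=M_p$: because $M$ is generated in the sense that every such $p$ admits $r\ge p$ with $r\ge u$, naturality forces $f_p$ to be multiplication by $\mu$ there as well (using that the maps $M_{p\to r}$ into the free part at $u$ are the relevant injections/identities, as in Lemma~\ref{lem:1_to_lambda}'s style of argument). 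Therefore every endomorphism of $\widehat{M}$ is a scalar multiple of the identity, so $\mathrm{End}(\widehat{M})\cong \F$ as a ring, which is local; consequently $\widehat{M}$ is indecomposable.

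The main obstacle I anticipate is the bookkeeping in the ``propagate $\mu$ everywhere'' step: one must check that a single scalar $\mu$ is consistently forced on \emph{every} vector space $\widehat{M}_p$, including points $p$ not below $u$ and points where $M_p$ itself may have dimension greater than $1$ (so the earlier staircase lemmas do not apply verbatim). The way to handle this cleanly is to note that for any $p$ with $\widehat{M}_p=M_p\ne 0$ there is some $r\ge p$ with $r\ge u$ and $\widehat{M}_r=M_r$, and the map $\widehat{M}_{p\to r}=M_{p\to r}$ is injective on the relevant generators (equivalently, $M_{p\to r}$ is injective because $M$ is a direct sum of interval modules whose intervals are upward-closed enough near $u$); then $f_r=\mu\,\mathrm{id}$ and injectivity of $M_{p\to r}$ forces $f_p=\mu\,\mathrm{id}$. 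One also has to separately but trivially handle points where $\widehat{M}_p=0$. Everything else is routine diagram-chasing with the explicitly given internal morphisms.
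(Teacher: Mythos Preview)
Your proposal is correct and follows essentially the same approach as the paper: both arguments show that every endomorphism acts as a scalar at the point $u$ by using naturality against the projections $\pi_i$ for $1\le i\le n$ to force $f_u$ to be diagonal, and then the $i=0$ map $\pi_1+\cdots+\pi_n$ to force all diagonal entries equal. The only cosmetic differences are that the paper invokes the simpler observation ``a decomposition yields a non-scalar projection endomorphism'' rather than the local-ring/idempotent formulation, and it dispatches your ``propagate $\mu$ everywhere'' step in one line by citing the earlier remark that any endomorphism of $\widehat{M}$ is completely determined by its value at $u$; your explicit injectivity argument for $M_{p\to r}$ is a valid way to unpack that remark.
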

\begin{proof}
We recall the following useful trick: if $M$ is \emph{not} indecomposable, say $M\cong M'\oplus M''$, then the projections $M\to M'$ and $M\to M''$ define morphisms which are not given by multiplication with a scalar. Hence, it suffices to show that any endomorphism $\phi\colon \widehat{M}\to \widehat{M}$ is multiplication by a scalar. Furthermore, observe that any endomorphism $\phi$ of $\widehat{M}$ is completely determined by $\phi_{u}$.  
Let $e_i\in \F^n$ denote the vector $(0,\ldots,0, 1,0, \ldots, 0)$ where the non-zero entry appears at the $i$-th index. For $0\leq i\leq n$, $\phi$ must be such that the following diagram commutes
\[
\begin{tikzcd}
\widehat{M}_{u}=\F^n\ar[r]\ar[d, "\phi_{u}"] & \F = \widehat{M}_{(s_i, s_{n-i})}\ar[d, "\phi_{(s_i, s_{n-i})}= \lambda_i"]\\
\widehat{M}_{u}=\F^n\ar[r]& \F = \widehat{M}_{(s_i, s_{n-i})}\\
\end{tikzcd}
\]
For $1\leq i\leq n$ this yields that 
\[\pi_i(\phi_{u}(e_j)) =
\begin{cases}
\lambda_i\in \F & \text{ if } i=j\\
0 & \text { if } i\neq j.
\end{cases}.
\]
In particular, we see that $\phi_{u}(e_i) = \lambda_ie_i$.  For $i=0$ we get 
\[\lambda_0e_i = \lambda_0\cdot (\pi_1+ \ldots + \pi_n)(e_i) = (\pi_1+\ldots+\pi_n)(\lambda_ie_i) = \lambda_ie_i.\]
We conclude that $\lambda_i = \lambda_0$ and that $\phi_{u} = \lambda_0\cdot {\rm id}$. 
\end{proof}

\begin{figure}\centering

\begin{tikzpicture}[scale=1]

\draw[dashed, pattern=horizontal lines, pattern color=black!15] (-6,3) -- (-6,7) -- (-3, 7) --  (-3,6.5) -- (-2.9, 6.5) -- (-2.9, 6.4) -- (-2.8,6.4) -- (-2.8, 6.3) -- (-2.7, 6.3) -- (-2.7, 6.2) -- (-2.6, 6.2) -- (-2.6, 6.1) -- (-2.5, 6.1) --  (-2.5, 6.0) -- (-2,6) -- (-2,3) --  cycle;

\draw[->] (-6,3) -- (-6,7.1);
\draw[->] (-6,3) -- (-1.9,3);

\draw[line width=1mm] (-3.3, 5.7) rectangle (-2.2, 6.8); 
\draw[line width=1mm] (-1, -1) rectangle (7, 7);

\draw[draw opacity=0, pattern=horizontal lines, pattern color=black!15] (-1,-1) -- (-1,7) -- (0, 7) -- (0, 0) -- (7,0) -- (7,-1) --  cycle;

\fill [black!10] (0,4) rectangle (1,5);
\fill [black!10] (1,3) rectangle (2,4);
\fill [black!10] (2,2) rectangle (3,3);
\fill [black!10] (3,1) rectangle (4,2);
\fill [black!10] (4,0) rectangle (5,1);

\draw[draw opacity=0, pattern=horizontal lines, pattern color=black!15]  (0,5) --  (0,4)  -- (1,4) -- (1,3) -- (2,3) -- (2,2) -- (3,2) -- (3,1) -- (4,1) -- (4,0) -- (5,0) -- (0,0) -- cycle;

\draw  (0,5) -- (0,4)  -- (1,4) -- (1,3) -- (2,3) -- (2,2) -- (3,2) -- (3,1) -- (4,1) -- (4,0) -- (5,0);

\draw[dashed]  (0,7) -- (0,5) --  (1,5) node[right] {$(s_1, s_5)$} -- (1,4) -- (2,4) node[right] {$(s_2, s_4)$} -- (2,3) -- (3,3) node[right] {$(s_3, s_3)$}-- (3,2) -- (4,2) node[right] {$(s_4, s_2)$}-- (4,1) -- (5,1) node[right] {$(s_5, s_1)$}-- (5,0) -- (7,0);

\node[scale=1.2] at (0.5,4.5) (a) {$\F$};
\node[scale=1.2]  at (1.5,3.5) (b) {$\F$};
\node[scale=1.2]  at (2.5,2.5) (c) {$\F$};
\node[scale=1.2]  at (3.5,1.5) (d) {$\F$};
\node[scale=1.2] at (4.5,0.5) (e) {$\F$};
\node[scale=1.6] at (0.35,0.3) (m) {$\F^4$};
\node[below] at (0.35, 0.0) (mm) {$u+(7,7)$};
\node[scale=1.6] at (5,5) (o) {$0$};

\draw[->] (m) -- node[scale=0.8, pos=0.6, sloped, below] {$\pi_1+\pi_2+\pi_3+\pi_4~~~$} (a);
\draw[->] (m) -- node[scale=0.8,right] {$\pi_1$} (b);
\draw[->] (m) --  node[scale=0.8,right] {$\pi_2$} (c);
\draw[->] (m) --  node[scale=0.8,below] {$\pi_3$} (d);
\draw[->] (m) --  node[scale=0.8, below] {$\pi_4$} (e);

\end{tikzpicture}
\caption{Left: $\widehat{M}$ coincides with $M$ on the restriction to the shaded subset of $\R^2$. Right: this shows the modification done to $M$ in order to obtain an indecomposable persistence module $\widehat{M}$ in the case $n=4$.   }
\label{fig:indecomp_illu}
\end{figure}
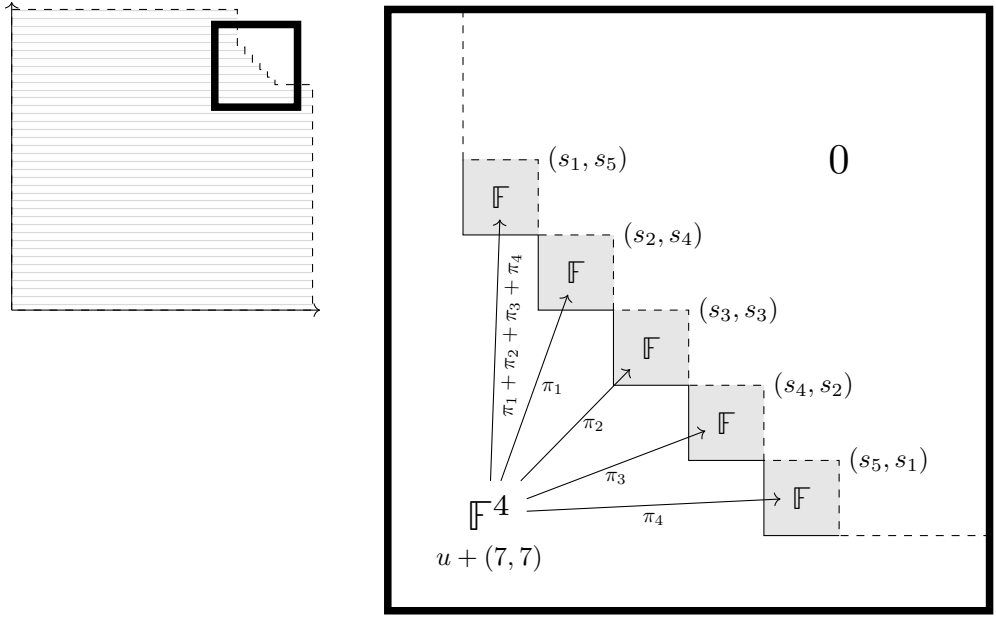

\begin{lemma}
Fix $1\leq \epsilon\leq 3$. $\widehat{M}$ and $\widehat{N}$ are $\epsilon$-interleaved if and only if $M$ and $N$ are $\epsilon$-interleaved.
\end{lemma}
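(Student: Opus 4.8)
The plan is to translate statements about interleavings of $\widehat M,\widehat N$ into statements about interleavings of $M,N$ via the canonical morphisms $\pi^M\colon M\to\widehat M$ and $\pi^N\colon N\to\widehat N$ defined above, using that each $\pi^M_p$ (resp.\ $\pi^N_p$) is \emph{surjective} --- it is the identity where $\widehat M$ agrees with $M$, a coordinate projection on a gadget cell, and the zero map onto $0$ above. Two elementary geometric observations carry the argument. (i) Since $\delta>6\ge 2\epsilon$, the points $u$, $u+\epsilon$, $u+2\epsilon$ have both coordinates strictly below $s_0$, so $\widehat M$ and $\widehat N$ coincide with $M$ and $N$ on $\{p: p_x<s_0 \text{ or } p_y<s_0\}$ (in particular at these three points), where moreover the structure maps between points $\ge u$ are all the identity of $\F^n$; this is where $\epsilon\le 3$ is used. (ii) If $p$ lies in a gadget cell $[s_i,s_{i+1})\times[s_{n-i},s_{n-i+1})$ of $\widehat M$, then $p+\epsilon\ge(s_i,s_{n+1-i})$ because $\epsilon\ge 1>\tfrac1{n+1}$, hence $\widehat N_{p+\epsilon}=0$; the same holds trivially if $p$ is in the $0$-region of $\widehat M$, so $\pi^N_{p+\epsilon}=0$ whenever $\widehat M_p\ne M_p$. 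This is where $\epsilon\ge 1$ is used.

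For ``$M,N$ $\epsilon$-interleaved $\Rightarrow$ $\widehat M,\widehat N$ $\epsilon$-interleaved'', start from an interleaving $(\phi,\psi)$ of $M,N$ and define $\widehat\phi\colon\widehat M\to\widehat N^\epsilon$ pointwise by letting $\widehat\phi_p$ be the unique linear map with $\widehat\phi_p\circ\pi^M_p=\pi^N_{p+\epsilon}\circ\phi_p$, and $\widehat\psi$ symmetrically. Well-definedness amounts to checking that $\pi^N_{p+\epsilon}\circ\phi_p$ vanishes on $\ker\pi^M_p$; this is vacuous where $\pi^M_p$ is injective, and on the gadget of $\widehat M$ it holds by observation (ii), since there $\pi^N_{p+\epsilon}=0$. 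That $\widehat\phi,\widehat\psi$ are morphisms, and that $\widehat\psi^\epsilon\circ\widehat\phi=\Sh_{\widehat M}(2\epsilon)$ and $\widehat\phi^\epsilon\circ\widehat\psi=\Sh_{\widehat N}(2\epsilon)$, then follow by precomposing each desired identity with the epimorphism $\pi^M$ (resp.\ $\pi^N$) and pushing $\pi$ through, using naturality of $\pi^M,\pi^N,\phi,\psi$, functoriality of the shift, and the interleaving identities for $(\phi,\psi)$ together with naturality of $\pi^M$ along the pair $(p,p+2\epsilon)$.

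For the converse, let $(\widehat\phi,\widehat\psi)$ be an $\epsilon$-interleaving of $\widehat M,\widehat N$ and set $A:=\widehat\phi_u$, $B:=\widehat\psi_u$, which by observation (i) are honest endomorphisms of $\F^n=M_u=N_u$. I claim $A$ obeys the zero pattern of \cref{lem:transformation_lemma,lem:matrix_rep_lemma}, i.e.\ its $(i,j)$-entry is $0$ whenever $d_s(S_i,T_j)>\epsilon$: in that case $S_i\not\subseteq T_j^\epsilon$ and, since $T_j^\epsilon$ is an up-set, already some corner $a\le u$ of $S_i$ satisfies $a+\epsilon\notin T_j$, so the internal map $\widehat N_{a+\epsilon\to u+\epsilon}=N_{a+\epsilon\to u+\epsilon}$ has image inside $\operatorname{span}\{e_l: a+\epsilon\in T_l\}$, which omits $e_j$; evaluating naturality of $\widehat\phi$ for the pair $a\le u$ on the generator of the $i$-th summand of $M_a$ then shows $A(e_i)$ has vanishing $j$-th coordinate. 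Hence by \cref{lem:transformation_lemma,lem:matrix_rep_lemma} there is a morphism $\phi\colon M\to N^\epsilon$ with matrix $A$, and likewise $\psi\colon N\to M^\epsilon$ with matrix $B$. Evaluating $\widehat\psi^\epsilon\circ\widehat\phi=\Sh_{\widehat M}(2\epsilon)$ and $\widehat\phi^\epsilon\circ\widehat\psi=\Sh_{\widehat N}(2\epsilon)$ at $u$ --- where, again by observation (i), all shift maps involved are the identity of $\F^n$ --- yields $BA=AB=I_n$, so $(\phi,\psi)$ is an $\epsilon$-interleaving of $M,N$ by \cref{lem:int_matrix}.

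I expect the crux to be observation (ii) and the associated well-definedness check: the point of the construction is that an $\epsilon$-shift moves the entire gadget of $\widehat M$ (cells and $0$-region) into the part of $\widehat N$ that is $0$, which is exactly what makes the a priori ill-defined assignment ``$\pi^N\circ\phi\circ(\pi^M)^{-1}$'' actually well-defined; once this and its symmetric counterpart are in place, both directions reduce to diagram chases exploiting that $\pi^M,\pi^N$ are pointwise surjective and agree with $M,N$ near $u$.
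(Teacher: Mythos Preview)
Your argument is correct. The forward direction is essentially the paper's proof repackaged: your factorization $\widehat\phi_p\circ\pi^M_p=\pi^N_{p+\epsilon}\circ\phi_p$ through the epimorphism $\pi^M$ yields exactly the paper's piecewise definition $\widehat\phi_p=\pi^N_{p+\epsilon}\circ\phi_p$ on $\{M_p=\widehat M_p\}$ and $\widehat\phi_p=0$ elsewhere, and your observation~(ii) is precisely the paper's case split.

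The converse is where you deviate. The paper defines $\phi$ pointwise by $\phi_p=\widehat\phi_u$ for $p\ge u$ and $\phi_p=\widehat\phi_p$ otherwise, checks the interleaving identity at all $p\le u$ directly, and then says the case $p\not\le u$ ``follows readily'' from $\phi_u,\psi_u$ being mutual inverses. You instead extract the matrix $A=\widehat\phi_u$, verify its zero pattern by evaluating naturality of $\widehat\phi$ at a corner $a$ of $S_i$ with $a+\epsilon\notin T_j$, and then invoke \cref{lem:transformation_lemma,lem:matrix_rep_lemma,lem:int_matrix} to conclude. Your route is arguably tidier: it makes explicit why $A$ defines a genuine morphism $M\to N^\epsilon$ (a point the paper leaves implicit in its pointwise definition), and it reduces the interleaving verification to the single algebraic check $BA=AB=I_n$ at $u$, delegating everything else to the already-proved matrix lemmas. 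The paper's approach, on the other hand, avoids re-proving the zero pattern by simply reusing the values of $\widehat\phi$ below $u$. Both are short; yours leans more on the structural lemmas, the paper's more on direct inspection.
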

\begin{proof}
Assume that $\phi\colon M\to N^\epsilon$ and $\psi\colon N\to M^\epsilon$ form an $\epsilon$-interleaving pair. Define $\widehat{\phi}\colon \widehat{M}\to \widehat{N}^\eps$ and $\widehat{\psi}\colon \widehat{N}\to \widehat{M}^\epsilon$ by 
\[
\widehat{\phi}_p =
\begin{cases}
\pi^N_{p+\epsilon}\circ \phi_p& \text{ if } M_p = \widehat{M}_p\\
0 &  \text{ otherwise.}
\end{cases},\qquad\qquad \widehat{\psi}_p =
\begin{cases}
\pi^M_{p+\epsilon}\circ \psi_p& \text{ if } N_p = \widehat{N}_p\\
0 &  \text{ otherwise.}
\end{cases}
\]
We will show that these two morphisms constitute an $\epsilon$-interleaving pair. Let $p\in \R^2$ and consider the following two cases:
\begin{enumerate}
\item Assume that $N_{p+\epsilon} = \widehat{N}_{p+\epsilon}$. Under this assumption we have that $\pi^N_{p+\epsilon} = {\rm id}$, and thus $\phi_p = \widehat{\phi}_p$. Using that $\psi$ and $\phi$ form an $\epsilon$-interleaving pair, and that $\pi^M_p = {\rm id}$, we get:
\[\widehat{\psi}_{p+\epsilon}\circ \widehat{\phi}_p = \pi^M_{p+2\epsilon}\circ \psi_{p+\epsilon}\circ \phi_p = \pi^M_{p+2\epsilon}\circ M_{p\to p+2\epsilon} = \widehat{M}_{p\to p+2\epsilon}\circ \pi^M_p = \widehat{M}_{p\to p+2\epsilon}.\]
\item Assume that $N_{p+\epsilon}\neq \widehat{N}_{p+\epsilon}$. Since $\epsilon\geq 1$, it follows by construction that $\widehat{M}_{p+2\epsilon} = 0$. Hence, the interleaving condition is trivially satisfied. 
\end{enumerate}
Symmetrically we get that $\widehat{\phi}_{p+\epsilon}\circ \widehat{\phi}_p = \widehat{N}_{p\to p+2\epsilon}$. Hence, $\widehat{M}$ and $\widehat{N}$ are $\epsilon$-interleaved.

Conversely, assume that $\widehat{\phi}$ and $\widehat{\psi}$ define an interleaving pair between $\widehat{M}$ and $\widehat{N}$. Define $\phi\colon M\to N^\eps$ and $\psi\colon N\to M^\eps$ by 
\[
\phi_p = 
\begin{cases}
\widehat{\phi}_u  & \text{ if } p\geq u\\
\widehat{\phi}_p & \text{ otherwise}
\end{cases},\qquad\qquad 
\psi_p = 
\begin{cases}
\widehat{\psi}_u  & \text{ if } p\geq u\\
\widehat{\psi}_p & \text{ otherwise}
\end{cases}.\qquad\qquad 
\]
By construction,  $\widehat{M}_p = M_p$ and $\widehat{N}_p = N_p$ for all $p<u+(7,7)$. This implies that $\widehat{\phi}_p = \widehat{\phi_u}$ and $\widehat{\psi}_p = \widehat{\psi}_u$ for all $p< u + (7-\epsilon,7-\epsilon)$. Hence, for any $p\leq u$ we must have that \[\psi_{p+\epsilon}\circ \phi_p = \widehat{\psi}_{p+\epsilon}\circ \widehat{\phi}_p = \widehat{M}_{p\to p+2\epsilon} = M_{p\to p+2\epsilon}.\]
Similarly we get that $\phi_{p+\epsilon}\circ \psi_p = N_{p\to p+2\epsilon}$ for all such $p$. In particular, by considering the case $p=u$, we see that $\phi_u$ and $\psi_u$ are mutually inverse matrices. It follows readily that the interleaving condition is satisfied for all $p\not\leq u$. 
\end{proof}

With the two previous results at hand, 
we can state the following corollary 
of \cref{thm:c_approx_interleaving_is_hard}.
\begin{corollary}
\textsc{1-interleaving} is NP-complete and
\textsc{$c$-Approx-Interleaving-Distance} is NP-hard for $c<3$, even 
if the input modules are restricted to indecomposable modules.
\end{corollary}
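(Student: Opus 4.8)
The plan is to combine the reduction of \cref{1-3} with the two lemmas just proved, and then repeat the reasoning of \cref{thm:1_interleaving_np_complete} and \cref{thm:c_approx_interleaving_is_hard} for the modified modules. Given a CI instance $(n,P,Q)$, first apply \cref{1-3} to build in polynomial time the staircase-decomposable pair $(M,N)$ with $d_I(M,N)=1$ if $(n,P,Q)\in CI$ and $d_I(M,N)=3$ otherwise. Then pass to $\widehat M$ and $\widehat N$ via the construction of this section. By the first lemma above, both $\widehat M$ and $\widehat N$ are indecomposable, and by the second lemma they are $\epsilon$-interleaved if and only if $M$ and $N$ are, for every $\epsilon\in[1,3]$. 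Since $M$ and $N$ are $3$-interleaved in either case and are $1$-interleaved exactly when $(n,P,Q)\in CI$, the same holds for $\widehat M$ and $\widehat N$; moreover $d_I(\widehat M,\widehat N)=3$ when $(n,P,Q)\notin CI$, because any $\epsilon$-interleaving with $\epsilon<3$ would, by composition with internal maps, yield an $\epsilon'$-interleaving for all $\epsilon'\geq\epsilon$, hence in particular a $1$-interleaving, contradicting that $M$ and $N$ are not $1$-interleaved in that case.

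From here the corollary follows as before. Membership in NP is inherited from \cref{thm:1_interleaving_np_complete}: restricting the input class to indecomposable modules cannot push the problem out of NP, and $\widehat M,\widehat N$ are themselves presented in graded matrix representation. For hardness of \textsc{1-Interleaving}, an oracle for the problem run on $(\widehat M,\widehat N)$ reports \emph{true} precisely when $(n,P,Q)\in CI$. For hardness of \textsc{$c$-Approx-Interleaving-Distance} with $c<3$, an oracle for it run on $(\widehat M,\widehat N)$ returns a value in $[0,c]\subseteq[0,3)$ when $(n,P,Q)\in CI$ (since $d_I(\widehat M,\widehat N)\leq 1$ there) and a value $\geq 3$ when $(n,P,Q)\notin CI$ (since $d_I(\widehat M,\widehat N)=3$ there), so thresholding at $3$ decides CI. As CI is NP-hard by \cref{thm:ci_np_complete}, both problems are NP-hard on indecomposable inputs.

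The one step that needs a brief check is that $\widehat M$ and $\widehat N$ really admit a graded matrix representation computable in polynomial time in $n$. This is routine rather than deep: the construction modifies $M$ only on the bounded region above $u$, where it replaces the top of $M$ by the $n+1$ one-dimensional blocks $[s_i,s_{i+1})\times[s_{n-i},s_{n-i+1})$ (realized as images of the coordinate projections and their sum out of $\widehat M_u\cong\F^n$) and then kills everything further up. Concretely one keeps the polynomially sized representation of $M$ and appends $O(n)$ additional relations graded at the points $(s_i,s_{n-i})$ and $(s_i,s_{n+1-i})$ enforcing the prescribed quotients and the vanishing. I expect this bookkeeping to be the only mild obstacle; the substantive content is already contained in \cref{1-3} and in the two lemmas above.
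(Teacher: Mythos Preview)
Your approach matches the paper's: build $(M,N)$ via \cref{1-3}, pass to the indecomposables $(\widehat M,\widehat N)$, invoke the two preceding lemmas, and rerun the reductions of \cref{thm:1_interleaving_np_complete} and \cref{thm:c_approx_interleaving_is_hard}. The NP-membership remark and the polynomial-size presentation sketch are also in line with the paper (which phrases the latter as introducing up to $n$ relations at the lower-left corner of each of the $n+1$ rectangles).

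There is, however, one faulty step in your justification that $d_I(\widehat M,\widehat N)=3$ when $(n,P,Q)\notin CI$. You write that an $\epsilon$-interleaving of $\widehat M,\widehat N$ with $\epsilon<3$ would, by composition with internal maps, give an $\epsilon'$-interleaving for all $\epsilon'\geq\epsilon$ and ``hence in particular a $1$-interleaving''. That inference goes the wrong way: composition enlarges the interleaving parameter, so for $\epsilon\in(1,3)$ you never reach $\epsilon'=1$. The argument then cannot terminate in ``$M$ and $N$ are $1$-interleaved''. The clean fix is a two-case split. If $\epsilon\in[1,3)$, apply the second lemma directly at that $\epsilon$ to obtain an $\epsilon$-interleaving of $M$ and $N$, contradicting $d_I(M,N)=3$. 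If $\epsilon<1$, first compose up to $\epsilon'=1$, then apply the lemma at $\epsilon'=1$ to get the same contradiction. With this correction, your proof goes through and is essentially the paper's.
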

\begin{proof}
We only prove hardness of \textsc{1-interleaving},
the remaining statements follow with the same methods.
Given a CI-instance $(n,P,Q)$, we use the construction 
from Section~\ref{sec:reduction} to construct two persistence modules $M$ and $N$. Then we transform them into
the indecomposable modules $\widehat{M}$ and $\widehat{N}$ as above.
Note that this transformation can be performed in polynomial time in $n$
by introducing up to $n$ relations at the lower-left corners of the $(n+1)$
rectangles in Figure~\ref{fig:indecomp_illu}.
Hence, an algorithm to decide \textsc{$1$-interleaving} for the case
of indecomposable modules would solve CI in polynomial time.
\end{proof}

\section{One-sided stability}
\label{sec:one_sided}
The results of the previous sections also apply in the setting of one-sided stability. Here we give a brief introduction to the topic; see \cite{bauer2014induced} for a thorough introduction. 

Let $f\colon M\to N$ be a morphism. The linear map $M_{p\to q}$ 
induces a linear map $\ker(f_p)\to\ker(f_q)$ by restriction,
and $N_{p\to q}$ induces a linear map $\coker(f_p)\to\coker(f_q)$
by taking a quotient, as one can readily verify.
We say  that $f$ has \emph{$\epsilon$-trivial kernel} if the map 
$\ker(f_p)\to \ker(f_{p+\epsilon})$ is the $0$-map for all $p\in\R^2$.
Likewise, we say that $f$ has \emph{$\epsilon$-trivial cokernel}
if 
$\coker(f_p)\to \coker(f_{p+\epsilon})$ is the $0$-map for all $p\in\R^2$. If $f$ has $0$-trivial kernel (cokernel), then we say that $f$ is \emph{injective} (\emph{surjective}). The following lemma follows readily from the definition of an $\epsilon$-interleaving.
\begin{lemma}
\label{lem:interleaving_implies_trivialcokernel}
If $f\colon M\to N^\epsilon$ is an $\epsilon$-interleaving morphism (i.e., it forms an $\epsilon$-interleaving with some $g:N\to M^\epsilon$), 
then $f$ has $2\epsilon$-trivial kernel and cokernel. 
\end{lemma}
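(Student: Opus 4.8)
The plan is to unwind the definition of an $\epsilon$-interleaving and exploit the factorization that it provides. Suppose $f\colon M\to N^\epsilon$ forms an $\epsilon$-interleaving with $g\colon N\to M^\epsilon$, so that $g^\epsilon\circ f = \Sh_M(2\epsilon)$ and $f^\epsilon\circ g = \Sh_N(2\epsilon)$. Concretely, at each point $p$ we have linear maps $f_p\colon M_p\to N_{p+\epsilon}$ and $g_p\colon N_p\to M_{p+\epsilon}$ with $g_{p+\epsilon}\circ f_p = M_{p\to p+2\epsilon}$ and $f_{p+\epsilon}\circ g_p = N_{p\to p+2\epsilon}$. First I would treat the kernel. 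Fix $p$ and let $x\in\ker(f_p)$. Then $M_{p\to p+2\epsilon}(x) = g_{p+\epsilon}(f_p(x)) = g_{p+\epsilon}(0) = 0$, so the image of $x$ in $\ker(f_{p+2\epsilon})$ under the restricted internal map is zero. Since $x$ was arbitrary, the induced map $\ker(f_p)\to\ker(f_{p+2\epsilon})$ is the zero map; as this holds for all $p$, $f$ has $2\epsilon$-trivial kernel.

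Next I would handle the cokernel, which is the dual computation but requires a little more care because we are now working with quotients. Fix $p$ and let $[y]\in\coker(f_p) = N_p/\mathrm{im}(f_p)$ be the class of some $y\in N_p$. The induced map $\coker(f_p)\to\coker(f_{p+2\epsilon})$ sends $[y]$ to the class of $N_{p\to p+2\epsilon}(y)$ in $N_{p+2\epsilon}/\mathrm{im}(f_{p+2\epsilon})$. Now $N_{p\to p+2\epsilon}(y) = f_{p+\epsilon}(g_p(y))$ by the interleaving relation, and $g_p(y)\in M_{p+\epsilon}$, so $f_{p+\epsilon}(g_p(y))\in\mathrm{im}(f_{p+\epsilon})$. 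But $\mathrm{im}(f_{p+\epsilon})$ maps into $\mathrm{im}(f_{p+2\epsilon})$ under $N_{(p+\epsilon)\to(p+2\epsilon)}$: indeed $N_{(p+\epsilon)\to(p+2\epsilon)}\circ f_{p+\epsilon} = f_{p+2\epsilon}\circ M_{(p+\epsilon)\to(p+2\epsilon)}$ by naturality of $f$. Hence $N_{p\to p+2\epsilon}(y)\in\mathrm{im}(f_{p+2\epsilon})$, so its class in $\coker(f_{p+2\epsilon})$ vanishes. Since $[y]$ was arbitrary, the induced map $\coker(f_p)\to\coker(f_{p+2\epsilon})$ is zero, and as this holds for all $p$, $f$ has $2\epsilon$-trivial cokernel.

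The only genuinely subtle point is the bookkeeping in the cokernel case: one must check that the map $\coker(f_p)\to\coker(f_{p+2\epsilon})$ is indeed induced by $N_{p\to p+2\epsilon}$ in the way claimed (which uses naturality of $f$ to see that $\mathrm{im}(f_p)$ lands inside $\mathrm{im}(f_{p+2\epsilon})$, so the quotient map is well-defined), and that the image of $g_p(y)$ under $f_{p+\epsilon}$, once pushed forward to grade $p+2\epsilon$, stays in the image of $f$. Both are immediate consequences of the commuting-square conditions already recorded in \cref{sec:background}, so no new machinery is needed; the argument is a direct diagram chase. I expect no real obstacle here — the lemma is essentially a restatement of the interleaving identities $g^\epsilon\circ f=\Sh_M(2\epsilon)$ and $f^\epsilon\circ g=\Sh_N(2\epsilon)$ at the level of kernels and cokernels.
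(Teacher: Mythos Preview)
Your argument is essentially the same as the paper's (the paper in fact omits the proof entirely, remarking that it ``follows readily from the definition of an $\epsilon$-interleaving''), and your kernel computation is clean and correct.

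There is, however, a bookkeeping slip in the cokernel half that you should fix. Since $f\colon M\to N^\epsilon$, the codomain of $f_p$ is $(N^\epsilon)_p=N_{p+\epsilon}$, so $\coker(f_p)=N_{p+\epsilon}/\mathrm{im}(f_p)$, not $N_p/\mathrm{im}(f_p)$; likewise the induced map $\coker(f_p)\to\coker(f_{p+2\epsilon})$ comes from $(N^\epsilon)_{p\to p+2\epsilon}=N_{p+\epsilon\to p+3\epsilon}$, and your naturality square $N_{(p+\epsilon)\to(p+2\epsilon)}\circ f_{p+\epsilon}=f_{p+2\epsilon}\circ M_{(p+\epsilon)\to(p+2\epsilon)}$ does not type-check (the codomain of $f_{p+\epsilon}$ is $N_{p+2\epsilon}$). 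Once you correct the indices, the argument actually becomes shorter: for $y\in N_{p+\epsilon}$ the interleaving identity $f^\epsilon\circ g=\Sh_N(2\epsilon)$ at grade $p+\epsilon$ gives $N_{p+\epsilon\to p+3\epsilon}(y)=f_{p+2\epsilon}\bigl(g_{p+\epsilon}(y)\bigr)\in\mathrm{im}(f_{p+2\epsilon})$ directly, with no need for the extra naturality step you added. The slip is purely notational and does not reflect a missing idea.
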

In fact, Bauer and Lesnick~\cite{bauer2014induced} 
show that in the case of persistence modules over $\R$, 
$M$ and $N$ are $\epsilon$-interleaved if and only if 
there exists a morphism $f\colon M\to N^\eps$ with $2\epsilon$-trivial 
kernel and cokernel. 
They also observe that this equivalence does not generalize to two parameters.
However, it \emph{is} true 
(and the proof is very similar to the one given below) 
that if there exists a morphism $f\colon M\to N^\eps$ with 
$\epsilon$-trivial kernel and cokernel, 
then $M$ and $N$ are $\epsilon$-interleaved. 
Hence, there is a close connection between interleavings and morphisms 
with kernels and cokernels of bounded size 
also in the multi-parameter landscape. 

\begin{lemma}
For any injective $f\colon M\to N^\eps$ with $2\epsilon$-trivial cokernel, there exists a morphism $g\colon N\to M^\eps$ such that $f$ and $g$ constitute an $\epsilon$-interleaving pair.
\label{lem:mono-1}
\end{lemma}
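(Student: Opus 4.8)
The plan is to build the desired $g\colon N\to M^\epsilon$ directly and pointwise, by ``inverting $f$ along the structure maps of $N$''. Since $f$ is a pointwise monomorphism, each $f_p$ restricts to an isomorphism onto $\operatorname{im}(f_p)$; the $2\epsilon$-trivial cokernel hypothesis will guarantee that the relevant structure maps of $N$ land inside these images, so that the pointwise partial inverses of $f$ assemble into a genuine morphism. This is essentially the construction Bauer and Lesnick use in the single-parameter setting, and the point is to check that it still works in two parameters.

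First I would rewrite the cokernel hypothesis in a usable form. The cokernel of $f\colon M\to N^\epsilon$ at $q$ is $\coker(f_q)=N_{q+\epsilon}/\operatorname{im}(f_q)$, and $2\epsilon$-triviality says that the map $\coker(f_q)\to\coker(f_{q+2\epsilon})$ induced by $N_{q+\epsilon\to q+3\epsilon}$ is zero, i.e.\ $N_{q+\epsilon\to q+3\epsilon}(N_{q+\epsilon})\subseteq\operatorname{im}(f_{q+2\epsilon})$. Reindexing with $r=q+\epsilon$ this becomes
\[
\operatorname{im}\bigl(N_{r\to r+2\epsilon}\bigr)\ \subseteq\ \operatorname{im}\bigl(f_{r+\epsilon}\bigr)\qquad\text{for all }r\in\R^2 .
\]
Writing $f_{r+\epsilon}^{-1}$ for the inverse of the injective map $f_{r+\epsilon}\colon M_{r+\epsilon}\to N_{r+2\epsilon}$ on its image, the displayed inclusion lets me define
\[
g_p\ :=\ f_{p+\epsilon}^{-1}\circ N_{p\to p+2\epsilon}\ \colon\ N_p\longrightarrow M_{p+\epsilon}=M^\epsilon_p .
\]

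It then remains to verify that $g$ is a morphism and that $(f,g)$ is an $\epsilon$-interleaving, and each check is a one-line diagram chase using injectivity of $f$ together with the naturality identity $N_{a+\epsilon\to b+\epsilon}\circ f_a=f_b\circ M_{a\to b}$. The relation $f^\epsilon\circ g=\Sh_N(2\epsilon)$ is immediate, since $f_{p+\epsilon}\circ g_p=f_{p+\epsilon}\circ f_{p+\epsilon}^{-1}\circ N_{p\to p+2\epsilon}=N_{p\to p+2\epsilon}$. For $g^\epsilon\circ f=\Sh_M(2\epsilon)$, naturality of $f$ gives $N_{p+\epsilon\to p+3\epsilon}\circ f_p=f_{p+2\epsilon}\circ M_{p\to p+2\epsilon}$, so $g_{p+\epsilon}\circ f_p=f_{p+2\epsilon}^{-1}\circ f_{p+2\epsilon}\circ M_{p\to p+2\epsilon}=M_{p\to p+2\epsilon}$ by injectivity of $f_{p+2\epsilon}$. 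Finally, for $g$ to be a morphism one needs $M_{p+\epsilon\to q+\epsilon}\circ g_p=g_q\circ N_{p\to q}$ whenever $p\le q$; I would apply the injective map $f_{q+\epsilon}$ to both sides, use naturality of $f$ on the left and cancel $f\circ f^{-1}$ (legitimate by the displayed inclusion), and observe that both sides collapse to $N_{p\to q+2\epsilon}$. I do not expect a genuine obstacle here; the only delicate point is the bookkeeping of shifts — unwinding the cokernel hypothesis to the inclusion above with the correct indices, and checking that each cancellation $f\circ f^{-1}$ is applied to a vector that genuinely lies in $\operatorname{im}(f)$, which is precisely where injectivity and $2\epsilon$-triviality of the cokernel enter in an essential way.
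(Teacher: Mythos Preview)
Your proposal is correct and follows essentially the same construction as the paper: both define $g$ pointwise via the partial inverse of $f$ applied to the internal $2\epsilon$-shift of $N$, using injectivity for uniqueness and the $2\epsilon$-trivial cokernel for existence. In fact you carry out in full the verifications (that $g$ is a morphism and that both interleaving triangles commute) which the paper explicitly leaves to the reader, and your index bookkeeping is clean and correct.
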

\begin{proof}
We have the following commutative square for all $p\in \R^2$:
\[
\begin{tikzcd}
M_p\ar[d, "f_p"]\ar[rr,"M_{p\to p+2\epsilon}"] & &  M_{p+2\epsilon}\ar[d, "f_{p+2\epsilon}"] \\
N_{p+\epsilon}\ar[rr,swap, "N_{p+\epsilon\to p+3\epsilon}"] & & N_{p+3\epsilon}
\end{tikzcd}
\]
Let $n\in N_{p+\eps}$. Since $f$ has $2\epsilon$-trivial cokernel and $f$ is injective, there exists a unique $m\in M_{p+2\epsilon}$ such that $f_{p+2\epsilon}(m) = N_{p+\epsilon\to p+3\epsilon}(n)$. Define $g_p\colon N_{p+\epsilon} \to M_{p+2\epsilon}$ by $g_p(n) = m$. Doing this for all $p\in \R^2$ defines a morphism $g\colon N^\eps\to M^{2\epsilon}$ and we leave it to the reader to verify that $f$ and $g^{-\epsilon}$ define an $\epsilon$-interleaving pair. 
\end{proof}
\smallskip

For fixed parameters $s,t\in[0,\infty]$, 
we consider the following computational problem:

\begin{quote}
\textsc{$s$-$t$-trivial-morphism}: Given two persistence modules $M$, $N$ in 
graded matrix representation, decide whether there exists a morphism $f\colon M\to N$
with $s$-trivial kernel and $t$-trivial cokernel.
\end{quote}

Choosing $s=t=0$ simply asks whether the modules are isomorphic,
which can be decided in polynomial time~\cite{brooksbank2008testing}.
On the other extreme, $s=t=\infty$ imposes no conditions
on the morphism, which turns the decision problem to be trivially true,
using the $0$-morphism. We show

\begin{theorem}
\label{thm:one_sided_hardness}
\textsc{$s$-$t$-trivial-morphism} is NP-complete for every $(s,t)\notin\{(0,0),(\infty,\infty)\}$.
\end{theorem}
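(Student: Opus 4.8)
The plan is to prove membership in NP and NP-hardness separately; for hardness I reduce CI (NP-complete by \cref{thm:ci_np_complete}) to \textsc{$s$-$t$-trivial-morphism}. \emph{Membership} is routine: a morphism $f\colon M\to N$ of finitely presented modules is given by a matrix over $\F$ describing its action on the finitely many generators of $M$ and compatible with the gradings, which can be guessed since $\F$ is finite; one then checks in polynomial time that it is a morphism and that the finite-type modules $\ker f$, $\coker f$ (whose critical grades are computable from the presentations of $M,N$ and the matrix) satisfy the $s$-, resp.\ $t$-, triviality condition by inspecting the induced maps at the finitely many critical grades and their $s$- (resp.\ $t$-)shifts --- vacuously so when $s$ or $t$ is $\infty$.

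\emph{Reducing the number of cases.} Passing to opposite categories swaps kernels with cokernels and sources with targets and is polynomial-time on graded matrix representations, so \textsc{$s$-$t$-trivial-morphism} is polynomial-time equivalent to \textsc{$t$-$s$-trivial-morphism}; hence we may assume $s\le t$. Rescaling all grades of both inputs by a common $c>0$ turns \textsc{$s$-$t$-trivial-morphism} into \textsc{$cs$-$ct$-trivial-morphism}, so when $0<s<\infty$ we may normalize $s=1$. This leaves $(0,\infty)$, $(0,t)$ and $(1,t)$ with $0<t<\infty$, and $(1,\infty)$; together with the opposite-category equivalence these cover all admissible pairs (the excluded ones being $(0,0)$ and $(\infty,\infty)$). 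Moreover, whenever the source module is a direct sum of \emph{up-closed} interval modules --- all of whose internal maps are injective --- a morphism out of it has $1$-trivial kernel iff it is injective, so for such instances \textsc{$1$-$t$-trivial-morphism} and \textsc{$1$-$\infty$-trivial-morphism} coincide with \textsc{$0$-$t$-trivial-morphism} and \textsc{$0$-$\infty$-trivial-morphism}. It therefore suffices to handle $s=0$.

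\emph{The constructions.} For $0<t<\infty$, start from a CI instance $(n,P,Q)$, take the direct-sum-of-staircases modules $M,N$ of \cref{1-3} rescaled by $\lambda=t/2$ (so $d_I(M,N)=t/2$ if $(n,P,Q)\in CI$ and $3t/2$ otherwise), and feed the algorithm the pair $(M,N^{t/2})$. If $(n,P,Q)\in CI$ then $M,N$ are $(t/2)$-interleaved, so by \cref{lem:interleaving_implies_trivialcokernel} the forward morphism $M\to N^{t/2}$ has $t$-trivial cokernel --- and, after the refinement below, can be taken injective; conversely an injective morphism $M\to N^{t/2}$ with $t$-trivial cokernel produces, by \cref{lem:mono-1}, a $(t/2)$-interleaving, whence $d_I(M,N)\le t/2<3t/2$ and $(n,P,Q)\in CI$. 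Since $M$ is a direct sum of up-closed interval modules, this also settles $(1,t)$ on the same instances. The case $t=\infty$ (existence of an injection --- this reproves the injection/surjection hardness of \cite{ivanyos2010deterministic}) needs a different gadget, since bare injectivity of a staircase-to-staircase morphism reduces to a bipartite matching; one instead reduces from CI via interval modules with non-up-closed supports, arranged so that pointwise injectivity simultaneously forces the $P$-pattern of the matrix and, through the linear-independence constraints read off at the lower grades, the inverse-matrix condition encoded by $Q$ (with those supports designed to have injective structure maps at scale $1$, so that the $s=1$ variant reduces to it too).

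\emph{Main obstacle.} The crux, already visible for $0<t<\infty$, is that the forward morphism of a $(t/2)$-interleaving of the \cref{1-3} modules need not be \emph{pointwise} injective: on the lower parts of the staircases it can restrict to a singular submatrix of the invertible matrix $A$ witnessing a CI-solution. The reduction closes only after refining the construction of \cref{sec:reduction} so that, when $(n,P,Q)\in CI$, some $(t/2)$-interleaving has injective forward morphism --- equivalently, so that every fibre of that morphism picks out linearly independent columns of $A$ --- which can be forced by making each generator of $M$ sit below sufficiently many generators of $N^{t/2}$ (using that each column of $A$ is nonzero with support off the forbidden rows), while preserving polynomial size and the backward implication. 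The algebraic steps are immediate from \cref{lem:mono-1,lem:interleaving_implies_trivialcokernel}; it is this geometric refinement, and the analogous gadget in the $t=\infty$ case, where the real work lies.
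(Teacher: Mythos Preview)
Your framework is largely sound, but the ``main obstacle'' you identify for $0<t<\infty$ is illusory, and recognising this dissolves the need for any refinement of the \cref{1-3} construction. You already observed that when the source $M$ has injective internal maps, a morphism out of $M$ with $\epsilon$-trivial kernel (any $\epsilon>0$) is automatically injective: the restriction of $M_{p\to p+\epsilon}$ to $\ker f_p$ is both injective and zero, so $\ker f_p=0$. Now simply apply this to the forward morphism of the $(t/2)$-interleaving itself, which by \cref{lem:interleaving_implies_trivialcokernel} has $t$-trivial kernel as well as $t$-trivial cokernel. Since the staircase module $M$ has injective internal maps, that forward morphism is already injective---no geometric refinement is needed. (The paper reaches the same conclusion by a slightly different route, \cref{mono_lemma_for_staircase}: it shows that $2$-trivial \emph{cokernel} alone forces injectivity for these particular modules, by first arguing surjectivity at a point where both modules are $\F^n$ with identity internal maps, and then propagating injectivity backwards along the injective internal maps of $M$.) Either argument closes the case $0<t<\infty$ without modification; your proposed refinement is both unnecessary and, as sketched, not obviously sufficient (ensuring each column of $A$ hits a supported $N_j$ guarantees each component of $f_p$ is nonzero, not that $f_p$ is injective).

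The case $t=\infty$ (existence of an injection, dually a surjection) is a genuine gap in your proposal: you correctly note that for pure staircase modules the problem degenerates to matching, but the single sentence about ``non-up-closed supports \dots\ forcing the inverse-matrix condition encoded by $Q$'' is not a construction. The paper gives a concrete and rather different reduction here (\cref{lem:inf-0-hardness}): a direct encoding of 3SAT using $nq+2$ staircase summands in the source and only two in the target, where $q=|\F|$, with corner points placed so that surjectivity at each corner forces a specific $2\times k$ submatrix to have full rank; these rank conditions encode ``exactly one of $d_i^1,\dots,d_i^q$ is zero'' and the clause constraints. This construction genuinely depends on $\F$ being finite and is not a small tweak of the \cref{1-3} modules, so your sketch would need substantial work to become a proof.
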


The case $(s,t)$ is computationally equivalent to the case $(cs,ct)$
with $c>0$, since we can scale all grades occurring in $M$ and $N$
by a factor of $c$. So, it suffices to prove hardness of
\textsc{$2$-$t$-trivial morphism}, \textsc{$s$-$2$-trivial morphism}
(we will see that the choice of $2$ will be convenient in the argument), \textsc{$\infty$-$0$-trivial morphism} and  \textsc{$0$-$\infty$-trivial morphism}. 

Note that for any choice of $s$ and $t$, \textsc{$s$-$t$-trivial-morphism}
is in NP. The argument is similar to the first part 
of the proof of~\cref{thm:1_interleaving_np_complete}: a morphism
can be specified in polynomial size with respect to the module sizes,
and we can check the triviality conditions of the kernel and cokernel
by considering ranks of matrices.

For the hardness, 
we first focus on the case $(s,2)$, hence we want to decide the existence
of a morphism with $s$-trivial kernel and $2$-trivial cokernel.
The following simple observation is the key insight of the proof.

\begin{lemma}
\label{mono_lemma_for_staircase}
Let $M$, $N$ be as in \cref{1-3}. Any morphism $f\colon M\to N^1$ with $2$-trivial cokernel
is injective.
\end{lemma}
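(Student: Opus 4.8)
The plan is to argue by contraposition, together with a careful analysis of what morphisms $M \to N^1$ look like. Recall from \cref{lem:matrix_rep_lemma} that since $M = \oplus_{i=1}^n M_i$ and $N = \oplus_{j=1}^n N_j$ are direct sums of staircase modules, a morphism $f\colon M \to N^1$ is represented by an $(n\times n)$-matrix $A$ over $\F$, where the entry $A_{ij}$ is the scalar of the component map $M_i \to N_j^1$. By \cref{lem:transformation_lemma}, $A_{ij}$ is forced to be $0$ whenever $1 = \eps < d_S(S_i, T_j)$, i.e.\ whenever $(i,j)\in P$, while for $(i,j)\notin P$ any scalar is permitted. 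The claim is that if $A$ is not invertible then $f$ does \emph{not} have $2$-trivial cokernel; equivalently, the $2$-trivial cokernel condition forces $A$ to be invertible, and an invertible matrix gives (at every relevant point $p$ in the common support) an isomorphism $\oplus_i (M_i)_p \to \oplus_j (N_j)_{p+1}$, hence $f$ is injective. Note that this also immediately pins down the first step: reduce the statement to ``$2$-trivial cokernel $\Rightarrow$ $A$ invertible'', since invertibility of $A$ gives injectivity of $f$ pointwise (at points where all the source summands are supported; elsewhere $M_p$ or its image is $0$ and injectivity is automatic, using that the staircases $S_i$ all contain a common cofinal region).

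The core of the argument is the cokernel analysis. Suppose $A$ is not invertible, so there is a nonzero row vector $w\in\F^n$ with $w^T A = 0$; this $w$ picks out a linear functional on $\oplus_j (N_j)$ that kills the image of $f$ over the region where everything is supported. Concretely, I would exhibit a point $p$ at which $(\mathrm{coker}\, f)_p \neq 0$ \emph{and} at which the map $(\mathrm{coker}\, f)_p \to (\mathrm{coker}\, f)_{p+2}$ is nonzero, contradicting $2$-triviality. The natural place to look is ``deep inside'' the common staircase support, far above all the corners: there $M_p = \F^n$, $N^1_q = \F^n$ for $q$ in the corresponding region, all internal maps are identities, and $f$ is literally the matrix $A$; its cokernel is $\F^n/\mathrm{im}(A)\cong \F^{n-\mathrm{rank}(A)} \neq 0$, and the internal map of this cokernel from $p$ to $p+2$ is again the identity on this quotient, hence nonzero. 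This is the step that needs the cleanest writing: one must justify that such a point $p$ exists (it does, because all the staircases $S_i$ and shifted staircases $T_j^1$ agree with a common translate of the ``base staircase'' far enough up the anti-diagonal, since only finitely many corners were perturbed by $\pm(2,2)$), and that on the corresponding rectangle of constancy the cokernel presheaf is the constant sheaf $\F^n/\mathrm{im}(A)$ with identity restriction maps.

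I expect the main obstacle to be purely bookkeeping rather than conceptual: verifying that there is a region of $\R^2$ on which \emph{all} of $M$, $N^1$, and the morphism $f$ are ``constant'' in the right sense — i.e.\ a point $p$ with $p \in S_i$ and $p+1 \in T_j$ for all $i,j$, and such that $p+2$ also lies in the same constancy region — so that the cokernel really is the constant vector space $\F^n/\mathrm{im}(A)$ with identity transition maps between $p$ and $p+2$. Granting that, the contradiction with the $2$-trivial cokernel hypothesis is immediate, so $A$ must be invertible, and then $f$ is injective as explained above. (One should double-check the edge behaviour for points $p$ not in the common support, but there either $M_p = 0$, making $f_p$ trivially injective, or the relevant target summands vanish and injectivity of $f_p$ follows from injectivity on the surviving summands together with the matrix being invertible; I would relegate this to a short remark.)
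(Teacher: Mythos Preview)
Your proposal is correct and follows essentially the same route as the paper: both arguments locate a point $p$ where $\dim M_p=\dim N^1_p=n$ and all internal maps above $p$ are identities, observe that the $2$-trivial cokernel hypothesis then forces $\coker f_p=0$ (you phrase this contrapositively via a non-invertible $A$, the paper argues directly that the identity on $\coker f_p$ is also the zero map), deduce that $f_p$ is an isomorphism, and finally propagate injectivity to every point. The paper handles the last step slightly more cleanly than your parenthetical edge-case discussion: rather than analysing submatrices of $A$, it simply uses that for any $r$ one can choose $q\geq r$ in the full-support region, and then injectivity of $f_q$ together with injectivity of the staircase map $M_{r\to q}$ forces $f_r$ to be injective.
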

\begin{proof}
Recall that both $M$ and $N$ are direct sums of $n$ staircase modules. Let $p$ be any point such that $\dim M_p = \dim N_p = n$, and observe that $M_{p\to q} = {\rm id}_\F$, $N_{p\to q} = {\rm id}_\F$ and $f_p = f_q$ for all $q\geq p$. 
In particular, if $q=p+(2,2)$, the induced map $\coker(f_p)\to\coker(f_q)$
is the identity, and since $f$ has a $2$-trivial cokernel by assumption,
the map is also the $0$-map. Hence $\coker(f_p)$ is trivial,
implying that the map $f_p$ is surjective, and hence also injective,
and the same holds for $f_q$ with $q\geq p$.

Now consider $f_r$ for an arbitrary $r\in\R^2$.
Let $q\geq r$ be a point satisfying $q\geq p$.
Since the internal morphisms of $M$ are all injective and $f_p$ is injective,
so is $f_r$.
\end{proof}

In other words, for $M$ and $N^1$ as above, 
the answer to \textsc{$s$-$2$-trivial-morphism} is independent of $s$.
Moreover, it follows:

\begin{corollary}
\label{cor:reduction_trivial_interleaving}
With $M,N$ as above, there exists a morphism $f\colon M\to N^1$
with $2$-trivial cokernel and $s$-trivial kernel if and only if
$M$ and $N$ are $1$-interleaved.
\end{corollary}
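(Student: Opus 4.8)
The plan is to read this off directly from the lemmas already established in this section, observing that the $s$-trivial kernel hypothesis is in fact vacuous for the modules $M,N$ coming from \cref{1-3}: by \cref{mono_lemma_for_staircase}, any morphism $M\to N^1$ with $2$-trivial cokernel is automatically injective, hence has $0$-trivial, and in particular $s$-trivial, kernel for every $s\in[0,\infty]$. So the two conditions ``$2$-trivial cokernel'' and ``$2$-trivial cokernel plus $s$-trivial kernel'' coincide here, and it suffices to relate the former to $1$-interleavings.

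For the ``if'' direction, I would start from a $1$-interleaving pair $(f,g)$ with $f\colon M\to N^1$. By \cref{lem:interleaving_implies_trivialcokernel}, $f$ has $2$-trivial cokernel. Then \cref{mono_lemma_for_staircase} applies to $f$ and shows it is injective, so its kernel is $0$-trivial, which is $s$-trivial for every $s$. Hence $f$ itself is a morphism $M\to N^1$ with $2$-trivial cokernel and $s$-trivial kernel, as required.

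For the ``only if'' direction, I would take a morphism $f\colon M\to N^1$ with $2$-trivial cokernel (the $s$-trivial kernel hypothesis is not needed). By \cref{mono_lemma_for_staircase}, $f$ is injective. Now I apply \cref{lem:mono-1} with $\epsilon=1$: since $f\colon M\to N^1$ is injective with $2\epsilon$-trivial cokernel (here $2\epsilon=2$), there is a morphism $g\colon N\to M^1$ such that $(f,g)$ is a $1$-interleaving pair. Therefore $M$ and $N$ are $1$-interleaved, finishing the equivalence.

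There is no real obstacle: the content is entirely bookkeeping on top of \cref{mono_lemma_for_staircase}, \cref{lem:mono-1}, and \cref{lem:interleaving_implies_trivialcokernel}. The only two things to double-check are that the $s$-trivial kernel condition is indeed forced by the $2$-trivial cokernel condition (precisely \cref{mono_lemma_for_staircase}), and that the numerology in \cref{lem:mono-1} lines up, i.e.\ $2\epsilon=2$ when $\epsilon=1$, so that ``$2$-trivial cokernel'' is exactly the hypothesis that lemma needs.
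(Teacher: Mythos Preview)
Your proposal is correct and follows essentially the same route as the paper: both directions are handled by combining \cref{mono_lemma_for_staircase}, \cref{lem:mono-1} (with $\eps=1$), and \cref{lem:interleaving_implies_trivialcokernel} in exactly the way you describe. The paper's proof is slightly terser but uses the identical lemmas in the identical roles.
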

\begin{proof}
If such a morphism exists, Lemma~\ref{mono_lemma_for_staircase} 
guarantees that the morphism
is in fact injective with $2$-trivial cokernel. 
Lemma~\ref{lem:mono-1} with $\eps=1$ guarantees that the modules are 
$1$-interleaved.

Vice versa, if $M$ and $N$ are $1$-interleaved, there is 
a morphism $f$ with $2$-trivial kernel and cokernel
by Lemma~\ref{lem:interleaving_implies_trivialcokernel}. Again using 
Lemma~\ref{mono_lemma_for_staircase} guarantees that $f$ is injective,
hence has a $0$-trivial kernel.
\end{proof}

\begin{corollary}
\label{cor:s-2-hardness}
\textsc{$s$-$2$-trivial-morphism} is NP-hard for all $s\in[0,\infty]$.
\end{corollary}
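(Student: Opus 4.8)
The plan is to read this off \cref{cor:reduction_trivial_interleaving} together with \cref{1-3} via a polynomial-time many-one reduction from $CI$. Fix $s\in[0,\infty]$. Given a $CI$-instance $(n,P,Q)$, first apply the construction of \cref{1-3} to obtain, in time polynomial in $n$, a pair $(M,N)$ of persistence modules in graded matrix representation that are direct sums of staircases and satisfy $d_I(M,N)=1$ if $(n,P,Q)\in CI$ and $d_I(M,N)=3$ otherwise. Since staircase modules are closed under shift, $N^1$ is again a direct sum of staircase modules, and a graded matrix representation of $N^1$ is obtained from that of $N$ simply by subtracting $(1,1)$ from every grade; this step is clearly polynomial. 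The output of the reduction is the pair $(M,N^1)$, viewed as an instance of \textsc{$s$-$2$-trivial-morphism}.

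For correctness, observe that a morphism $f\colon M\to N^1$ with $s$-trivial kernel and $2$-trivial cokernel is exactly a witness that $(M,N^1)$ is a yes-instance of \textsc{$s$-$2$-trivial-morphism}. By \cref{cor:reduction_trivial_interleaving}, such a morphism exists if and only if $M$ and $N$ are $1$-interleaved. By \cref{1-3} (and the proof of \cref{1,3-simplified}, which exhibits an explicit $1$-interleaving from a solution $AB=I_n$ whenever $(n,P,Q)\in CI$), $M$ and $N$ are $1$-interleaved precisely when $(n,P,Q)\in CI$: in that case the infimum defining $d_I$ is attained at $1$, and when $(n,P,Q)\notin CI$ we have $d_I(M,N)=3>1$ so no $1$-interleaving exists. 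Hence $(n,P,Q)\in CI$ if and only if $(M,N^1)$ is a yes-instance of \textsc{$s$-$2$-trivial-morphism}.

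Finally, the reduction runs in time polynomial in $n$, since the construction of \cref{1-3} does and shifting all grades is trivial. Thus \cref{thm:ci_np_complete} implies that \textsc{$s$-$2$-trivial-morphism} is NP-hard, for every $s\in[0,\infty]$. There is no real difficulty here; the only point that requires care is purely bookkeeping: because the stated problem asks for a morphism \emph{between the two input modules}, one must hand the algorithm the already-shifted pair $(M,N^1)$ rather than $(M,N)$, so that ``morphism $M\to N^1$'' is literally an instance of \textsc{$s$-$2$-trivial-morphism}.
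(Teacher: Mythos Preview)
Your proof is correct and follows essentially the same route as the paper: reduce from $CI$ by building $(M,N)$ via \cref{1-3}, pass to $(M,N^1)$, and invoke \cref{cor:reduction_trivial_interleaving} to equate the existence of the desired morphism with a $1$-interleaving. The only cosmetic difference is that you phrase it as an explicit many-one reduction (and spell out the shift to $N^1$), whereas the paper frames it as a Turing reduction; the content is the same.
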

\begin{proof}
Given a CI-instance, we transform it into modules $M$ and $N$ as in \cref{sec:reduction}.
Assuming a polynomial time algorithm for \textsc{$s$-$2$-trivial-morphism},
we apply it on $(M,N^1)$. If the algorithm returns that a morphism
exists, we know by~\cref{cor:reduction_trivial_interleaving}
that $M$ and $N$ are $1$-interleaved and therefore, the CI-instance
has a solution. If no morphism exists, $M$ and $N$ are not $1$-interleaved
and therefore, the CI-instance has no solution. We can thus solve the
CI problem in polynomial time.
\end{proof}

\paragraph{Dual staircases.} We will prove that \textsc{$2$-$t$-trivial-morphism} is NP-hard by a reduction from \textsc{$s$-$2$-trivial-morphism}. First we need some notation. For a staircase $S$, let $S^\circ$ denote the interior of $S$, and for a staircase module $M_l$ supported on a staircase $S$, we let $M_l^\circ$ denote the interval module supported on $S^\circ$. Observe that there is a canonical injection $M_l^\circ \hookrightarrow M$ (given by $m\mapsto m$).  It is also easy to see that $d_s(S,T) = d_s(S^\circ, T^\circ)$. Here $d_s$ for interiors of staircases is defined in the obvious way. The reason why we look at interiors is technical: We eventually end up with a dual module $(M^\circ)^*$, and taking interiors makes sure the changes in this dual module happen at given points instead of ``immeadiately after'' the points, which is needed for a graded matrix representation of the module.

\begin{lemma}
Let $M$ and $N$ be staircase decomposable modules. There exists an injection $f\colon M\to N$ with $\epsilon$-trivial cokernel if and only if there exists an injection $f^\circ\colon M^\circ \to N^\circ$ with $\epsilon$-trivial cokernel. 
\end{lemma}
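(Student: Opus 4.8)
The plan is to exploit the canonical inclusions $M^\circ\hookrightarrow M$ and $N^\circ\hookrightarrow N$ together with the fact, noted just above, that $d_s$ is unchanged by passing to interiors, so that by \cref{lem:transformation_lemma,lem:matrix_rep_lemma} morphisms between direct sums of staircase modules (and between the corresponding interior modules) are both described by $\F$-valued matrices, with the \emph{same} zero-pattern: the $(i,j)$ entry is allowed to be arbitrary precisely when $S_i\subseteq T_j$ (equivalently $S_i^\circ\subseteq T_j^\circ$), and is forced to be $0$ otherwise. Thus a matrix $F\in\F^{n\times n}$ defines a morphism $M\to N$ if and only if it defines a morphism $M^\circ\to N^\circ$; write $f$ and $f^\circ$ for the two morphisms attached to a fixed such $F$. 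The first step is to record this correspondence carefully, checking that the inclusions $M^\circ\hookrightarrow M$, $N^\circ\hookrightarrow N$ intertwine $f^\circ$ and $f$ (i.e., the obvious square commutes), which is immediate since on each graded piece $f$ and $f^\circ$ act by the same scalar when the source is nonzero.

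Second, I would show that $f$ is injective iff $f^\circ$ is injective. One direction is formal: $M^\circ\hookrightarrow M$ is a pointwise-injective morphism, so if $f$ is injective then $f\circ(M^\circ\hookrightarrow M)=(N^\circ\hookrightarrow N)\circ f^\circ$ is injective, whence $f^\circ$ is injective. For the converse, at a point $p$ with $M_p\neq 0$ we have $(M^\circ)_{p'}= M_p$ for $p'$ slightly larger than $p$ (in the interior), and $f_p$, $f^\circ_{p'}$ are given by the same matrix restricted to the relevant summands; injectivity of $f^\circ$ at such $p'$ forces injectivity of that matrix, hence of $f_p$. More structurally: for staircase-decomposable modules, injectivity of the morphism induced by $F$ is detectable from $F$ alone together with the containment pattern of the staircases — $f$ is injective iff for every point $p$ the submatrix of $F$ indexed by $\{i: p\in S_i\}\times\{j: p\in T_j\}$ has trivial kernel — and this condition is literally identical for the interiors. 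Either way the injectivity of $f$ and $f^\circ$ are equivalent.

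Third, I would show that, \emph{given} injectivity, $f$ has $\eps$-trivial cokernel iff $f^\circ$ has $\eps$-trivial cokernel. Here $\coker(f)_p = N_p/\mathrm{im}(f_p)$, and the map $\coker(f)_p\to\coker(f)_{p+\eps}$ being zero means $\mathrm{im}(N_{p\to p+\eps})\subseteq\mathrm{im}(f_{p+\eps})$. Again one compares this inclusion of subspaces of $N_{p+\eps}$ with the analogous inclusion for $N^\circ$ at the corresponding interior point; since $N$ and $N^\circ$ have the same spaces and maps away from the boundary, and the boundary of a staircase has empty interior, the two conditions coincide once one checks them at a dense set of points — and $\eps$-triviality of a cokernel, being a closed condition phrased as ``$\mathrm{rank}$ drops,'' is determined on such a dense set. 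I would phrase this as: $\coker(f)$ is $\eps$-trivial iff for all $p$ in a suitable dense subset the relevant rank equality holds, and that subset and those equalities are the same for $f^\circ$.

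The main obstacle is the bookkeeping at the boundary of the staircases: one must be precise about \emph{where} the vector spaces of $M$ versus $M^\circ$ differ (on $S\setminus S^\circ$, which is a lower-dimensional ``staircase boundary'') and argue that neither injectivity nor $\eps$-triviality of the cokernel can be created or destroyed by that measure-zero modification. The clean way to handle this is the ``dense set of test points'' argument sketched above, combined with the already-noted identity $d_s(S,T)=d_s(S^\circ,T^\circ)$, which guarantees the zero-patterns of admissible matrices match; I do not expect any genuinely new idea beyond carefully stating which finite/dense collection of points and rank conditions witnesses each property and observing it is shared by $M,N$ and $M^\circ,N^\circ$.
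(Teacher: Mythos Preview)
Your proposal is correct and follows essentially the same strategy as the paper: establish a bijection between morphisms $M\to N$ and $M^\circ\to N^\circ$ via the shared matrix description (using $S\subseteq T\Leftrightarrow S^\circ\subseteq T^\circ$), then transfer injectivity and the $\eps$-trivial cokernel property by shifting a given point slightly into the interior where the two modules and morphisms literally agree. The paper makes the cokernel step concrete by choosing, for a putative failure at $p$, explicit $\delta\leq\gamma$ with $M_{p\to p+\delta}$, $N_{p\to p+\delta}$, $M_{p+\eps\to p+\eps+\gamma}$, $N_{p+\eps\to p+\eps+\gamma}$ all isomorphisms, which is exactly what your ``dense set of test points'' argument would unpack to; your sketch is accurate, but when you write it up you should spell out this two-shift $(\delta,\gamma)$ choice rather than appeal to density abstractly, since the failure of $\eps$-triviality at a single boundary point does not obviously propagate without that explicit control.
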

\begin{proof}
Let $M=\oplus_i M_i$ and $N=\oplus_j N_j$. Observe that $S\subseteq T$ if and only if $S^\circ \subseteq T^\circ$. Therefore, any morphism $M^\circ_i\to N^\circ_j$ extends to a morphism $M_j\to N_j$ in the obvious way. Conversely, any morphism $M\to N$ restricts to a morphism $M^\circ \to N^\circ$. It is not hard to see that extension and restriction are inverse functions. In particular, there is a one-to-one correspondence between morphisms $f\colon M\to N$ and $f^\circ\colon M^\circ \to N^\circ$. 

Suppose $f^\circ$ is injective. For any point $p$, there exists a $\delta>0$ such that 
$M_{p\to p+\delta}$ and $N_{p\to p+\delta}$ are isomorphisms, which also gives $f_{p+\delta}^\circ = f_{p+\delta}$. Since $f^\circ$ is injective, $f_{p+\delta}^\circ= f_{p+\delta}$ is, and by using the isomorphisms, we get that $f_p$ is injective, too. Since $p$ was arbitrary, we conclude that $f$ is injective. The converse can be proved by using the dual fact that for any $p$, there exists a $\gamma$ such that $M_{p-\gamma\to p}^\circ$ and $N_{p-\gamma\to p}^\circ$ are isomorphisms.

Suppose that $\coker f$ is not $\eps$-trivial, so there is a $p$ and an $m\in N_p$ such that $N_{p\to p+\eps}(m)$ is not in the image of $f_{p+\eps}$. Similarly to how we picked $\delta$ above, we can pick $\delta$ and $\gamma$ with $\delta\leq \gamma$ in a way that makes the following diagram commute, with equalities and isomorphisms as shown. 
\[
\begin{tikzcd}
&M_{p+\delta}^\circ\ar[d, "="]\ar[rrr] &&& M_{p+\eps+\gamma}^\circ\ar[d, "="]\\
M_p\ar[d, "f_p"]\ar[r,"\cong"] &  M_{p+\delta}\ar[d, "f_{p+\delta}=f_{p+\delta}^\circ"]\ar[rr]& & M_{p+\eps}\ar[d, "f_{p+\eps}"]\ar[r,"\cong"] & M_{p+\eps+\gamma}\ar[d, "f_{p+\eps+\gamma}=f_{p+\eps+\gamma}^\circ"] \\
N_{p}\ar[r,"\cong"] & N_{p+\delta}\ar[rr]\ar[d, "="]& & N_{p+\eps}\ar[r,"\cong"] & N_{p+\eps+\gamma}\ar[d, "="]\\
&N_{p+\delta}^\circ\ar[rrr] &&& N_{p+\eps+\gamma}^\circ\\
\end{tikzcd}
\]
All the horizontal maps are internal morphisms. We know that $N_{p\to p+\eps}(m)\in N_{p+\eps}$ is not in the image of $f_{p+\eps}$. Let $m'\in N_{p+\eps+\gamma}^\circ$ be the image of $m$ along the maps in the above diagram.  Then $m'$ is in the image of $N_{p+\delta\to p+\eps+\gamma}^\circ$, but not in the image of $f_{p+\eps+\gamma}^\circ$. Since $(\eps+\gamma)-\delta\geq \eps$, this shows that $f^\circ$ is not $\eps$-trivial. Again, the argument can be dualized to show the converse.
\end{proof}

For an interval $I\subseteq \R^2$, define the dual interval $I^\ast$ as follows: $(x,y)\in I^\ast$ if and only if $(-x, -y)\in I$. And for an interval module $M_l$ supported on $I$, let $M_l^\ast$ denote be the interval module supported on $I^\ast$. If $M=\oplus_i M_i$ is a sum of interval modules $M_i$, then $M^\ast = \oplus_i (M_i)^\ast$. This is equivalent to considering $M$ as a module indexed by $\R^2$ with the partial order reversed. 

Let $M^\circ =\oplus_i M^\circ_i$ and $N^\circ =\oplus_j N^\circ_j$, where $M_i^\circ$ and $N_j^\circ$ are interval modules supported on interiors of staircases, and let $f^\circ\colon M^\circ\to N^\circ$. Observe that we can represent $f^\circ$ by a collection of matrices $\{A_p\}_{p\in \R^2}$, where $A_p$ is the matrix representation of $f^\circ_p$ with respect to the bases given by the non-trivial elements of $\{(M_i^\circ)_p\}_i$ and $\{(N_j^\circ)_p\}_j$. Similarly, for any $p\leq q$, we can represent the linear maps $M_{p\to q}^\circ$ and $N_{p\to q}^\circ$ by matrices with respect to the obvious bases.

Importantly, representing $f^\circ_p$ by matrices $A_p$ as above, we get a dual morphism $(f^\circ)^\ast\colon N^\ast \to M^\ast$ given by the matrices $\{(A_{-p})^T\}_{p\in \R^2}$. This induces a bijection between the set of morphisms from $M^\circ$ to $N^\circ$ and the set of morphisms from $(N^\circ)^\ast$ to $(M^\circ)^\ast$.
\begin{lemma}
$f^\circ$ is an injection with $\epsilon$-trivial cokernel if and only if $(f^\circ)^\ast$ is a surjection with $\epsilon$-trivial kernel. 
\end{lemma}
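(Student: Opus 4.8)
The plan is to unwind the definitions on both sides and observe that ``dualization'' exchanges kernels and cokernels while reversing the order. Concretely, recall that $(f^\circ)^\ast$ is represented pointwise by the transposed matrices $\{(A_{-p})^T\}_p$, where $A_p$ represents $f^\circ_p$. First I would note the elementary linear-algebra fact that for a linear map $h$ with matrix $A$, $h$ is injective iff $A^T$ is surjective, and $h$ is surjective iff $A^T$ is injective; moreover $\ker(A^T)\cong\coker(A)^\ast$ and $\coker(A^T)\cong\ker(A)^\ast$ canonically (for finite-dimensional spaces). Applying this pointwise shows that $f^\circ_p$ is injective for all $p$ iff $(f^\circ)^\ast_q = (A_{-q})^T$ is surjective for all $q$ (reindexing $q=-p$), which is exactly: $f^\circ$ is an injection iff $(f^\circ)^\ast$ is a surjection.

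Next I would handle the $\epsilon$-triviality conditions. By definition, $f^\circ$ has $\epsilon$-trivial cokernel means the induced map $\coker(f^\circ_p)\to\coker(f^\circ_{p+\epsilon})$ is zero for all $p$. This induced map is obtained from the commuting square with the internal morphisms $M^\circ_{p\to p+\epsilon}$ and $N^\circ_{p\to p+\epsilon}$; passing to the dual $(f^\circ)^\ast$, whose internal morphisms are the transposes of the internal morphisms of $M^\circ$ and $N^\circ$ \emph{with the roles of source and target swapped and indices negated}, the square for the cokernel of $f^\circ$ between grades $p$ and $p+\epsilon$ becomes precisely the square for the kernel of $(f^\circ)^\ast$ between grades $-(p+\epsilon)$ and $-p$, i.e. between a grade $q$ and $q+\epsilon$ with $q = -(p+\epsilon)$. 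Since transposition sends the zero map to the zero map, the induced cokernel map of $f^\circ$ at $(p,p+\epsilon)$ is zero iff the induced kernel map of $(f^\circ)^\ast$ at $(q,q+\epsilon)$ is zero; ranging over all $p$ ranges over all $q$, giving: $f^\circ$ has $\epsilon$-trivial cokernel iff $(f^\circ)^\ast$ has $\epsilon$-trivial kernel. Combining the two equivalences yields the lemma.

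The step I expect to require the most care is making the duality of \emph{induced} maps on kernels/cokernels precise, rather than just the duality of the maps $f^\circ_p$ themselves. The subtlety is bookkeeping: one must check that the natural isomorphism $\coker(f^\circ_p)\cong \ker((f^\circ)^\ast_{-p})$ is itself compatible with the internal transition maps (i.e. natural in $p$), so that the vanishing of one induced map genuinely corresponds to the vanishing of the other. This is the standard fact that taking $\F$-linear duals is an exact contravariant functor and hence sends the exact sequence $0\to\ker f^\circ_p\to M^\circ_p\to N^\circ_p\to\coker f^\circ_p\to 0$ to the corresponding sequence for $(f^\circ)^\ast$ with kernel and cokernel interchanged, and that this is compatible with morphisms of exact sequences — applied to the morphism of exact sequences induced by the internal maps between grades $p$ and $p+\epsilon$. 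Everything else (injective $\leftrightarrow$ surjective, zero map $\leftrightarrow$ zero map, reindexing $p\mapsto -p$) is routine. I would state the linear-algebra duality facts once and then apply them pointwise and to the relevant commuting squares, leaving the purely mechanical verifications to the reader.
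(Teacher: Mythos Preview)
Your proposal is correct and follows the same underlying idea as the paper---pointwise transposition of the representing matrices exchanges injectivity with surjectivity and cokernels with kernels, with the index reversal $p\mapsto -p$. The only real difference is in the handling of the $\epsilon$-triviality: the paper argues directly at the matrix level, noting that if $X$ represents $N^\circ_{p\to p+\epsilon}$ then $X^T$ represents $(N^\circ)^\ast_{-p-\epsilon\to -p}$, and invoking the elementary fact that $\mathrm{col}(X)\subseteq\mathrm{col}(A_{p+\epsilon})$ iff $\ker(A_{p+\epsilon}^T)\subseteq\ker(X^T)$, which is exactly the statement that the induced cokernel map vanishes iff the induced kernel map does. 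You instead appeal to the exactness of the $\F$-linear dual as a contravariant functor and the naturality of the isomorphism $\coker(f^\circ_p)^\ast\cong\ker((f^\circ)^\ast_{-p})$. Both are valid; the paper's route is shorter and avoids having to check naturality, since the matrix identity already encodes it.
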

\begin{proof}
The first part is straightforward: the matrix $A_p$ represents a surjective linear map if and only if $A_p^T$ represents an injective linear map. Since $(f^\circ)^\ast_p = f^\circ_{-p}$, the result follows readily.  

For the second part, let $p$ be any point in $\R^2$, and let $X$ be the matrix representation of the morphism $N^\circ_{p\to p+\epsilon}$ with respect to the basis given by the $N^\circ_j$'s. Then, by construction, $X^T$ is a matrix representation for $(N^\circ)^\ast_{-p-\epsilon\to -p}$ (with respect to the dual bases). Using the elementary fact that ${\rm col}(X)\subseteq {\rm col}(A_{p+\epsilon})$ if and only if $\ker(A_{p+\epsilon}^T)\subseteq \ker(X^T)$, where ${\rm col}(X)$ denotes the column space of $X$, we conclude that ${\rm im}(N^\circ_{p\to p+\epsilon})\subseteq {\rm im}(f^\circ_{p+\epsilon})$ if and only if $\ker ((f^\circ)^\ast_{-p-\epsilon})\subseteq \ker ((N^\circ)^\ast_{-p-\epsilon\to p})$. As $p$ was arbitrary, this concludes the proof. 
\end{proof}

\begin{corollary}
\label{cor:2-t-hardness}
\textsc{$2$-$t$-trivial-morphism} is NP-hard for all $t\in[0,\infty]$.
\end{corollary}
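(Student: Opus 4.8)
The plan is to dualise the reduction behind \cref{cor:s-2-hardness}. I would start from a CI-instance, build the staircase-decomposable modules $M$ and $N$ of \cref{1-3}, and observe that by \cref{1-3} together with \cref{thm:ci_np_complete} it suffices to produce, in polynomial time, two modules $M'$ and $N'$ in graded matrix representation for which there is a morphism $M'\to N'$ with $2$-trivial kernel and $t$-trivial cokernel precisely when $M$ and $N$ are $1$-interleaved. The choice is $M':=((N^1)^\circ)^\ast$ and $N':=(M^\circ)^\ast$, the duals of the interior modules of $N^1$ and of $M$; passing to interiors is exactly what guarantees that these dual modules have a well-defined graded matrix representation (no generators or relations landing ``just after'' a real grade), and that representation is of polynomial size and polynomial-time computable since $M,N$ are explicit polynomial-size direct sums of staircases whose interiors, shifts and duals only relocate the (polynomially many) corners.

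For the direction ``$M,N$ are $1$-interleaved $\Rightarrow$ a $(2,t)$-trivial morphism $M'\to N'$ exists'', \cref{lem:interleaving_implies_trivialcokernel} gives a morphism $f\colon M\to N^1$ with $2$-trivial kernel and cokernel, and \cref{mono_lemma_for_staircase} forces every such $f$ to be injective. Feeding $f$ through the two lemmas preceding this corollary: the lemma on interiors turns the injection $f$ with $2$-trivial cokernel into an injection $f^\circ\colon M^\circ\to (N^1)^\circ$ with $2$-trivial cokernel, and the lemma on dual modules turns that into a surjection $(f^\circ)^\ast\colon ((N^1)^\circ)^\ast\to (M^\circ)^\ast$ with $2$-trivial kernel. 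A surjection has $0$-trivial, hence $t$-trivial, cokernel for every $t\in[0,\infty]$, so $(f^\circ)^\ast$ is the required morphism $M'\to N'$.

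For the converse, let $g\colon M'\to N'$ have $2$-trivial kernel (the cokernel hypothesis on $g$ will not be used). Reading the proofs of the two preceding lemmas at the level of single morphisms, the bijection between morphisms $M^\circ\to (N^1)^\circ$ and morphisms $((N^1)^\circ)^\ast\to (M^\circ)^\ast$ matches $\epsilon$-triviality of the cokernel with $\epsilon$-triviality of the kernel, and the extension/restriction bijection between morphisms $M\to N^1$ and $M^\circ\to (N^1)^\circ$ preserves $\epsilon$-triviality of the cokernel; hence $g$ corresponds to a morphism $f\colon M\to N^1$ with $2$-trivial cokernel. By \cref{mono_lemma_for_staircase} $f$ is injective, so \cref{lem:mono-1} with $\epsilon=1$ produces $g'\colon N\to M^1$ for which $(f,g')$ is a $1$-interleaving. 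Combining the two directions with \cref{1-3} and \cref{thm:ci_np_complete} yields the polynomial-time reduction from CI, and hence NP-hardness of \textsc{$2$-$t$-trivial-morphism} for every $t\in[0,\infty]$.

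I expect the genuinely delicate parts to be exactly the two bookkeeping points just flagged: first, checking that $((N^1)^\circ)^\ast$ and $(M^\circ)^\ast$ are honestly expressible as graded matrix representations of polynomial size (this is the whole reason for working with $S^\circ$ rather than $S$); and second, making the pointwise correspondences of kernels and cokernels in the interior- and dual-lemmas precise enough that, in the converse direction, only the $2$-trivial-kernel hypothesis on $g$ is needed and one never has to assume $g$ surjective. No new mathematical idea beyond \cref{mono_lemma_for_staircase,lem:mono-1} and the two preceding lemmas should be required; the rest is assembly.
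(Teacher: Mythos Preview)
Your proposal is correct and follows essentially the same route as the paper: dualise via interiors, apply the two preceding lemmas, and invoke \cref{mono_lemma_for_staircase} and \cref{lem:mono-1}. The paper simply cites \cref{cor:s-2-hardness} rather than re-deriving the CI-reduction inline as you do, but the content is the same.

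One small correction on the bookkeeping you flag: taking interiors does \emph{not} by itself guarantee a graded matrix representation for the duals. Interiors solve the open-versus-closed boundary problem (so that relations in the dual land \emph{at} a grade rather than ``immediately after'' one), but $((N^1)^\circ)^\ast$ and $(M^\circ)^\ast$ still have their single generator at grade $(-\infty,-\infty)$, which is not admissible in a graded matrix representation as defined. The paper handles this by truncating: place the generator at a finite grade strictly smaller than every dual corner (or, equivalently, allow an $-\infty$ grade in the representation). This truncation does not disturb any of the morphism or triviality arguments, since all relevant comparisons happen at or above the corners. You correctly identify this as a delicate point, just misattribute which device resolves it.
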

\begin{proof}
This follows from the previous two lemmas and \cref{cor:s-2-hardness}. There is however a technical obstacle arising from the fact that $(M^\circ)^\ast$
and $(N^\circ)^\ast$ have their generators at grade 
$(-\infty,-\infty)$. This problem is easy to solve, either by
altering the graded matrix representation to allow such a generator,
or by placing all generators at a sufficiently small value $p\in\R^2$
that is smaller than all corners of the staircase, see \cref{fig:dual-I} for an illustration.
Introducing such a minimal grade does not invalidate any of the
given arguments~-- we omit the technical details.
\end{proof}

\begin{figure}\centering

\begin{tikzpicture}[scale=0.9]

\draw[dashed, fill=black!15] (0,4) -- (5, 4) --  (5,0) --  (2,0) -- (2,1) -- (1,1) -- (1,2) -- (0,2) -- cycle;

\draw (5,0) --  (2,0) -- (2,1) -- (1,1) -- (1,2) -- (0,2) -- (0,4);

\node[left] at (2,0) (g1) {$g_1$};
\node[left] at (1,1) (g2) {$g_2$};
\node[left] at (0,2) (g3) {$g_3$};

\node[right] at (2,1) (r1) {$r_1$};
\node[right] at (1,2) (r2) {$r_2$};
\node[scale=2] at (4,3) (a) {$I$};

\draw[dashed] (-4,4) -- (4,-4);
\node[right] at (-3.9,4) (y) {$y=-x$};

\begin{scope}[xscale=-1, yscale=-1]
\draw[dashed, fill=black!15] (0,4) -- (5, 4) --  (5,0) --  (2,0) -- (2,1) -- (1,1) -- (1,2) -- (0,2) -- cycle;

\node[right] at (2,0) (g1) {$r^*_1=-g_1$};
\node[right] at (1,1) (g2) {$r^*_2=-g_2$};
\node[right] at (0,2) (g3) {$r^*_3=-g_3$};

\node[left] at (2,1) (r1) {$r_4^*=-r_1$};
\node[left] at (1,2) (r2) {$r_5^* = -r_2$};
\node[left] at (5,4) (g0) {$g_1^* = (-\infty, -\infty)$};
\node[left] at (5,0) (g4) {$r_6^* = (-\infty, (-g_1)_2)$};
\node[right] at (0,4) (g5) {$r_7^* = ((-g_3)_1, -\infty)$};

\node[scale=2] at (4,3) (a) {$(I^\circ)^*$};
\end{scope}
\end{tikzpicture}
\caption{The staircase module $k_I$ supported on the interval $I$  admits a graded matrix representation with $G=\{g_1,g_2,g_3\}$ and $R=\{r_1, r_2\}$. 
The module $(k_{I^\circ})^* = k_{(I^\circ)^*} $ admits a (generalized) graded matrix representation with $G^*=\{(-\infty, -\infty)\}$ and $R^*=\{-g_1, -g_2, -g_3, -r_1, -r_2, (-\infty, (-g_1)_2), ((-g_3)_1, -\infty)\}$. In the proof of \cref{cor:2-t-hardness} we may replace $\infty$ with $z\gg 0$ to obtain a proper graded matrix representation. }
\label{fig:dual-I}
\end{figure}
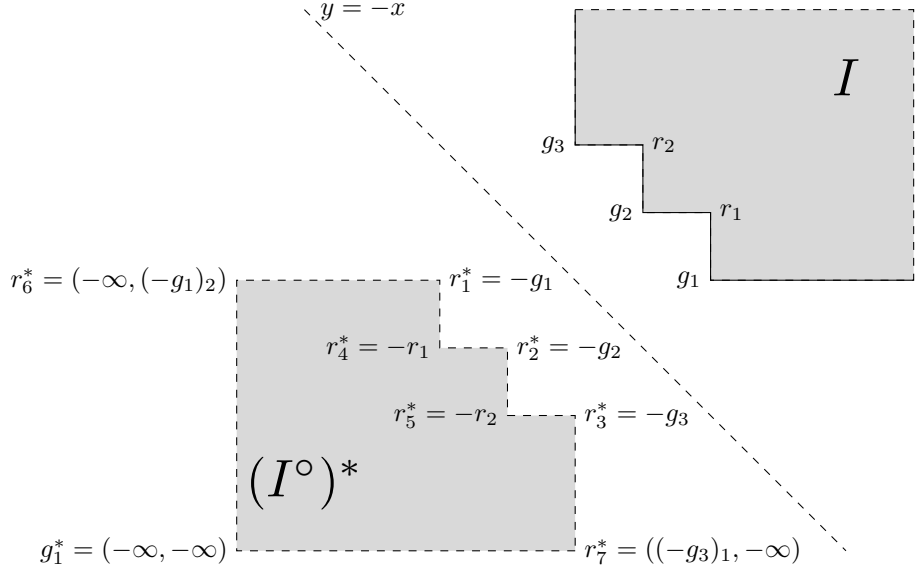

\paragraph{Surjective morphisms.}
After \cref{cor:s-2-hardness} and~\cref{cor:2-t-hardness}, all we have left to prove \cref{thm:one_sided_hardness} is the cases \textsc{$\infty$-$0$-trivial-morphism} and \textsc{$0$-$\infty$-trivial-morphism}. Recall that these correspond to asking for a surjection in the first case and an injection in the second.

\begin{lemma}
\label{lem:inf-0-hardness}
\textsc{$\infty$-$0$-trivial-morphism} and \textsc{$0$-$\infty$-trivial-morphism} are both NP-hard.
\end{lemma}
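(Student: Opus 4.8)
The plan is to reduce from \textsc{$s$-$2$-trivial-morphism} (shown NP-hard in \cref{cor:s-2-hardness}), exploiting the duality machinery we have just developed, together with the fact that the modules $M, N^1$ coming from \cref{1-3} satisfy the rigidity property of \cref{mono_lemma_for_staircase}: \emph{any} morphism $M \to N^1$ with $2$-trivial cokernel is automatically injective. First I would observe that, by \cref{cor:reduction_trivial_interleaving}, the existence of a morphism $f \colon M \to N^1$ with $2$-trivial cokernel (and any $s$-trivial kernel, in particular $0$-trivial, i.e.\ injective) is equivalent to $M$ and $N$ being $1$-interleaved, hence to $(n,P,Q) \in CI$. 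So the task is to reformulate ``$\exists$ injection $M \to N^1$ with $2$-trivial cokernel'' as ``$\exists$ surjection'' (for \textsc{$\infty$-$0$}) and as ``$\exists$ injection'' (for \textsc{$0$-$\infty$}), after first removing the residual $2$-triviality constraint.

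For the \textsc{$\infty$-$0$} case (surjections), I would pass to interiors and dualize, exactly as in the proof of \cref{cor:2-t-hardness}: applying the last two lemmas of the section, an injection $f \colon M \to N^1$ with $2$-trivial cokernel exists if and only if a surjection $(f^\circ)^* \colon ((N^1)^\circ)^* \to (M^\circ)^*$ with $2$-trivial kernel exists. But we still need to turn ``surjection with $2$-trivial kernel'' into ``surjection with $0$-trivial kernel'' — and here I would invoke \cref{mono_lemma_for_staircase} once more, now on the dual side. The point is that the rigidity lemma is self-dual in spirit: the dual modules are again direct sums of (interiors of) staircase modules, and one can check that any surjection $((N^1)^\circ)^* \to (M^\circ)^*$ automatically has $0$-trivial kernel, because the relevant modules stabilize to $\F^n$ near a cofinal point and a surjective map between $n$-dimensional spaces is an isomorphism. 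Concretely, I would argue: pick $p$ small enough that both dual modules equal $\F^n$ at all $q \le p$ and the internal maps into $p$ are isomorphisms; a surjection at $p$ is then bijective, hence has trivial kernel everywhere $\le p$, and the internal maps of $(M^\circ)^*$ being injective forces the kernel of the whole morphism to be $0$-trivial. This collapses \textsc{$\infty$-$0$} to \textsc{$2$-$0$}, which is hard by \cref{cor:2-t-hardness} — or one can reduce directly, chaining the two dualization lemmas with \cref{cor:s-2-hardness}. (As in \cref{cor:2-t-hardness}, there is the cosmetic nuisance that the dual modules have their generator at $(-\infty,-\infty)$; this is handled by placing it at some $z \ll 0$, as in \cref{fig:dual-I}, and does not affect the argument.)

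For the \textsc{$0$-$\infty$} case (injections), the reduction is even more direct: by \cref{mono_lemma_for_staircase}, for $M, N^1$ from \cref{1-3} a morphism is injective ``for free'' once it has $2$-trivial cokernel, but the converse reading is what we want — we must produce a setting where the \emph{only} constraint is injectivity. I would instead reduce from \textsc{$s$-$2$-trivial-morphism} via a construction that forces any injective morphism to have $2$-trivial cokernel by fiat, e.g.\ by truncating the target module $N^1$ above the stabilization point $u$ (killing $N^1_q$ for $q$ far enough above $u$), so that any injection $M \to (\text{truncated } N^1)$ is forced to be surjective at $u$ by dimension count and hence has $0$-trivial (in particular $2$-trivial) cokernel, while conversely any morphism $M \to N^1$ with $2$-trivial cokernel descends to the truncation as an injection. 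Then ``$\exists$ injection into the truncated module'' $\iff$ ``$\exists f\colon M\to N^1$ with $2$-trivial cokernel'' $\iff (n,P,Q)\in CI$, giving NP-hardness. Alternatively — and perhaps more cleanly — one dualizes the \textsc{$\infty$-$0$} result: the dual of a surjection is an injection, and applying the bijection between morphisms $M^\circ \to N^\circ$ and $(N^\circ)^* \to (M^\circ)^*$ together with the duality lemma exchanging (injection, $\epsilon$-trivial cokernel) and (surjection, $\epsilon$-trivial kernel) converts the hardness of \textsc{$\infty$-$0$-trivial-morphism} into hardness of \textsc{$0$-$\infty$-trivial-morphism} immediately.

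The main obstacle I anticipate is the kernel/cokernel bookkeeping in the passage from the $2$-trivial constraint to the $0$-trivial one on the dual side: one must make sure the ``stabilization to $\F^n$'' argument of \cref{mono_lemma_for_staircase} genuinely survives the operations $(-)^\circ$ and $(-)^*$, i.e.\ that the dual modules still have a cofinal (now co-initial, after order reversal) region on which they are constantly $\F^n$ with isomorphism internal maps, and that the technical $(-\infty,-\infty)$-to-$z$ replacement preserves this. A secondary nuisance is verifying, in the \textsc{$0$-$\infty$} case, that truncating $N^1$ keeps the module finitely presented with a polynomial-size graded matrix representation and that the equivalence with the untruncated $2$-trivial-cokernel condition holds in both directions — both are routine but need a careful sentence each. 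None of this should require new ideas beyond those already assembled in this section.
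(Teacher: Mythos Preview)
Your central claim — that every surjection $g\colon ((N^1)^\circ)^* \to (M^\circ)^*$ automatically has trivial kernel — is false, and the argument you give for it does not hold. You write that ``the internal maps of $(M^\circ)^*$ being injective forces the kernel of the whole morphism to be $0$-trivial,'' but the internal maps of $(M^\circ)^*$ are \emph{not} injective: the dual intervals $(S_i^\circ)^*$ are downward-closed, so for $q\le r$ the internal map $(M^\circ)^*_q\to (M^\circ)^*_r$ is a coordinate projection (surjective, with kernel spanned by those $e_i$ for which $r\notin (S_i^\circ)^*$). This is precisely the opposite of the situation in \cref{mono_lemma_for_staircase}, whose proof uses that staircases are upward-closed. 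The rigidity does not dualize in the direction you need: the dual of ``$2$-trivial cokernel $\Rightarrow$ injective'' is ``$2$-trivial kernel $\Rightarrow$ surjective,'' not ``surjective $\Rightarrow$ trivial kernel.''

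Concretely, take $n=1$, $P=\emptyset$, $Q=\{(1,1)\}$. Then $(n,P,Q)\notin CI$ (we would need $B_{11}=0$ and $AB=1$), yet any nonzero scalar gives an injection $M\to N^1$ since $d_s(S_1,T_1)=1$. This injection has cokernel supported on $T_1^1\setminus S_1$, and because $d_s(T_1^1,S_1)=1+d_s(T_1,S_1)=4$, the cokernel is not even $2$-trivial. Dually, the corresponding surjection between the dual modules has kernel that is not $2$-trivial. The underlying reason your reduction cannot work is that the $Q$-constraints of the CI instance are simply not captured by injectivity of $M\to N^1$: an injection yields an invertible $A$ respecting $P$, but nothing forces $A^{-1}$ to respect $Q$. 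Your truncation idea for \textsc{$0$-$\infty$} fails too: if you kill $N^1$ above some point but leave $M$ unchanged, then $M_q\ne 0$ maps to $0$ for large $q$, so no injection exists regardless of the CI instance.

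The paper proceeds entirely differently: it gives a direct reduction from 3SAT, constructing staircase-decomposable modules $M=A\oplus B\oplus\bigoplus_{i,r}M_i^r$ and $N=N_1\oplus N_2$ whose corner points force any surjection $M\to N$ to be represented by a $2\times(nq+2)$ matrix of a specific shape; full-rank conditions at well-chosen corners then encode that each variable receives a truth value and each clause is satisfied. The finiteness of $\F$ (with $q$ elements) is used essentially — the paper remarks on this after the proof — which is further evidence that no reduction from \cref{cor:s-2-hardness} along your lines can succeed, since such a reduction would be field-independent. The \textsc{$0$-$\infty$} case is then obtained by dualizing \emph{this} construction.
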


\begin{proof}
We will only prove the first case; the second follows by dualizing the arguments in an appropriate way, for instance by using dual staircases as above.

Recall that we have assumed $\F$ to be finite. Let $q$ denote the number of elements in $\F$, and assume that $\phi$ is a 3CNF formula with $n$ variables $\{x_1,\dots,x_n\}$ and $m$ clauses $\{c_1,\dots,c_m\}$. We shall construct modules $M= A \oplus B \oplus (\oplus_{i=1}^{n} \oplus_{r=1}^{q} M_i^r)$ and $N=N_1 \oplus N_2$, where $M_i^r$, $A$, $B$, $N_1$ and $N_2$ are staircase modules, in such a way that there exists a surjection $M \to N$ if and only if $\phi$ is satisfiable. Importantly, we know from \cref{lem:matrix_rep_lemma} that any morphism between staircase decomposable modules can be represented by a matrix with entries in $\F$. We only stated the result in the case where each module is built from the same number of staircases, but the same argument shows that a morphism $M\to N$ in this case is described by a $2 \times (nq+2)$-matrix, which we shall assume   is ordered in the following way

\[
\begin{blockarray}{ccccccccc}
A & B & M_{1}^1 & M_1^2 & \cdots & M_2^1 & \cdots & M_{n}^q&   \\
\begin{block}{(cccccccc)c}
  \free & \free & \free & \free & \cdots &\free  & \cdots&  \free& N_1 \\
  \free & \free & \free & \free & \cdots   & \free  & \cdots&  \free&N_2 \\
\end{block}
\end{blockarray}
 \]
Furthermore, 
recall that any staircase module is defined by a set of generators, i.e. a set of incomparable points defining the ``corners'' of the staircase. It is not hard to see that a morphism $M\to N$ is surjective if and only if it is surjective at the all the corners points of $N_1$ and $N_2$. 

Let \[D=\{A,B, N_1, N_2\}\bigcup_{i,r}  \{M_i^r\}\] and let $S\subseteq \R^2$ be a set of pairwise incomparable points. Any function $G\colon S\to P(D)$, where $P(D)$ is the power set of $D$, specifies the modules in the decomposition of $M$ and $N$ by enforcing that $X\in D$ has a corner point at $s\in S$ if and only if $X\in G(s)$. In what follows we shall define such a function $G$ in four steps, and define the staircase modules in $D$ accordingly. 

Let $S =\{a,b,g_i^r,g_i^{r,s},h_j^{y,z,w}\}$ be a set of distinct incomparable points in $\R^2$, where $i,j,r,s,y,z,w$ run through indices which will be defined as we define $G$. In the initial step we define $G(a)=\{A, N_1\}$ and $G(b)=\{B, N_2\}$. The addition of these corners enforce that the matrix (in the ordering given above) must be of the form
\[
\left(
\begin{array}{ccccc}
1&0&*&\dots&*\\
0&1&*&\dots&*\\
\end{array}
\right).
\]
This can be seen as follows: since $a$ and $b$ are incomparable, and $a\in A$ while $a\notin N_2$, we must have that $N_2 \nsubseteq A$. \cref{lem:transformation_lemma} allows us to conclude that the only morphism from $A\to N_2$ is the trivial one.  Similarly we see that the morphism $B\to N_1$ must be the trivial one. Furthermore, since $M_a = A_a$ and $N_a = (N_2)_a$, surjectivity at $a$ implies that $A\to N_1$ must be nonzero, which gives the nonzero entry in the first column. We can multiply any column in the matrix with a nonzero element without changing the validity or surjectivity of the morphism, so we can assume that this element is $1$.  Similarly we get a 1 in the second row of the second column.

We proceed our inductive step by defining $G(g_i^r) = \{A,M_i^r,N_1,N_2\}$ for all  $1\leq i\leq n$ and $1 \leq r\leq q$. Restricting the matrix to the columns corresponding to $A$ and $M_i^r$ we get 
\[
\left(
\begin{array}{cc}
1&*\\
0&*\\
\end{array}
\right).
\]
For the morphism to be surjective at the point $g_i^r$, this matrix must be of full rank. Therefore we can write it as
\[
\left(
\begin{array}{cc}
1&d_i^r\\
0&1\\
\end{array}
\right),
\]
where we again have used the fact that we can scale columns by nonzero constants. In other words, any surjection $M\to N$ must be of the form
\[
\left(
\begin{array}{cccccccccccc}
1&0&d_1^1&\dots&d_1^q&d_2^1&\dots&d_2^q&\dots&d_n^1&\dots&d_n^q\\
0&1&1&\dots&1&1&\dots&1&\dots&1&\dots&1\\
\end{array}
\right).
\]
Continuing, let $G(g_i^{r,s}) = \{M_i^r,M_i^s,N_1,N_2\}$, for all $1\leq i\leq n$ and $1 \leq r<s\leq q$.
Restricting the matrix to the columns corresponding to $M_i^r$ and $M_i^s$ yields the matrix
\[
\left(
\begin{array}{cc}
d_i^r&d_i^s\\
1&1\\
\end{array}
\right).
\]
For the matrix to be surjective at $g_i^{r,s}$, also this matrix must be of full rank. In particular, it must be the case that $d_i^r \neq d_i^s$, and therefore exactly one of $d_i^1, \dots, d_i^q$ equals 0.  We will interpret $d_i^1=0$ as choosing $x_i$ to be false, and $d_i^1\neq 0$ as choosing $x_i$ to be true. 

What remains is to encode the clauses of $\phi$. For a clause $c_j$, let $x_{\alpha_{j,1}}, x_{\alpha_{j,2}}, x_{\alpha_{j,3}}$ be the variables such that either the variable itself or its negation occurs in $c_j$, with $\alpha_{j,1}<\alpha_{j,2}<\alpha_{j,3}$. For $1\leq i\leq3$, let $X_j^i =\{1\}$ if $x_{\alpha_{j,i}}$ occurs in $c_j$; if instead its negation occurs, let $X_j^i =\{2,\dots,q\}$. For example, if $c_j = x_1 \vee \neg x_2 \vee \neg x_4$, then $\alpha_{j,1}=1$, $\alpha_{j,2}=2$ and $\alpha_{j,3}=4$, and $X_j^1 =\{1\}$, $X_j^2 =\{2,\dots,q\}$ and $X_j^3 =\{2,\dots,q\}$. Define $G(h_j^{y,z,w}) = \{B,M_{\alpha_{j,1}}^r,M_{\alpha_{j,2}}^s,M_{\alpha_{j,3}}^t,N_1,N_2\}$, for all $1\leq j\leq m$ and $y\in X_j^1$, $z\in X_j^2$, $w\in X_j^3$.

This time, the following submatrix must have rank $2$ for all $h_j^{y,z,w}$ with $j,y,z,w$ as above.
\begin{equation}
\label{mat:onesided}
\left(
\begin{array}{cccc}
0&d_{\alpha_{j,1}}^y&d_{\alpha_{j,2}}^z&d_{\alpha_{j,3}}^w\\
1&1&1&1\\
\end{array}
\right)
\end{equation}
At this stage we have concluded the construction of the modules and no further restrictions will be imposed on the matrix. In particular, the above shows that there exists a surjection $M\to N$ if and only if there is an assignment $d_i^r\in \F$ such that the following is satisfied: 
\begin{itemize} 
\item $\F= \{d_i^1, \ldots, d_i^q\}$ for all $1\leq i\leq n$.
\item The matrix of \cref{mat:onesided} has full rank for every $h_j^{y,z,w}$.
\end{itemize}
We show that this is equivalent to $\phi$ being satisfiable. 

``$\Rightarrow$'': Assume that $\phi$ is satisfiable and pick a satisfying assignment. If $x_i$ is set to false, then define $d_i^1=0$. If $x_i$ is set to true, then define $d_i^2=0$. In both cases we assign the remaining variables values such that $\F= \{d_i^1, \ldots, d_i^q\}$ for all $1\leq i\leq n$. Consider the clause $c_j$ as above, and assume that $x_{\alpha_{j,l}}$ is assigned a truth value such that the literal associated to $x_{\alpha_{j,l}}$ in $c_j$ evaluates to true. Then $d_{\alpha_{j,l}}^y \neq 0$ for all $y \in X_j^l$, implying that the matrix of \cref{mat:onesided} has rank $2$ for all $h_j^{y,z,w}$. 

``$\Leftarrow$'': Assume an assignment of the variables $d_i^r$ satisfying the two bullet points above, and set $x_i$ to be false if $d_i^1=0$, and true otherwise. Consider the clause $c_j$ as above, and observe that there exists an index $y\in X_j^l$ such that $d_{\alpha_{j,l}}^y=0$ if and only if the literal in $c_j$ associated to $x_{\alpha_{j,l}}$ evaluates to false. In particular, $c_j$ evaluates to true if and only if at least one of $d_{\alpha_{j,1}}^y, d_{\alpha_{j,2}}^z$ and $d_{\alpha_{j,3}}^w$ is non-zero for every $(x,y,z)\in X_j^1\times X_j^2\times X_j^3$.  This is equivalent to the matrix of \cref{mat:onesided} having full rank for every $h_j^{y,z,w}$.

In the end, we have a reduction from 3SAT to \textsc{$\infty$-$0$-trivial-morphism}. To complete the proof, we must show that the instance of \textsc{$\infty$-$0$-trivial-morphism} can be constructed in polynomial time in the input size of the instance of 3SAT. As we have assumed $q$ to be fixed and finite, it suffices to observe that $M$ is defined by $nq+2$ staircase modules, while $N$ is a sum of  $2$ staircase modules, and that each of these are generated by at most $2+nq+n{q\choose2}+m(q-1)^3$ generators. We remark that the generators can be chosen along the antidiagonal $x=-y$ in $\R^2$.
\end{proof}

An interesting point is that the number of generators of the staircases in the proof increases with the size of $\F$. Hence, the proof strategy only applies in the setting of a finite field (with a constant number of elements). 

We conclude this section by remarking that \textsc{$0$-$\infty$-trivial-morphism} is equivalent to the problem of deciding if a module $M'$ is a \emph{submodule} of another persistence module $M$. Interestingly, it can be checked in polynomial time if $M'$ is a \emph{summand} of $M$ \cite[Theorem~3.5]{brooksbank2008testing}.

\section{A distance induced by a noise system}
\label{sec:noise}
As a last application of our methods, we show that a particular distance induced by a \emph{noise system} is NP-hard to approximate within a factor of 2. 

A noise system, as introduced by Scolamiero et al. \cite {scolamiero2017multidimensional}, induces a pseudometric on (tame) persistence modules. In this section we shall briefly consider one particular noise system and we refer the reader to \cite{scolamiero2017multidimensional} for an in-depth treatment of the more general theory.

We say that $f\colon M\to N$ is a \emph{$\mu$-equivalence} if $f$ has $\mu_1$-trivial kernel and $\mu_2$-trivial cokernel, and $\mu_1+\mu_2\leq \mu$. From this definition we can define the following distance between two persistence modules $M$ and $N$
\begin{align*}
\dnoise(M,N) = \inf\{\mu \mid \exists M \xleftarrow{f} X \xrightarrow{g} N, f \text{ an } \eps\text{-equivalence, }g \text{ a } \delta\text{-equivalence and } \eps+\delta\leq\mu \}
\end{align*}
The reader may verify that this distance coincides with the distance induced by the noise system $\{S_\epsilon\}$ where $S_\epsilon$ consists of all persistence modules $M$ with the property that $M_{p\to p+(\epsilon, \epsilon)}$ is trivial for all $p$. In particular, $\dnoise$ is indeed an extended pseudometric  \cite[Proposition 8.7]{scolamiero2017multidimensional}.

Like for the interleaving distance, we can define the computational problem of $c$-approximating $d$ for a constant $c\geq 1$.
\begin{quote}
	\textsc{$c$-Approx-$\dnoise$}: 
	Given two persistence modules $M$, $N$ in 
	graded matrix representation, return a real number $r$ such that
	\[\dnoise(M,N)\leq r \leq c\cdot\dnoise(M,N)\]
\end{quote}
\begin{theorem}
\textsc{$c$-Approx-$d$} is NP-hard for $c<2$.
\end{theorem}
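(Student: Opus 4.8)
The plan is to imitate the reductions of the previous sections and reduce from \textsc{CI}, which is NP-hard by \cref{thm:ci_np_complete}. Given a CI-instance $(n,P,Q)$, I take the staircase-decomposable modules $M,N$ produced by \cref{1-3} and form the pair $(M,N^1)$ (the unit shift is harmless: $N^1$ is again a direct sum of staircases, and its graded matrix representation is obtained by translating all grades). I claim that $\dnoise(M,N^1)\le 2$ if $(n,P,Q)\in CI$ and $\dnoise(M,N^1)\ge 4$ if $(n,P,Q)\notin CI$. Granting this, a polynomial-time $c$-approximation $\mathcal{A}$ with $c<2$ decides CI: build $(M,N^1)$, run $\mathcal{A}$, and answer ``$(n,P,Q)\in CI$'' iff the returned value $r$ satisfies $r<4$. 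This is correct because in the yes-case $r\le c\cdot 2=2c<4$ while in the no-case $r\ge\dnoise(M,N^1)\ge 4$, and the whole reduction is polynomial in $n$. (The constants $2$ and $4$ are immaterial; any two values with ratio $2$ suffice, and one may rescale all grades by a positive constant, exactly as in \cref{thm:one_sided_hardness}, to normalise them.)

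For the yes-case bound, assume $(n,P,Q)\in CI$, so $d_I(M,N)=1$ and $M,N$ are $1$-interleaved. By \cref{lem:interleaving_implies_trivialcokernel} there is a morphism $f\colon M\to N^1$ with $2$-trivial kernel and $2$-trivial cokernel, and by \cref{mono_lemma_for_staircase} such an $f$ is in fact injective, i.e. a $2$-equivalence in the sense of this section. The span $M\xleftarrow{\ \mathrm{id}\ }M\xrightarrow{\ f\ }N^1$, whose left leg is a $0$-equivalence and whose right leg is a $2$-equivalence, then certifies $\dnoise(M,N^1)\le 0+2=2$.

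The no-case bound is the technical heart, since one must rule out \emph{every} cheap span. Suppose for contradiction that $\dnoise(M,N^1)<4$, witnessed by $M\xleftarrow{\ f\ }X\xrightarrow{\ g\ }N^1$ with $f$ an $\eps$-equivalence, $g$ a $\delta$-equivalence and $\eps+\delta<4$, where $X$ is arbitrary. The strategy is to straighten the span into near-inverse morphisms between $M$ and $N^1$: by the standard reverse-morphism (pseudo-inverse) construction, an $\eps$-equivalence $f\colon X\to M$ admits $\bar f\colon M\to X^{\eps}$ with $f^{\eps}\circ\bar f=\mathrm{Sh}_M(\eps)$ and with kernel and cokernel of ``size'' at most $\eps$; composing gives $\phi:=g^{\eps}\circ\bar f\colon M\to (N^1)^{\eps}$ with $(\eps+\delta)$-trivial cokernel, and dually $\psi\colon N^1\to M^{\delta}$ with $(\eps+\delta)$-trivial kernel, these being mutual near-inverses up to shifts bounded by $\eps+\delta$. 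Passing to the matrix descriptions of \cref{lem:matrix_rep_lemma} and applying \cref{lem:transformation_lemma} with the directed staircase distances coming from the construction — which for the $M$-to-$N^1$ direction are $0$ off $P$ and $2$ on $P$, and for the $N^1$-to-$M$ direction are $2$ off $Q$ and $4$ on $Q$ — the zero-patterns forced on the associated matrices once $\eps+\delta<4$, combined with their near-invertibility, produce a satisfying assignment $(A,B)$ for $(n,P,Q)$; equivalently, one extracts a genuine morphism $M\to N^1$ with $2$-trivial cokernel, which \cref{mono_lemma_for_staircase} together with \cref{cor:reduction_trivial_interleaving} forbids when $(n,P,Q)\notin CI$. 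Either way we reach a contradiction, so $\dnoise(M,N^1)\ge 4$.

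The main obstacle is precisely this no-case step: a span may route through a completely arbitrary intermediate module and may split its budget $\eps+\delta$ arbitrarily between kernels and cokernels on its two legs, so one has to show that a total budget below $4$ cannot bridge the distance-$3$ obstructions that the CI constraints encode (which become $2$ and $4$ after the unit shift). This is the quantitative analogue for $\dnoise$ of the discussion closing \cref{sec:reduction} on why the constant $3$ there cannot be raised, and it is also why the approximation factor here is $2$ rather than $3$: the loss comes from replacing the $1:3$ interleaving gap by the $2:4$ gap produced when passing from interleavings to spans of morphisms with bounded kernel and cokernel.
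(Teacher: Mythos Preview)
Your yes-case is correct and matches the paper's argument.

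The no-case has a genuine gap. You assert that once $\eps+\delta<4$, the zero-patterns forced on the matrices of $\phi$ and $\psi$ yield a CI solution, but you do not say why. Forcing $A_{ij}=0$ for $(i,j)\in P$ requires $\eps<d_s(S_i,T_j^1)=2$; the hypothesis $\eps+\delta<4$ by itself does not give $\eps<2$. The missing step is the asymmetry observation $\delta\ge 2$: since \emph{every} directed distance $d_s(T_j^1,S_i)$ is at least $2$ (your own numbers ``$2$ off $Q$ and $4$ on $Q$''), every morphism $N^1\to M^\delta$ with $\delta<2$ is zero; but the matrix $B$ of $\psi$ must be invertible, so $\delta\ge 2$ and hence $\eps<2$. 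Only then does your matrix argument go through. For that argument you also need $A$ and $B$ to be \emph{exact} inverses, which requires both pseudo-inverse identities $f^\eps\circ\bar f=\mathrm{Sh}_M(\eps)$ \emph{and} $\bar f\circ f=\mathrm{Sh}_X(\eps)$ (you only state the first); without the second, $\psi^\eps\circ\phi$ need not equal $\mathrm{Sh}_M(\eps+\delta)$. Your alternative phrasing---``extract a morphism $M\to N^1$ with $2$-trivial cokernel''---is not justified either: $\phi$ lands in $(N^1)^\eps$ with $(\eps+\delta)$-trivial cokernel, and there is no direct way to trade the $\eps$-shift for a tighter cokernel bound.

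The paper's route is different. It first reduces to injective $f,g$ by showing $\ker f=\ker g$ (using injectivity of the internal maps of $M$ and $N^1$) and replacing $X$ by $X/\ker f$; then it applies \cref{lem:mono-1} to each leg to get $d_I(M,X^{\eps/2})\le\eps/2$ and $d_I(N^1,X^{\delta/2})\le\delta/2$, and combines these via the triangle inequality into $d_I(M,N)\le 1+\eps$. The bound $\delta\ge 2$ is proved geometrically: there is a point $q$ with $N^1_q\ne 0$ but $M_{q+(r,r)}=0$ for all $r<2$, so $X_{q+(r,r)}=0$ by injectivity of $f$, whence the cokernel of $g$ cannot be killed by any shift smaller than $2$. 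Both arguments ultimately hinge on this same asymmetry between $M$ and $N^1$; you correctly computed the distances that encode it but did not use them to constrain $\delta$.
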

\begin{proof}
Let $(n,P,Q)$ be a CI-instance and construct $M$ and $N$ as in Theorem~\ref{1-3}. We will show the following implications:
\begin{align*}
d_I(M,N) = 1 &\Rightarrow \dnoise(M,N^1) \leq 2\\
d_I(M,N) = 3  &\Rightarrow \dnoise(M,N^1) \geq 4.
\end{align*}
This allows us to conclude that an algorithm $c$-approximating $\dnoise(M,N)$ for $c<2$ will return a number $<4$ if $d_I(M,N)=1$ and a number $\geq 4$ if $d_I(M,N)=3$. This constitutes a polynomial-time reduction from CI to \textsc{$2$-Approx-$\dnoise$} and the result follows from \cref{thm:ci_np_complete}.

First assume that $d_I(M,N) = 1$. Let $X=M$ with $f:X\to M$ the identity morphism.  
\cref{lem:interleaving_implies_trivialcokernel} shows that the interleaving morphism $g\colon M\to N^1$ has $2$-trivial cokernel, and from \cref{mono_lemma_for_staircase} we know that it is injective. Hence $g$ is a $2$-equivalence and thus $\dnoise(M,N^1)\leq 2$.

Now assume that $\dnoise(M,N^1) < 4$. By definition this gives a diagram $M \xleftarrow{f} X \xrightarrow{g} N^1$ where $f$ is an $\eps$-equivalence, $g$ is a $\delta$-equivalence, and $\eps+\delta<4$. We may assume that both $f$ and $g$ are injective. To see this, consider $x\in \ker(f_p)$. Because $f$ is an $\eps$-equivalence, $\ker(f)$ is $\eps$-trivial, so $X_{p\to p+(\eps,\eps)}(x) = 0$. This gives 
$$0=g_{p+(\eps,\eps)}\circ X_{p\to p+(\eps,\eps)}(x) = N^1_{p\to p+(\eps,\eps)} \circ g_p(x).$$ 
Since $N^1_{p\to p+(\eps,\eps)}$ is injective we conclude that $g_p(x) =0$. This shows that  $\ker(f) \subseteq \ker(g)$, and by symmetry, that $\ker(f) = \ker(g)$. Replacing $X$ with $\tilde{X} = X/\ker(f)$ induces injective morphisms $M \xleftarrow{\tilde{f}} \tilde{X} \xrightarrow{\tilde{g}} N^1$ with the properties that $\tilde{f}$ is an $\epsilon$-equivalence and that $\tilde{g}$ is a $\delta$-equivalence. Hence $f$ and $g$ may be assumed to be injective. Under this assumption we get the following two inequalities from Lemma~\ref{lem:mono-1} 
\begin{align*}
d_I(M,X^{\eps/2})&\leq \eps/2\\
d_I(N^1,X^{\delta/2})&\leq \delta/2.
\end{align*}
Observe that $d_I(N^1,X^{\delta/2}) = d_I(N^{1-\delta/2},X) = d_I(N^{1+(\eps-\delta)/2},X^{\eps/2})$. Together with the first inequality this gives $d_I(M,N^{1+(\eps-\delta)/2}) \leq (\eps+\delta)/2$, and thus $$d_I(M,N) \leq d_I(M,N^{1+(\eps-\delta)/2}) + d_I(N^{1+(\eps-\delta)/2},N) \leq (\eps+\delta)/2 + (1+(\eps-\delta)/2) = 1+\eps.$$ 

To conclude the proof we will show that $\delta\geq 2$, as this implies $1+\eps < 1+4-\delta\leq 3$. Assuming that $n\geq 1$, let $p$ be such that $\dim N_p >0$ and $\dim M_{p+(r,r)} =0$ for all $r<1$. Such a point exists for the following reason: let $M_i$ be any indecomposable summand of $M$ and let $N_j$ be any indecomposable summand of $N$. Then $M_i$ is a staircase module for which the underlying staircase is obtained by moving certain corners of the staircase $S$ in \cref{fig:base_staircase}. Likewise, the staircase supporting $N_j$ is obtained by moving certain corners of $T$. However, by construction, and as shown in \cref{fig:staircase_shifted}, a number of corners are left unmoved. Hence, we may simply choose $p$ to be any corner point of $T$ with negative 1st coordinate which is left unmoved in the construction of $N_j$. 

Let $q= p-(1,1)$. Then $\dim N^1_q > 0$ and $\dim M_{q+(r,r)} = 0$ for all $r<2$. But since $f$ is an injection, the space $X_{q+(r,r)}$ must also be trivial for any $r<2$. It follows that $\delta \geq 2$. 
\end{proof}

\section{Conclusion}
\label{sec:conclusion}
Using the link between persistence modules indexed over $\R^2$ and CI problems introduced in \cite{bb-computational}, we settle the computational complexity of a series of problems. Most notably, we show that computing the interleaving distance is NP-hard, as is approximating it to any constant factor $<3$. Moreover, we investigated the problem of deciding one-sided stability. Except for checking isomorphism, which is known to be polynomial, we show that all non-trivial cases are NP-hard. This includes checking whether a module is a submodule of another. Our assumption that we are working over a finite field stays in the background for most of the paper, but we rely heavily on this assumption for proving the submodule problem. Lastly, we showed that approximating a distance $d$ arising from a noise system up to a constant less than $2$ is also NP-hard.

Throughout, we use persistence modules decomposing into very simple modules called staircase modules. These have the big advantage that the morphisms between them have very simple descriptions in terms of matrices. While this simplification might appear to throw the complexity out with the bathwater, our results clearly show that this is not the case.

The question of whether \textsc{$c$-Approx-Interleaving-Distance} is NP-hard for $c\geq 3$ is still open, and it is not clear whether one can prove this with CI problems or not. Even if this should not be possible, we believe that a better understanding of CI problems would lead to a better understanding of persistence modules and interleavings.

\paragraph{Acknowledgments.}
We thank the anonymous referees for valuable suggestions, 
including the connection to noise systems discussed in Section~\ref{sec:noise}.
Magnus Bakke Botnan has been partially supported by the DFG Collaborative Research Center SFB/TR 109 “Discretization in Geometry and Dynamics”. Michael Kerber is supported by Austrian Science Fund (FWF) grant number P 29984-N35.

\bibliography{bib}
\bibliographystyle{plain}

\end{document}